\def\old#1{}
\newcommand{\be}{\begin{equation}}
\newcommand{\ee}{\end{equation}}
\newcommand{\beps}{\varepsilon}
\newtheorem{theorem}{Theorem}[section]
\newtheorem{lemma}[theorem]{Lemma}
\newtheorem{corollary}[theorem]{Corollary}
\newtheorem{proposition}[theorem]{Proposition}
\newtheorem{definition}[theorem]{Definition}
\newtheorem{asmption}[theorem]{Assumption}
\newcommand{\cS}{\mathcal{S}}
\newcommand{\bsig}{\boldsymbol{\sigma}}
\newcommand{\blambda}{\boldsymbol{\lambda}}
\newcommand{\bLambda}{\boldsymbol{\Lambda}}
\newcommand{\bpi}{\boldsymbol{\pi}}
\newcommand{\bdel}{\boldsymbol{\delta}}
\newcommand{\Rp}{\mathbb{R}_{+}}
\newcommand{\Zp}{\mathbb{Z}_{+}}
\newcommand{\E}{\mathbb{E}}
\newcommand{\bQ}{\mathbf{Q}}
\newcommand{\bx}{\mathbf{x}}
\newcommand{\by}{\mathbf{y}}
\newcommand{\bone}{\mathbf{1}}
\providecommand{\boldsymbol}[1]{\mbox{\boldmath $#1$}}
\begin{document}

\title{Qualitative Properties of {\huge $\alpha$}-Weighted Scheduling Policies}
%
%
%
%
%

\numberofauthors{3} 
%
\author{
%
%
\alignauthor Devavrat Shah\\
       \affaddr{MIT, LIDS}\\
       \affaddr{Cambridge, MA 02139}\\
       \email{devavrat@mit.edu}\\
\alignauthor John N. Tsitsiklis\\
       \affaddr{MIT, LIDS}\\
       \affaddr{Cambridge, MA 02139}\\
       \email{jnt@mit.edu}\\
\alignauthor Yuan Zhong\\
       \affaddr{MIT, LIDS}\\
       \affaddr{Cambridge, MA 02139}\\
       \email{zhyu4118@mit.edu}\\
}

\maketitle
\begin{abstract}
We consider a switched network, a fairly general constrained  queueing network
model that has been used successfully to
model the detailed packet-level dynamics in communication networks, such as
input-queued switches and wireless networks. The main operational issue
in this model is that of deciding which queues to serve, subject to  certain constraints.
In this paper, we study qualitative performance properties of the well
known $\alpha$-weighted scheduling policies. The stability, in the  sense of positive
recurrence, of these policies has been well understood. We establish
exponential upper bounds on the tail of the steady-state distribution  of the backlog. 
Along the way, we prove finiteness of the
expected steady-state backlog when $\alpha<1$, a property that was  known only for
$\alpha\geq 1$.
Finally, we analyze the excursions of the maximum backlog over a finite
time horizon for $\alpha \geq 1$. As a consequence, for $\alpha \geq 1$,
we establish the full state space collapse property \cite{SW06, SW09}.
\end{abstract}

\section{Introduction}\label{sec:intro}

This paper studies various qualitative stability and performance properties
of the so-called $\alpha$-weighted policies, as applied to a switched  network model
(cf.\cite{TE92, SW09}).
This model is a special case of the ``stochastic processing network  model" (cf.\cite{Harrison}),
which has become the canonical framework for the study of
a large class of networked queueing systems, including systems arising
in communications, manufacturing, transportation, financial markets,
etc. 
The primary reason for the popularity of the switched network model
is its ability to faithfully model the behavior of a
broad spectrum of networks at a fine granularity. Specifically,
the switched network model is useful in describing packet-level  (``micro'')
behavior of medium access in a wireless network and of the input-queued switches that
reside inside Internet routers.
This model has proved tractable enough to allow for
substantial progress in understanding the stability and performance  properties
of various control policies.


At a high level, the switched network model involves a collection of queues. Work arrives to these queues exogenously
or from another queue and gets serviced; it then either leaves the  network
or gets re-routed to another queue. Service at the queues requires the  use of
some commonly shared constrained resources.
This leads to the problem of
{\em scheduling} the service of packets queued in the switched network.
To utilize the network resources efficiently, a properly designed  scheduling
policy is required.
Of particular interest are the popular Maximum Weight or
MW-$\alpha$ policies, introduced in \cite{TE92}.
They are the only known simple and universally applicable policies
with performance guarantees. In addition, the
MW-$\alpha$ policy has served as an important guide for designing
implementable algorithms for input-queued  switches and wireless
medium access (cf.\cite{islip,T98,GPS02,GNT06,RSS09}).
This motivates the work in this paper, which focuses on certain
qualitative properties of
MW-$\alpha$ policies.

\paragraph{Related Prior Work} Because of the significance of the
$\alpha$-weighted policies,
there is a large body of research on their properties. We provide here
 a brief overview of the work that is most relevant to our purposes.

The most basic performance question concerns 
throughput and stability. 
Formally, we say that an algorithm is \emph{throughput optimal} or  \emph{stable} if
the underlying network Markov chain is positive
recurrent whenever the system is {\em underloaded}.
For the MW-$\alpha$ policy, under a general enough stochastic
model, stability has been established 
for any $\alpha > 0$ (cf.\ \cite{TE92,MAW96,DP00,AKRSVW04}).


A second, finer, performance question
concerns the evaluation
of the average backlog in the system, in steady-state.
Bounds on the average backlog are usually obtained by considering the  same stochastic
Lyapunov function that was used to prove stability,
and by building on the drift inequalities established in the course of  the stability
proof; see, e.g., \cite{DM95}. Using this approach, it  is known that the average
expected backlog under $\alpha$-weighted  policies is finite, when $\alpha\geq 1$ (\cite{K-M}).
However, such a result is not
known when $\alpha \in (0,1)$.

An important performance analysis method that
has emerged over the past few decades focuses on the {\em heavy  traffic} regime, in
which the
system is
loaded near capacity. For the switched model,
heavy traffic analysis has revealed some
intriguing relations between the policy parameter $\alpha$
and the performance of the system through a phenomenon known as {\em  state
space collapse}. In particular, in the heavy traffic limit and for an  appropriately
scaled version of the system, the state evolves in a much lower-dimensional space
(the state space ``collapses''). The structure of the collapsed state
space provides important information about the system
behavior (cf.\ \cite{KW04,SW06,SW09}). Under certain
somewhat specific assumptions, a complete heavy traffic analysis
of the switched network model
has been
carried out in \cite{S04, DL08}.
However, for the more general switched network model, only a weaker  result
is available, involving a so-called multiplicative state space
collapse property \cite{SW09,SW06}. 
State space collapse results are related to
understanding certain transient properties of the
network, such as the evolution of the queues over a finite time horizon.
To the best of our knowledge, a transient analysis of the switched  network
model is not available.

A somewhat different approach focuses on tail probabilities of the
steady-state backlog and the associated large deviation principle  (LDP). This approach
provides important
insights  about the overflow probability in the presence of finite   buffers.
There have been notable works in this direction, for specific  instances of the
switched network model, e.g., \cite{S-LDP08}. In a similar setting,  the reference
 \cite{KJS08}
has also established a LDP for the MW-1 policy, using  Garcia's
extended-contraction principle for quasi-continuous mappings.
More recently, \cite{VL07, VL09} has announced a characterization of  the precise tail
behavior of the $(1+\alpha)$-norm
of the backlog, under the MW-$\alpha$ policy.  However, in these works, the LDP exponent is only given  implicitly, as the solution of a
complicated, possibly infinite dimensional optimization problem.



\paragraph{Our Contributions}

We establish various qualitative performance bounds
for $\alpha$-weighted policies, under the switched network model.
In the stationary regime, we establish finiteness of the expected backlog, and an
exponential upper bound on the steady-state tail probabilities of the  backlog. In the
transient regime, we establish a maximal inequality
 on the queue-size process, and the strong state space
collapse property under $\alpha$-weighted
policies, when $\alpha\geq 1$. Our analysis is based on drift  inequalities on
suitable Lyapunov functions. Our methods, however,  depart from prior work because they
rely on different classes of  Lyapunov functions, and also involve some new techniques.

In more detail, we begin by establishing the finiteness of the steady-state expected
backlog under the MW-$\alpha$
policy, for any $\alpha \in (0,1)$. Instead of the traditional  Lyapunov function
$\|\cdot\|_{\alpha+1}^{\alpha+1}$, we rely on a  Lyapunov function which is a suitably
smoothed version of $\|\cdot\| _{\alpha+1}^2$.

We continue by deriving a drift inequality for a ``norm'' or  ``norm-like''
Lyapunov function, namely, $\|\cdot\|_{\alpha+1}$  or a suitably  smoothed version.
Using the drift inequality,
we establish exponential
tail bounds for the steady-state backlog distribution under the MW-$ \alpha$
policy, for  any $\alpha \in (0,\infty)$.
Our method builds on certain results from \cite{BGT01} that allow  us to translate
drift inequalities into closed-form tail bounds; it  yields an explicit bound on the
tail exponent, in terms
of the system load and the total number of queues. This
is in contrast with the earlier work in \cite{S-LDP08,VL09}. That work  provides an exact
but implicit characterization of the tail exponents,
in terms of a complicated optimization problem, and provides
no immediate insights on the dependence of the tail exponents on the system  parameters,
such as the load and the number of queues. Furthermore, in contrast to  the
sophisticated mathematical techniques used in \cite{S-LDP08,VL09},
our explicit bounds are obtained
through elementary methods. For some additional perspective, we  also consider a
special case and compare our upper bound with  available lower bounds.

Finally, we provide a transient analysis under MW-$\alpha$ policies,  for the case where
$\alpha\geq 1$.
We use a Lyapunov drift inequality to obtain a bound on the  probability that the maximal
backlog over a given finite time interval  exceeds
a certain threshold.
This bound leads to the resolution
of the strong state space collapse conjecture
for the switched network model when $\alpha \geq 1$. This strengthens  the multiplicative
state space collapse results in \cite{SW06,SW09}.

\paragraph{Organization of the Paper}

The rest of the paper is organized as follows. In Section 2, we define the notation we
will employ, and describe the switched network model.  In Section 3, we provide formal
statements of our main results. In  Section 4, we establish a drift inequality for a
suitable Lyapunov  function, which will be key to the proof of the exponential upper 
bound on tail probabilities. In Section 5, we prove the finiteness of  steady-state
expected backlog when $\alpha \in (0,1)$. We prove the exponential upper bound in
Section 6. For a special instance, we compare this upper bound with available lower bounds in the Appendix.
The transient analysis is presented in Section 7. We start with a general
lemma, and specialize  it to obtain a maximal inequality under the MW-$\alpha$ policy,
for $ \alpha\geq 1$. We then apply the latter inequality to prove the full  state space
collapse result for $\alpha\geq 1$. We conclude the paper  with a brief discussion in
Section 8.  

\section{Model and Notation}
\subsection{Notation}
We introduce here the notation
that will be employed throughout the paper.
We denote the real vector space of dimension $M$ by $\mathbb{R}^M$ and  the set of
nonnegative $M$-tuples by $\mathbb{R}_{+}^M$. We write $\mathbb{R}$  for $\mathbb{R}^1$,
and $\mathbb{R}_{+}$ for $\mathbb{R}_+^1$.
We let $\mathbb{Z}$ be the set of integers,
$\mathbb{Z}_+$ the set of nonnegative integers, and $\mathbb{N}$ the  set of positive
integers.

For any vector $\mathbf{x}\in\mathbb{R}^M$, and any $\alpha>0$,  we define
$$\| \mathbf{x}\|_{\alpha}= \left( \sum_{i=1}^M |x_i|^{\alpha}
\right)^{1/\alpha}.$$
For any two vectors $\mathbf{x}=(x_i)_{i=1}^M$
and $\mathbf{y}=(y_i)_{i=1}^M$ of the same dimension, we let $ \mathbf{x}\cdot\mathbf{y}=
\sum_{i=1}^M x_i y_i$ be the dot product of  $\mathbf{x}$ and $\mathbf{y}$. For two real
numbers $x$ and $y$, we let
$x\vee y = \max\{x,y\}$. We also let $[x]^+ = x\vee 0$.
 We introduce the Kronecker delta symbol
$\delta_{ij}$, defined as $\delta_{ij}=1$ if $i=j$, and $ \delta_{ij}=0$ if $i\neq j$.
We let $\mathbf{e}_i = (\delta_{ij})_{j=1}^M$ be the $i$-th unit  vector in
$\mathbb{R}^M$, and $\bone$ the vector of all ones. For a set $S$, we  denote its
cardinality by $|S|$, and its indicator function by $ \mathbb{I}_S$. For a matrix $A$, we
let $A^T$ denote its transpose.
We will also use the abbreviations ``RHS/LHS'' for ``right/left-hand  side,'' and ``iff''
for ``if and only if.''
\subsection{Switched Network Model}\label{ssec:switch}
\paragraph{The Model}
We adopt the model in \cite{SW09}, while restricting to the case of  {\em single-hop}
networks, for ease of exposition. However, our results naturally  extend to
{\em multi-hop} models, under the ``back-pressure'' variant of the MW-$ \alpha$ policy.

Consider a collection of
$M$ queues. Let time be discrete: timeslot $\tau \in \{0,1,\ldots \}$
runs from time $\tau$ to $\tau+1$. Let $Q_i(\tau)$  denote the  (nonnegative integer)
length of queue $i\in \{1,2,\ldots,M \}$ at the  beginning of timeslot $\tau$, and
let $\mathbf{Q}(\tau)$ be the vector $(Q_i(\tau))_{i=1}^{M}$. Let
$\mathbf{Q}(0)$ be the vector of initial queue lengths.

During each timeslot $\tau$, the queue vector $\mathbf{Q}(\tau)$
is offered service described by a vector $\boldsymbol{\sigma} (\tau)=(\sigma_i
(\tau))_{i=1}^{M}$ drawn from a given
finite set $\mathcal{S} \subset \{0,1\}^M$ of \emph{feasible schedules}.
Each queue
$i\in \{1,2,\ldots,M\}$ has a dedicated exogenous arrival process
$(A_i(\tau))_{\tau\geq 0}$, where $A_i(\tau)$ denotes the number of packets 
that arrive to queue $i$ up to the beginning of timeslot $\tau$,
and $A_i(0)=0$ for all $i$. We also let $a_i(\tau)=A_i(\tau+1)- A_i(\tau)$, which is
the number of packets 
that arrive to queue $i$ during timeslot
$\tau$. For simplicity, we assume  that the $a_i( \cdot)$
are independent Bernoulli processes with parameter $\lambda_i$.
We call $\boldsymbol{\lambda} = (\lambda_i)_{i=1}^M$
the \emph{arrival rate vector}.

Given the service schedule $\boldsymbol{\sigma}(\tau) \in \mathcal{S}$  chosen
at timeslot $\tau$, the queues evolve according to the relation \begin{displaymath}
Q_i(\tau+1)=\big [Q_i(\tau)-\sigma_i(\tau)\big ]^{+} + a_i(\tau).
\end{displaymath}
In order to avoid trivialities, we assume,  throughout the paper, the  following.
\begin{asmption}\label{as1}
For every queue $i$, there exists a $\boldsymbol{\sigma}\in\mathcal{S} $ such that
$\sigma_i = 1$.
\end{asmption}
\paragraph{An example: Input Queued (IQ) Switches} The switched network
model captures important instances of communication network scenarios
(see \cite{SW09} for various examples). Specifically, it faithfully
models the  packet-level operation of an input-queued (IQ) switch
inside an Internet router. For an $m$-port IQ switch, it
has $m$ input and $m$ output ports. It has a separate queue for each
input-output pair $(i,j)$, denoted by  $Q_{ij}$,\footnote{Here
we deviate from our convention of
indexing queues by a single subscript. This will ease exposition in
the context of IQ switches, without causing confusion.} for a  total of
$M = m^2$ queues. A schedule is required to match each input to  exactly one
output, and each output to exactly one input. Therefore,
the set of schedules $\cS$ is
$$\left\{ \bsig = (\sigma_{ij}) \in \{0,1\}^{m \times m} : \
\sum_{k=1}^m \sigma_{ik} = \sum_{k=1}^m \sigma_{kj} = 1, ~\forall  ~i, j \right\}. $$
We assume that 
the arrival process at each queue $Q_{ij}$ is an independent
Bernoulli process with mean $\lambda_{ij}$.

\paragraph{Capacity Region} We define the capacity region $\bLambda$
of a switched network as
$$\left\{ \blambda \in \Rp^{M} :
\blambda \leq \sum_{\bsig \in \cS} \alpha_{\bsig} \bsig,
 \alpha_{\bsig} \geq 0, ~\forall~\bsig \in \cS,
\sum_{\bsig \in \cS} \alpha_{\bsig} < 1\right\}.$$
It is called the capacity region because there exists a policy for which
the Markov chain describing the network is positive recurrent iff $\blambda \in  \bLambda$. We define
the {\em load}
induced by $\blambda \in \bLambda$, denoted by $\rho(\blambda)$, as
$$ \rho(\blambda) ~=~ \inf \left\{\sum_{\bsig \in \cS} \alpha_{\bsig}  ~:~
\blambda \leq \sum_{\bsig \in \cS} \alpha_{\bsig} \bsig, \quad  \alpha_{\bsig} \geq 0,
~\forall~\bsig \in \cS \right\}. $$
Note that $\rho(\blambda)<1$, for all $\blambda \in \bLambda$.

\paragraph{The Maximum-Weight-{\large $\alpha$} Policy} We now describe the so-called \emph{Maximum-Weight-$\alpha$} (MW-$\alpha$) policy. For $\alpha > 0$, we use $\mathbf{Q} (\tau)^{\alpha}$ to denote the
vector
$(Q_i^{\alpha}(\tau))_{i=1}^{M}$. We define the \emph{weight} of  schedule
$\boldsymbol{\sigma} \in \mathcal{S}$ to
be $\boldsymbol{\sigma \cdot Q}(\tau)^{\alpha}$. The MW-$\alpha$  policy chooses, at each
timeslot $\tau$, a schedule
with the largest weight (breaking ties arbitrarily). Formally,
during timeslot $\tau$, the policy chooses a schedule $ \boldsymbol{\sigma}(\tau)$
that
satisfies
\begin{displaymath}
\boldsymbol{\sigma}(\tau)\cdot \boldsymbol{Q}(\tau)^{\alpha} = 
\max_{\boldsymbol{{\sigma}}\in \mathcal{S}}\boldsymbol{{\sigma}\cdot Q} (\tau)^{\alpha}.
\end{displaymath}
We define the maximum $\alpha$-weight of the queue length vector $ \mathbf{Q}$ by
$w_{\alpha}(\mathbf{Q})=\max_{\boldsymbol{\sigma} \in
\mathcal{S}}\boldsymbol{\sigma\cdot Q}^{\alpha}$. When $ \alpha=1$, the policy is
simply called the MW policy, and we use the  notation $w(\mathbf{Q})$ instead of
$w_1(\mathbf{Q})$. We take note of
the fact that under the MW-$\alpha$ policy, the resulting Markov chain  is known to be
positive recurrent,
for any $\blambda \in \bLambda$ (cf. \cite{MAW96}).

\section{Summary of Results}
In this section,  we summarize our main results 
for both the steady-state and the
transient regime. The proofs are given in subsequent sections.

\subsection{Stationary regime}

The Markov chain $\mathbf{Q}(\cdot)$ that describes a
switched network operating under the MW-$\alpha$ policy is known to be
positive recurrent, as long as the system is underloaded, i.e., if
$\blambda \in \bLambda$ or, equivalently, $\rho(\blambda)<1$.
It is not hard to verify that this
Markov chain is irreducible and aperiodic. Therefore, there exists a  unique stationary
distribution, which we will denote by $\bpi$. We use $\E_{\bpi}$ and $\mathbb{P}_{\bpi}$ 
to denote expectations and probabilities under $\bpi$.

\paragraph{Finiteness of Expected Queue-Size} We establish that under  the MW-$\alpha$
policy, the steady-state expected queue-size
is finite, for any
$\alpha \in (0,1)$. (Recall that this result is already known  when  $\alpha\geq
1$.)
\begin{theorem}\label{THM:SW1}
Consider a switched network operating under the MW-$\alpha$ policy
with $\alpha \in (0,1)$, and assume that $\rho(\blambda)<1$. 
Then, the steady-state expected  queue-size
is finite, i.e.,
$$ \E_{\bpi}\left[\|\bQ\|_1\right] < \infty.$$
\end{theorem}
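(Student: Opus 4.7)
The plan is to construct a Lyapunov function $\tilde V$ whose one-step drift under MW-$\alpha$ satisfies
\[
\E\bigl[\tilde V(\bQ(\tau+1))-\tilde V(\bQ(\tau))\mid \bQ(\tau)=\bQ\bigr] \;\leq\; -c\,\|\bQ\|_1 + C
\]
uniformly in $\bQ$, with constants $c,C>0$. Combined with the positive recurrence stated in Section~\ref{ssec:switch}, a standard Foster--Lyapunov argument then yields $\E_\bpi[\|\bQ\|_1]\leq C/c$. The usual choice $V(\bQ)=\sum_i Q_i^{\alpha+1}/(\alpha+1)$ falls short: its drift is of order $-w_\alpha(\bQ)\asymp -\|\bQ\|_\infty^\alpha$, which when $\alpha<1$ only controls $\E_\bpi[\|\bQ\|_1^\alpha]$, a strictly weaker quantity. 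I would instead use a Lyapunov whose gradient grows faster.

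The natural target is $V(\bQ)=\|\bQ\|_{\alpha+1}^{2}$. Its partial derivatives are $\partial_i V = 2\|\bQ\|_{\alpha+1}^{1-\alpha} Q_i^\alpha$, so using the MW-$\alpha$ optimality $\sum_i\sigma_i^{*} Q_i^\alpha = w_\alpha(\bQ)$ together with the capacity inequality $\sum_i \lambda_i Q_i^\alpha\leq \rho\,w_\alpha(\bQ)$, the heuristic first-order drift becomes
\[
 2\|\bQ\|_{\alpha+1}^{1-\alpha}\sum_i(\lambda_i-\sigma_i^{*})Q_i^\alpha \;\leq\; -2(1-\rho)\,\|\bQ\|_{\alpha+1}^{1-\alpha}\,w_\alpha(\bQ).
\]
Using Assumption~\ref{as1}, $w_\alpha(\bQ)\geq \|\bQ\|_\infty^\alpha$, and clearly $\|\bQ\|_{\alpha+1}\geq \|\bQ\|_\infty$, so the bound is at most $-2(1-\rho)\|\bQ\|_\infty\leq -\tfrac{2(1-\rho)}{M}\|\bQ\|_1$, producing exactly the coercivity in $\|\bQ\|_1$ that Theorem~\ref{THM:SW1} requires.

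The catch is that $V$ is not twice differentiable on coordinate hyperplanes: the Hessian entry $\partial_{ii}V$ contains a term proportional to $\|\bQ\|_{\alpha+1}^{1-\alpha}Q_i^{\alpha-1}$, which blows up at $Q_i=0$ when $\alpha<1$ and prevents a valid Taylor expansion. To bypass this I would work with the smoothed version
\[
 \tilde V(\bQ)\;=\;\Big(\sum_{i=1}^M(Q_i+1)^{\alpha+1}\Big)^{2/(\alpha+1)},
\]
which is $C^2$ on $\R_+^M$ and agrees with $\|\bQ\|_{\alpha+1}^{2}$ up to $O(\|\bQ\|_{\alpha+1})$. Writing $\tilde\phi=(\sum_j (Q_j+1)^{\alpha+1})^{1/(\alpha+1)}$, a direct computation gives $\partial_{ii}\tilde V = 2(1-\alpha)\tilde\phi^{-2\alpha}(Q_i+1)^{2\alpha} + 2\alpha\tilde\phi^{1-\alpha}(Q_i+1)^{\alpha-1}$ and $\partial_{ij}\tilde V = 2(1-\alpha)\tilde\phi^{-2\alpha}(Q_i+1)^\alpha(Q_j+1)^\alpha$ for $i\neq j$; since $(Q_i+1)^{\alpha-1}\leq 1$ and $(Q_i+1)/\tilde\phi\leq 1$, every entry is $O(1+\tilde\phi^{1-\alpha})$. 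Combined with $|Q_i(\tau+1)-Q_i(\tau)|\leq 1$, Taylor's theorem bounds the second-order remainder by $O(1+\|\bQ\|_{\alpha+1}^{1-\alpha})$. The corrections from using $(Q_i+1)^\alpha$ rather than $Q_i^\alpha$ in the gradient direction (bounded via $(Q_i+1)^\alpha-Q_i^\alpha\leq 1$) and from the $[\,\cdot\,]^+$ operator (nonzero only when $Q_i=0$, in which case $(Q_i+1)^\alpha=1$) contribute terms of the same $O(\|\bQ\|_{\alpha+1}^{1-\alpha})$ order. Since $\alpha>0$, all of these error terms are $o(\|\bQ\|_1)$ and are strictly dominated by the first-order $\Theta(\|\bQ\|_1)$ contribution once $\|\bQ\|_1$ is sufficiently large; absorbing the bounded-set behavior into the constant $C$ gives the uniform inequality.

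The main technical obstacle is the Hessian bookkeeping: one has to spot the cancellation that leaves the factor $(1-\alpha)$ in the off-diagonal entries of $\nabla^2\tilde V$, and to exploit that smoothing replaces the singular $Q_i^{\alpha-1}$ by the uniformly bounded $(Q_i+1)^{\alpha-1}$, in order to conclude $|\partial_{ij}\tilde V|=O(1+\tilde\phi^{1-\alpha})$ rather than anything worse. Once that estimate is in place, the combination of Taylor's theorem, the MW-$\alpha$ optimality, and Assumption~\ref{as1} is routine, and Foster--Lyapunov then yields $\E_\bpi[\|\bQ\|_1]<\infty$.
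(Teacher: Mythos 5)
Your proof is correct and takes essentially the same route as the paper: smooth $\|\cdot\|_{\alpha+1}^{2}$ so that it becomes $C^2$ on all of $\R_+^M$, apply the second-order mean value theorem, observe that the first-order term is $-\Theta((1-\rho)\|\bQ\|)$ while the Hessian remainder is $O(\|\bQ\|^{1-\alpha})$, and feed the resulting drift inequality into the standard Foster--Lyapunov moment bound. The only difference is cosmetic: you smooth by shifting each coordinate, $\tilde\phi = (\sum_i (Q_i+1)^{\alpha+1})^{1/(\alpha+1)}$, whereas the paper replaces the integrand $r^\alpha$ by a cubic patch $f_\alpha$ on $[0,1]$ and sets $L_\alpha = ((\alpha+1)\sum_i F_\alpha(Q_i))^{1/(\alpha+1)}$; your shift is arguably simpler, both yield the same rank-one-plus-diagonal Hessian structure with the $(1-\alpha)$ factor you identified and the same drift bound $-A\|\bQ\| + C_1\|\bQ\|^{1-\alpha} + K$.
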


\paragraph{Exponential Bound on Tail Probabilities} For the MW-$\alpha $ policy, and for
any $\alpha \in (0,\infty)$, we obtain
an explicit exponential upper bound on the tail probabilities   of the queue-size,
in steady-state.
Our result involves two constants defined by
$$\bar{\nu} = \mathbb{E}\big [ \|\mathbf{a}(1)\|_{\alpha+1}\big ],
\qquad\qquad \gamma = \frac{1-\rho}{2M^{\frac{\alpha}{\alpha+1}}},$$
where $\rho=\rho(\blambda)$.
\begin{theorem}\label{thm:sw2}
Consider a switched network operating under the MW-$\alpha$ policy,
and assume that $\rho=\rho(\blambda)<1$.  
There exist positive constants $B$ and $B'$ such that for all $\ell \in \Zp$:
\begin{itemize}
\item[(a)] if $\alpha\geq 1$, then
\begin{displaymath}
\mathbb{P}_{\bpi}\left( \|\mathbf{Q}(\tau)\|_{\alpha+1}> B + 2 
M^{\frac{1}{\alpha+1}}\ell\right)\leq \left(\frac{\bar{\nu}}{\bar{\nu}+
\gamma}\right)^{\ell+1};
\end{displaymath}

\item[(b)]
if $\alpha \in (0,1)$, then
\begin{displaymath}
\mathbb{P}_{\bpi}\left( \|\mathbf{Q}(\tau)\|_{\alpha+1}> B' + 10 
M^{\frac{1}{\alpha+1}}\ell\right)\leq \left(\frac{5 \bar{\nu}} {5\bar{\nu}
+\gamma}\right)^{\ell+1}.
\end{displaymath}
\end{itemize}
\end{theorem}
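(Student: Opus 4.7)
The plan is to introduce a norm-type Lyapunov function $V(\bQ) = \|\bQ\|_{\alpha+1}$ (for $\alpha\geq 1$) or a suitably smoothed variant $\tilde V$ (for $\alpha\in(0,1)$), to establish (i) a uniform negative drift $\mathbb{E}[V(\bQ(\tau+1))-V(\bQ(\tau))\mid \bQ(\tau)]\le -\gamma$ whenever $V(\bQ(\tau))$ exceeds a threshold $B_0$, and (ii) an almost-sure bound on the positive one-step increments of $V$ with controlled expectation $\bar\nu$. Given both ingredients, the Foster--Lyapunov-to-tail conversion from \cite{BGT01} (flagged in the introduction) immediately delivers an exponential tail bound of the form stated in the theorem.

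For part (a), $V$ is smooth on $\mathbb{R}_+^M\setminus\{0\}$ with gradient $\nabla V(\bQ)=\bQ^\alpha/V(\bQ)^\alpha$. A first-order Taylor expansion of $V$, whose second-order remainder inherits the $1/V$ scaling of the Hessian of a $1$-homogeneous function, yields
\[\mathbb{E}\bigl[V(\bQ(\tau+1))-V(\bQ(\tau))\mid\bQ(\tau)\bigr] \;\le\; \frac{\bQ(\tau)^\alpha\cdot(\blambda-\bsig(\tau))}{V(\bQ(\tau))^\alpha}+O\!\left(\tfrac{1}{V(\bQ(\tau))}\right).\]
The MW-$\alpha$ choice gives $\bQ^\alpha\cdot\bsig(\tau)=w_\alpha(\bQ)$, while $\blambda$ lying in the $\rho$-scaled convex hull of $\mathcal{S}$ gives $\bQ^\alpha\cdot\blambda\le \rho\, w_\alpha(\bQ)$. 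Combining this with the lower bound $w_\alpha(\bQ)\ge\|\bQ\|_\infty^\alpha\ge V(\bQ)^\alpha/M^{\alpha/(\alpha+1)}$ (from Assumption \ref{as1} and the standard norm comparison $\|\bQ\|_\infty\ge \|\bQ\|_{\alpha+1}/M^{1/(\alpha+1)}$), the leading term is at most $-(1-\rho)/M^{\alpha/(\alpha+1)}=-2\gamma$, which dominates the curvature correction for $V(\bQ)\ge B_0$. The one-step jump bound follows from the triangle inequality together with monotonicity of the service: $V(\bQ(\tau+1))-V(\bQ(\tau))\le\|\mathbf{a}(\tau)\|_{\alpha+1}$, which is almost surely $\le M^{1/(\alpha+1)}$ and of mean $\bar\nu$. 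Feeding these into the geometric-tail lemma of \cite{BGT01} produces the ratio $\bar\nu/(\bar\nu+\gamma)$ and the step size $2M^{1/(\alpha+1)}$ (the extra factor of $2$ absorbing one jump across the threshold).

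Part (b) is the main obstacle. Although $\|\cdot\|_{\alpha+1}$ is still a genuine norm when $\alpha\in(0,1)$, the scalar function $x\mapsto x^{\alpha+1}$ fails to be $C^2$ at the origin, so the $O(1/V)$ control of the Taylor remainder breaks down uniformly when some coordinates are small. The remedy is to work with a smoothed Lyapunov function $\tilde V(\bQ)$ built by replacing $x^{\alpha+1}$ with a $C^2$ surrogate that coincides with $x^{\alpha+1}$ on $[c,\infty)$ (for some small $c$) and is quadratic on $[0,c]$; the resulting $\tilde V$ is sandwiched between $V$ and $V+c'$ for a universal $c'$, so tail bounds on $\tilde V$ transfer to tail bounds on $V$. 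Re-running the drift calculation with $\tilde V$ succeeds at the price of slightly degraded constants, which is exactly what produces the factors $5\bar\nu$ and $10 M^{1/(\alpha+1)}$ in the statement; the jump bound is essentially unchanged since the smoothed function remains $1$-Lipschitz in the $\ell_{\alpha+1}$ sense.

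The most delicate step will be calibrating this smoothing: the surrogate must be chosen to simultaneously (i) preserve the negative-drift estimate implied by the MW-$\alpha$ structure, (ii) maintain a uniformly bounded second-order remainder everywhere on $\mathbb{R}_+^M$, and (iii) not inflate one-step jumps of $\tilde V$ beyond $\|\mathbf{a}(\tau)\|_{\alpha+1}$ by more than a small multiplicative factor. Once such a smoothing is in hand, the remaining work --- drift chasing and the translation from drift to geometric tail via \cite{BGT01} --- proceeds along the same lines as in part (a).
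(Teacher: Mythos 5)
Your overall plan — using the norm $\|\cdot\|_{\alpha+1}$ (or a $C^2$-smoothed version) as the Lyapunov function, deriving a uniform negative drift via a second-order Taylor expansion, and converting to a geometric tail via the Bertsimas--Gamarnik--Tsitsiklis argument refined to track the expected positive increment $\bar\nu$ — is exactly the paper's strategy; for $\alpha\geq 1$ your estimates match Theorem~\ref{thm:sw-drift} and Proposition~\ref{prop:sw-exp}. Two genuine gaps remain in part~(b).

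First, the quadratic smoothing you propose fails. If $p(x)=ax^2+bx+d$ on $[0,c]$ is to join $x^{\alpha+1}$ with $C^2$ matching at $x=c$, the three matching conditions determine the coefficients uniquely and force
\[
d \;=\; \tfrac{1}{2}\,\alpha(\alpha-1)\,c^{\alpha+1}\;<\;0
\quad\text{for every }\alpha\in(0,1),\ c>0;
\]
the surrogate is therefore negative on a neighborhood of the origin, so $\sum_i \tilde f(Q_i)$ (and hence $\tilde V$) is not well defined for small $\bQ$. The paper (Definition~\ref{df:sw-lyap}) sidesteps this by smoothing the \emph{derivative} $r\mapsto r^\alpha$ with the cubic $(\alpha-1)r^3+(1-\alpha)r^2+r = r\bigl[(1-\alpha)r(1-r)+1\bigr]$, which is visibly nonnegative and increasing on $[0,1]$, and then integrating; this is what delivers the nonnegativity, monotonicity, bounded first derivative, and the two-sided sandwich of Lemma~\ref{lm:prpt}(iii') that the later estimates rest on. A quadratic replacement of $x^{\alpha+1}$ has no spare degrees of freedom once $C^2$ matching is imposed, so it cannot simultaneously be nonnegative. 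Second, the conversion in \cite{BGT01} from a drift inequality to a geometric tail is not a purely local statement: it presupposes $\E_{\bpi}\bigl[V(\bQ)\bigr]<\infty$ (see the remark in the proof of Lemma~\ref{lem:sw-exp}). For $\alpha\geq 1$ this steady-state moment bound is classical, but for $\alpha\in(0,1)$ it is precisely the content of Theorem~\ref{THM:SW1}, which the paper proves first and independently by switching to the squared Lyapunov function $\Phi=L_\alpha^2$, specifically to break the circularity with Theorem~\ref{thm:sw2}. Your proposal never supplies this prerequisite, so the argument for part~(b) is incomplete even granting a correct smoothing.
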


Note that Theorem \ref{THM:SW1} could be obtained as a simple  corollary of Theorem
\ref{thm:sw2}. On the other hand, our proof of  Theorem \ref{thm:sw2} requires the
finiteness of $\E_{\bpi}\left[\|\bQ \|_1\right]$, and so Theorem \ref{THM:SW1} needs to
be established  first.

\old{Theorem \ref{thm:sw2} yields a simple bound on the tail exponent.  We can further
simplify this bound to make the dependence on the  system parameters more explicit. We
concentrate on the heavily-loaded  case, where $\rho$ increases to 1, so that $\gamma$
tends to zero. We  have
\begin{align}
& \lim\sup_{R \to\infty} \frac{1}{R} \log \mathbb{P}_{\bpi}\big( \|
\mathbf{Q}(\tau)\|_{\alpha+1} > R\big)\nonumber \\
& \leq -\frac{M^{-\frac{1}{1+\alpha}}}{10} \log \left(1 +  \frac{\gamma}{5\bar{\nu}}
\right) \stackrel{\rho \to 1}{\approx} -\frac{M^{-\frac{1}{1+\alpha}}}{50} 
\frac{\gamma}{\bar{\nu}}\nonumber \\
& = -\frac{1-\rho}{100 M \bar{\nu}} \leq -\frac{1-\rho}{100 M \|\blambda\|^{\frac{1}{1+\alpha}}_1}. \label{eq:sw2-1}
\end{align}
For the last inequality, we use the fact that the
arrival process is Bernoulli and argue as follows:
\begin{align*}
\bar{\nu} & = \E\big [ \|\mathbf{a}(1)\|_{\alpha+1}\big ] = \E\left[\left(\sum_i a_i(1)\right)^{1/(1+\alpha)} \right] \\
          & \leq \left(\E\left[\sum_i a_i(1)\right]\right)^{1/(1+ \alpha)} \\
          & = \left(\sum_i \lambda_i\right)^{1/(1+\alpha)} ~=~ \|\blambda\|_1^{\frac{1}{1+\alpha}}.
\end{align*}}
In the Appendix, we
comment on the tightness of our upper bounds by comparing them with
explicit lower bounds that follow from the recent
large deviations results in \cite{VL09}.

\subsection{Transient regime}
Here we provide a simple inequality on the maximal excursion of the  queue-size over a
finite time interval, under the MW-$\alpha$ policy, with  $\alpha \geq 1$.
\begin{theorem}\label{thm:max}
Consider a switched network operating under the MW-$\alpha$ policy
with $\alpha \geq 1$, and assume that $\rho(\blambda)<1$.
Suppose that $\bQ(0)=\boldsymbol{0}$. Let $Q_{\max}(\tau) = \max_{i\in  \{1,\ldots,M\}}
Q_i(\tau)$,
and $Q^*_{\max}(T) = \max_{\tau\in\{0,1,\ldots,T\}} Q_{\max}(\tau)$. Then, for any $b>0$,
\begin{equation}
\mathbb{P}\left(Q^*_{\max}(T)\geq b\right)\leq \frac{K(\alpha,M) T}{(1-
\rho)^{\alpha-1}b^{\alpha+1}},
\end{equation}
for some positive constant $K(\alpha,M)$ depending only on $\alpha$  and $M$.
\end{theorem}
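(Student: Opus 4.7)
\textbf{Proof plan for Theorem \ref{thm:max}.} The plan is to work with the standard MW-$\alpha$ Lyapunov function $V(\mathbf{Q}) = \frac{1}{\alpha+1}\|\mathbf{Q}\|_{\alpha+1}^{\alpha+1}$, derive a one-step drift inequality whose worst case is bounded by a quantity $D$ that depends only on $\alpha$, $M$, and $(1-\rho)$, and then apply Doob's maximal inequality to a simple non-negative supermartingale built from $V$.

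The first step is to establish the drift bound
$$\mathbb{E}\bigl[V(\mathbf{Q}(\tau+1)) - V(\mathbf{Q}(\tau)) \mid \mathcal{F}_\tau\bigr] \leq -(1-\rho)\, w_\alpha(\mathbf{Q}(\tau)) + \tfrac{\alpha}{2}\sum_{i=1}^M (Q_i(\tau)+1)^{\alpha-1}.$$
This follows from the second-order Taylor estimate $(x+y)^{\alpha+1} \leq x^{\alpha+1} + (\alpha+1)x^\alpha y + \tfrac{\alpha(\alpha+1)}{2}(x+|y|)^{\alpha-1}y^2$, valid for $x \geq 0$, $x+y \geq 0$ and $\alpha \geq 1$, applied coordinate-wise with $y_i = a_i(\tau) - \sigma_i(\tau) + U_i(\tau)$, where $U_i(\tau) \in \{0,1\}$ is the unused-service correction; after multiplication by $Q_i^\alpha$ this correction vanishes since it is supported on $\{Q_i = 0\}$. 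One then uses the MW-$\alpha$ identity $\boldsymbol{\sigma}(\tau) \cdot \mathbf{Q}(\tau)^\alpha = w_\alpha(\mathbf{Q}(\tau))$ and the load inequality $\boldsymbol{\lambda} \cdot \mathbf{Q}(\tau)^\alpha \leq \rho\, w_\alpha(\mathbf{Q}(\tau))$. Next, since Assumption~\ref{as1} implies $w_\alpha(\mathbf{Q}) \geq \tfrac{1}{M}\sum_i Q_i^\alpha$, H\"older's inequality gives $\sum_i(Q_i+1)^{\alpha-1} \leq C_\alpha\bigl(M + M\, w_\alpha(\mathbf{Q})^{(\alpha-1)/\alpha}\bigr)$. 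A Young inequality with weight proportional to $(1-\rho)$ absorbs the $w_\alpha^{(\alpha-1)/\alpha}$ term into half of the negative drift $-(1-\rho)w_\alpha$, yielding the uniform-in-$\mathbf{Q}$ estimate
$$\mathbb{E}\bigl[V(\mathbf{Q}(\tau+1)) - V(\mathbf{Q}(\tau)) \mid \mathcal{F}_\tau\bigr] \leq D := \frac{K_0(\alpha,M)}{(1-\rho)^{\alpha-1}}.$$

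Given $D$, the maximal inequality is essentially automatic. Define
$$Z_\tau = V(\mathbf{Q}(\tau)) + (T-\tau)D, \qquad \tau = 0, 1, \ldots, T.$$
Then $Z_\tau \geq 0$ and the uniform drift bound gives $\mathbb{E}[Z_{\tau+1} \mid \mathcal{F}_\tau] \leq V(\mathbf{Q}(\tau)) + D + (T-\tau-1)D = Z_\tau$, so $Z$ is a non-negative supermartingale. By Doob's maximal inequality,
$$\mathbb{P}\Bigl(\max_{0 \leq \tau \leq T} Z_\tau \geq \mu\Bigr) \leq \frac{\mathbb{E}[Z_0]}{\mu} = \frac{TD}{\mu},$$
using $\mathbf{Q}(0) = \mathbf{0}$, hence $V(\mathbf{Q}(0)) = 0$. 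Since $V(\mathbf{Q}(\tau)) \leq Z_\tau$, the same bound governs $\max_\tau V(\mathbf{Q}(\tau))$. Finally, the pointwise inequality $\tfrac{1}{\alpha+1}Q_{\max}(\tau)^{\alpha+1} \leq V(\mathbf{Q}(\tau))$ gives the inclusion $\{Q_{\max}^*(T) \geq b\} \subseteq \{\max_\tau V(\mathbf{Q}(\tau)) \geq b^{\alpha+1}/(\alpha+1)\}$, and the choice $\mu = b^{\alpha+1}/(\alpha+1)$ yields $\mathbb{P}(Q_{\max}^*(T) \geq b) \leq (\alpha+1)TD/b^{\alpha+1}$, which is the claimed bound with $K(\alpha,M) = (\alpha+1)K_0(\alpha,M)$.

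The main technical obstacle is the drift computation: the second-order Taylor correction is of order $\sum_i Q_i^{\alpha-1}$, which grows polynomially in $\|\mathbf{Q}\|$ when $\alpha > 1$ and can only be absorbed by the negative drift $-(1-\rho)w_\alpha(\mathbf{Q})$ at the cost of a $(1-\rho)^{-(\alpha-1)}$ factor via Young's inequality; this is exactly what produces the $(1-\rho)^{\alpha-1}$ denominator and explains why the bound degrades as $\rho \to 1$. Once this balance is struck, the maximal inequality itself is a one-line consequence of the non-negative supermartingale $Z_\tau$ and Doob's bound, and the passage from $V$ to $Q_{\max}$ is just the trivial inequality $Q_{\max}^{\alpha+1} \leq \|\mathbf{Q}\|_{\alpha+1}^{\alpha+1}$.
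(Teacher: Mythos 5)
Your proof is correct and takes essentially the same route as the paper: both work with the Lyapunov function $\tilde{L}(\mathbf{x})=\frac{1}{\alpha+1}\sum_i x_i^{\alpha+1}$, establish a one-step drift bound that is uniformly at most $O\bigl((1-\rho)^{1-\alpha}\bigr)$ (the paper's Lemma~\ref{lem:driftbd}), and convert it to a maximal inequality by applying Doob's bound to the nonnegative supermartingale $X_n+(T-n)B$ (the paper's Lemma~\ref{maximal} and Corollary~\ref{dmax}). The only cosmetic difference is in the drift computation, where you absorb the second-order remainder via H\"older and Young's inequalities rather than the paper's direct maximization of $-(1-\rho)Q_{\max}^\alpha+\alpha 2^{\alpha-1}MQ_{\max}^{\alpha-1}+\alpha M$ over $Q_{\max}$, but both yield the same $(1-\rho)^{1-\alpha}$ scaling.
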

As an important application, we use Theorem \ref{thm:max} to prove a full state space collapse result,\footnote{
This is strong state space collapse and not full diffusion approximation.}
for $\alpha\geq 1$, in Section \ref{ssec:ssc}. The precise statement can be found in Theorem \ref{thm:ssc}.

\section{MW-{\Large $\alpha$} policies: A Useful Drift Inequality}\label{sec:drift}

The key to many of our results 
is a {\em drift inequality} that holds
for every $\alpha>0$ and $\blambda \in \bLambda$. 
In this section, 
we shall state and prove this inequality. It will be used in Section \ref{sec:exp} to 
prove Theorem \ref{thm:sw2}. We remark that 
similar drift inequalities, but for a different Lyapunov function, 
have played an important role in establishing positive recurrence
(cf.\ \cite{TE92}) and multiplicative state space collapse (cf.\  \cite{SW09}).

We will be making extensive use of 
a second-order mean value
theorem \cite{nlp}, which we state below for easy reference. 

\begin{proposition}\label{prp:mvt}
Let $g : \mathbb{R}^M \rightarrow \mathbb{R}$ be twice continuously
differentiable over an open sphere $S$ centered at a vector $\mathbf{x}$.
Then, for any $\mathbf{y}$ such that $\mathbf{x}+\mathbf{y} \in S$, there
exists a $\theta \in [0,1]$ such that
\begin{equation}\label{eq:taylor}
g(\mathbf{x}+\mathbf{y})=g(\mathbf{x})+\mathbf{y}^T\nabla g(\mathbf{x})+
\frac{1}{2}\mathbf{y}^T H(\mathbf{x}+\theta\mathbf{y}) \mathbf{y},
\end{equation}
where $\nabla g(\mathbf{x})$ is the gradient of $g$ at $\mathbf{x}$, and
$H(\mathbf{x})$ is the Hessian of the function $g$ at $\mathbf{x}$.
\end{proposition}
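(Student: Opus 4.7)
The plan is to reduce the multivariable second-order expansion to the well-known one-dimensional version of Taylor's theorem with Lagrange remainder, by restricting $g$ to the line segment joining $\mathbf{x}$ and $\mathbf{x}+\mathbf{y}$. Since $S$ is an open sphere (hence convex) containing both $\mathbf{x}$ and $\mathbf{x}+\mathbf{y}$, the entire segment $\{\mathbf{x}+t\mathbf{y}:t\in[0,1]\}$ lies inside $S$, so this restriction is well-defined and inherits the required smoothness from $g$.

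Concretely, I would define $\phi:[0,1]\to\mathbb{R}$ by $\phi(t)=g(\mathbf{x}+t\mathbf{y})$. By the chain rule, $\phi$ is twice continuously differentiable on an open neighborhood of $[0,1]$, with
\begin{equation*}
\phi'(t)=\mathbf{y}^T\nabla g(\mathbf{x}+t\mathbf{y}),\qquad \phi''(t)=\mathbf{y}^T H(\mathbf{x}+t\mathbf{y})\,\mathbf{y}.
\end{equation*}
I would then invoke the one-dimensional second-order mean value theorem, which guarantees a $\theta\in[0,1]$ with $\phi(1)=\phi(0)+\phi'(0)+\tfrac{1}{2}\phi''(\theta)$, and substitute the expressions above to obtain \eqref{eq:taylor}.

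The only non-trivial ingredient is the one-dimensional version itself, and I would prove it by applying Cauchy's mean value theorem to the pair of auxiliary functions
\begin{equation*}
F(t)=\phi(1)-\phi(t)-(1-t)\phi'(t),\qquad G(t)=(1-t)^2,
\end{equation*}
both of which vanish at $t=1$; the resulting ratio $F'(\xi)/G'(\xi)$ at some intermediate $\xi\in(0,1)$ simplifies to $\tfrac{1}{2}\phi''(\xi)$, yielding the claim after evaluating at $t=0$. Alternatively, one may start from $\phi(1)-\phi(0)=\int_0^1\phi'(t)\,dt$ and integrate by parts once, using the mean value theorem for integrals on the resulting $\int_0^1(1-t)\phi''(t)\,dt$ term.

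Since the statement is a textbook fact (as the authors acknowledge by citing \cite{nlp}), there is no real obstacle: the only point that deserves a line of care is verifying convexity of $S$ so that the segment lies in the domain where $g$ is twice continuously differentiable, which is immediate for an open ball.
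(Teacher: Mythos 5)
Your proof is correct and is the standard textbook derivation: restrict $g$ to the line segment (valid because an open ball is convex), apply the one-dimensional Taylor theorem with Lagrange remainder to $\phi(t)=g(\mathbf{x}+t\mathbf{y})$, and if needed prove the 1D case via Cauchy's mean value theorem with $F(t)=\phi(1)-\phi(t)-(1-t)\phi'(t)$ and $G(t)=(1-t)^2$. The paper does not supply its own proof --- it simply cites \cite{nlp} --- so there is nothing to compare against beyond noting that your argument is exactly the one a reference like \cite{nlp} would give.
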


We now define the Lyapunov function that we will employ. 
For $\alpha \geq 1$, it will be 
simply the $(\alpha+1)$-norm $\|\mathbf{x}\|_{1+\alpha}$ of a vector $\mathbf{x}$. However, when $\alpha\in(0,1)$, this function has unbounded second derivatives as we approach the boundary of $\mathbb{R}_+^M$. For this reason, our Lyapunov function will be a suitably smoothed version of $\| \cdot \|_{\alpha+1}$.

\begin{definition}\label{df:sw-lyap}
Define $f_{\alpha} : \mathbb{R}_+ \rightarrow \mathbb{R}_+$
to be $f_{\alpha}(r) = r^{\alpha}$, when $\alpha \geq 1$, and
\begin{displaymath}
f_{\alpha}(r) = \left\{ \begin{array}{ll}
r^{\alpha}, & \textrm{ if } r\geq 1, \\
(\alpha-1) r^3 +(1-\alpha)r^2 + r, & \textrm{ if } r \leq 1,
\end{array} \right.
\end{displaymath}
when $\alpha \in (0,1)$. Let $F_{\alpha} : \mathbb{R}_+ \rightarrow \mathbb{R}_+$
be the antiderivative of $f_{\alpha}$, so that $F_{\alpha}(r) = \int_{0}^{r} f_{\alpha}(s)\, ds$.
The Lyapunov function $L_{\alpha} : \mathbb{R}_+^M \rightarrow \mathbb{R}_+$ is defined
to be
\begin{displaymath}
L_{\alpha} (\mathbf{x}) = \left [ (\alpha+1)\sum_{i=1}^M F_{\alpha}(x_i)\right ]^{\frac{1}{\alpha+1}}.
\end{displaymath}
\end{definition}

We will make heavy use of various properties
of the functions $f_{\alpha}$, $F_{\alpha}$, and $L_{\alpha}$, which we summarize in the following lemma. The proof is elementary and is omitted.

\begin{lemma}\label{lm:prpt}
Let $\alpha \in (0,1)$. The function $f_{\alpha}$ has the following properties:
\begin{itemize}
\item[(i)] it is continuously differentiable with
$f_{\alpha}(0)=0,\ f_{\alpha}(1)=1,\ f'_{\alpha}(0)=1$, and $f'_{\alpha}(1) = \alpha$;
\item[(ii)] it is increasing and, in particular, $f_{\alpha}(r)\geq 0$ for all $r\geq 0$;
\item[(iii)] we have $r^{\alpha}-1\leq f_{\alpha}(r)\leq r^{\alpha}+1$, for all $r\in [0,1]$;
\item[(iv)] $f'_{\alpha}(r)\leq 2$, for all $r\geq 0$.
\end{itemize}
Furthermore, from (iii), we also have the following property of $F_{\alpha}$:
\begin{itemize}
\item[(iii')] $r^{\alpha+1}-2\leq (\alpha+1)F_{\alpha}(r)\leq r^{\alpha+1}+2$ for all $r\geq 0$. \end{itemize}
\end{lemma}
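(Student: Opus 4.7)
The plan is to verify each listed property by direct computation, exploiting the explicit piecewise-polynomial form of $f_\alpha$ on $[0,1]$ and its plain power form $r^\alpha$ on $[1,\infty)$. Writing $u := 1-\alpha \in (0,1)$, the cubic piece becomes $f_\alpha(r) = r + u\, r^2(1-r)$, with derivative $f'_\alpha(r) = 1 + 2ur - 3ur^2 = 1 + ur(2-3r)$; these reformulations make the quantitative estimates transparent.

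For (i), on each piece $f_\alpha$ is smooth, so I would only need to check continuity and $C^1$ matching at the gluing point $r=1$: both pieces give $f_\alpha(1)=1$ and $f'_\alpha(1) = 1 - u = \alpha$, which agrees with the derivative of $r^\alpha$ at $1$. The values $f_\alpha(0)=0$ and $f'_\alpha(0)=1$ come directly from the cubic formula. For (ii), monotonicity on $[1,\infty)$ is clear; on $[0,1]$ I would analyze $f'_\alpha(r) = 1 + ur(2-3r)$, a downward-opening parabola in $r$ whose minima on $[0,1]$ occur at the endpoints, with values $f'_\alpha(0) = 1$ and $f'_\alpha(1) = \alpha$, both strictly positive. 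Combined with $f_\alpha(0)=0$, this gives $f_\alpha \geq 0$. For (iv), the same parabola attains its maximum at the interior point $r = 1/3$ with value $1 + u/3 \leq 4/3 < 2$; on $[1,\infty)$, $f'_\alpha(r) = \alpha r^{\alpha-1} \leq \alpha < 2$.

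Property (iii) is almost free: on $[0,1]$ both $f_\alpha(r)$ and $r^\alpha$ lie in $[0,1]$, so $|f_\alpha(r) - r^\alpha| \leq 1$; on $[1,\infty)$ the two functions coincide. Finally, (iii') follows from (iii) by integration. For $r \in [0,1]$, the bound $0 \le f_\alpha(s) \le 2$ and the bound $r^{\alpha+1} \in [0,1]$ together give $|(\alpha+1)F_\alpha(r) - r^{\alpha+1}| \le 2$. For $r \ge 1$, a direct computation yields
\[
(\alpha+1)F_\alpha(r) \;=\; r^{\alpha+1} + \bigl[(\alpha+1)F_\alpha(1) - 1\bigr],
\]
and the constant in brackets lies in $[-1,0]$. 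There is no genuine obstacle: the only step requiring any care is the additive-constant bookkeeping in (iii'), which is pure arithmetic.
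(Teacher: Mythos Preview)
Your argument is correct and complete in substance; the paper itself omits the proof as elementary, so there is nothing to compare against. One small slip worth tightening: in your derivation of (iii') for $r\in[0,1]$, the bound $f_\alpha(s)\le 2$ only yields $(\alpha+1)F_\alpha(r)\le 2(\alpha+1)r$, which can be close to $4$ and does not directly give the claimed $\le 2$. Use instead the sharper fact (already established in (i)--(ii)) that $f_\alpha\le 1$ on $[0,1]$, so $(\alpha+1)F_\alpha(r)\le(\alpha+1)r<2$; or, more in keeping with the lemma's hint ``from (iii)'', integrate (iii) directly to get $\bigl|(\alpha+1)F_\alpha(r)-r^{\alpha+1}\bigr|=\bigl|(\alpha+1)\int_0^r(f_\alpha(s)-s^\alpha)\,ds\bigr|\le(\alpha+1)r<2$.
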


We are now ready to state the drift inequality.
\begin{theorem}\label{thm:sw-drift}
Consider a switched network operating under the MW-$\alpha$ policy, and assume that 
$\rho=\rho(\blambda) < 1$. Then, there exists a constant $B>0$, such that
if $L_{\alpha}(\mathbf{Q}(\tau))>B$, then
\begin{equation}\label{eq:sw-drift}
\mathbb{E}[L_{\alpha}(\mathbf{Q}(\tau+1)) - L_{\alpha}(\mathbf{Q}(\tau)) \mid \mathbf{Q}(\tau)]
\leq -\frac{1-\rho}{2}M^{\frac{1}{\alpha+1}-1}.
\end{equation}
\end{theorem}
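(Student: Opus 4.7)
The plan is to apply the second-order mean value theorem (Proposition~\ref{prp:mvt}) to $L_\alpha$ along the one-step increment $\mathbf{y}=\mathbf{Q}(\tau+1)-\mathbf{Q}(\tau)$, bound the linear and quadratic pieces separately, and absorb the lower-order errors by choosing $B$ sufficiently large. As a preliminary, $L_\alpha$ is $C^2$ on $\Rp^M\setminus\{\mathbf{0}\}$: $F_\alpha$ is $C^2$ by construction, $G(\mathbf{x})=(\alpha+1)\sum_i F_\alpha(x_i)$ is positive off the origin, and $L_\alpha=G^{1/(\alpha+1)}$ is therefore $C^2$ there. Since the queues are integer-valued and $\sigma_i,a_i\in\{0,1\}$, one has $|y_i|\leq 1$, so picking $B$ large keeps every intermediate point $\mathbf{Q}(\tau)+\theta\mathbf{y}$ inside this $C^2$ region.

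For the linear piece, a direct computation gives $\nabla L_\alpha(\mathbf{x}) = L_\alpha(\mathbf{x})^{-\alpha}\bigl(f_\alpha(x_i)\bigr)_{i=1}^M$. Because $Q_i\in\Zp$, one has $f_\alpha(Q_i)=Q_i^\alpha$ for every $i$, and $f_\alpha(0)=0$ removes any contribution from wasted service at empty queues. Using the MW-$\alpha$ property $\sum_i Q_i^\alpha \sigma_i(\tau) = w_\alpha(\mathbf{Q})$ and taking conditional expectations,
\begin{displaymath}
\E\bigl[\mathbf{y}^T\nabla L_\alpha(\mathbf{Q})\,\big|\,\mathbf{Q}\bigr] = L_\alpha(\mathbf{Q})^{-\alpha}\Bigl[\textstyle\sum_i Q_i^\alpha\lambda_i - w_\alpha(\mathbf{Q})\Bigr].
\end{displaymath}
Since $\rho=\rho(\blambda)$ is attained as a minimum over a finite linear program, one can write $\blambda\leq\sum_{\bsig}\alpha_{\bsig}\bsig$ with $\sum_{\bsig}\alpha_{\bsig}=\rho$, whence $\sum_i Q_i^\alpha\lambda_i\leq\rho\,w_\alpha(\mathbf{Q})$ and the linear drift is at most $-(1-\rho)L_\alpha(\mathbf{Q})^{-\alpha}w_\alpha(\mathbf{Q})$. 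Assumption~\ref{as1} then gives $w_\alpha(\mathbf{Q})\geq\max_i Q_i^\alpha$, while property (iii$'$) yields $L_\alpha(\mathbf{Q})^{\alpha+1}\leq M\max_i Q_i^{\alpha+1}+2M$; combining, $L_\alpha(\mathbf{Q})^{-\alpha}w_\alpha(\mathbf{Q})\geq c\,M^{-\alpha/(\alpha+1)}$ for a constant $c\uparrow 1$ as $B\to\infty$.

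For the quadratic piece, a second differentiation gives $H_{ij}(\mathbf{x})=L_\alpha(\mathbf{x})^{-\alpha}f'_\alpha(x_i)\delta_{ij}-\alpha L_\alpha(\mathbf{x})^{-2\alpha-1}f_\alpha(x_i)f_\alpha(x_j)$; the rank-one piece is negative semidefinite, so at the intermediate point $\mathbf{z}=\mathbf{Q}(\tau)+\theta\mathbf{y}$ one has $\mathbf{y}^T H(\mathbf{z})\mathbf{y}\leq L_\alpha(\mathbf{z})^{-\alpha}\sum_i y_i^2 f'_\alpha(z_i)$. For $\alpha\in(0,1)$, property (iv) gives $f'_\alpha\leq 2$, so this is $O\bigl(M\,L_\alpha(\mathbf{z})^{-\alpha}\bigr)$. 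For $\alpha\geq 1$, $f'_\alpha(r)=\alpha r^{\alpha-1}$, and H\"older's inequality with conjugate exponents $(\alpha+1)/(\alpha-1)$ and $(\alpha+1)/2$ gives $\sum_i z_i^{\alpha-1}\leq L_\alpha(\mathbf{z})^{\alpha-1}M^{2/(\alpha+1)}$, so the bound becomes $O\bigl(M^{2/(\alpha+1)}L_\alpha(\mathbf{z})^{-1}\bigr)$. Both expressions tend to zero as $L_\alpha\to\infty$, and since $L_\alpha(\mathbf{z})$ and $L_\alpha(\mathbf{Q}(\tau))$ differ by at most a bounded constant, $B$ can be chosen so the quadratic contribution stays below $\tfrac{1}{4}(1-\rho)M^{-\alpha/(\alpha+1)}$.

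The main obstacle is the simultaneous choice of a single $B$ that (i) keeps the entire line segment of the Taylor expansion inside the $C^2$ region of $L_\alpha$, (ii) makes the ratio $\sum_i Q_i^{\alpha+1}/L_\alpha(\mathbf{Q})^{\alpha+1}$---and hence the constant $c$ above---close enough to $1$ that the linear drift contributes at least $-\tfrac{3}{4}(1-\rho)M^{-\alpha/(\alpha+1)}$, and (iii) suppresses the Hessian term by the remaining quarter. The smoothing in Definition~\ref{df:sw-lyap} is essential in the case $\alpha\in(0,1)$---it bounds $f'_\alpha$ via property (iv) so the Hessian stays tame near the axes, at the price of an additive $O(M)$ perturbation in property (iii$'$)---while the integer-valuedness of $Q_i$ is what reconciles the smoothed Lyapunov function with the true MW-$\alpha$ weighting $\sum_i Q_i^\alpha \sigma_i$.
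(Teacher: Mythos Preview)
Your proof is correct and follows the same strategy as the paper: apply the second-order mean value theorem to $L_\alpha$, bound the gradient term via the MW-$\alpha$ optimality together with the load decomposition $\blambda\leq\sum_{\bsig}\alpha_{\bsig}\bsig$, drop the negative-semidefinite rank-one part of the Hessian, and absorb the residual second-order term by taking $B$ large. Your unified treatment---exploiting $f_\alpha(Q_i)=Q_i^\alpha$ on integer arguments to avoid the paper's additive $+M$ slack, and using H\"older for the $\alpha\geq 1$ quadratic bound in place of the paper's $Q_{\max}$ estimate---slightly streamlines the paper's three-way case split ($\alpha=1$, $\alpha>1$, $\alpha\in(0,1)$), but the underlying argument is identical.
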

The proof of this drift inequality is quite tedious when $\alpha \neq 1$. To make the proof more accessible and to provide 
intuition, we first present the somewhat simpler proof for
$\alpha=1$. We then provide the proof for the case of general $\alpha$, by considering  separately the two 
cases where $\alpha > 1$ and $\alpha \in (0,1)$.

We wish to draw attention here to the main difference from related drift inequalities in the literature. The usual proof of stability involves the Lyapunov function $\| \mathbf{Q}\|_{\alpha+1}^{\alpha+1}$; for instance, for the standard MW policy, it involves a quadratic Lyapunov function. In contrast, we use $\| \mathbf{Q}\|_{\alpha+1}$
(or its smoothed version), which scales linearly along radial directions. In this sense, our approach is similar in spirit to 
\cite{BGT01}, which employed piecewise linear Lyapunov functions to derive drift inequalities and then moment and tail bounds.

\subsection{Proof of Theorem \ref{thm:sw-drift}: {\large $\alpha = 1$}}
In this section, we assume that 
$\alpha = 1$. As remarked earlier, we have $L_{\alpha}(\mathbf{x}) = \|\mathbf{x}\|_2$. 

Suppose that 
$\|\mathbf{Q}(\tau)\|_2>0$. We claim that on every sample path, we have 
\begin{equation}\label{eq:d1}
\|\mathbf{Q}(\tau+1)\|_2 - \|\mathbf{Q}(\tau)\|_2\leq \frac{\boldsymbol{Q}(\tau)\cdot\boldsymbol{\delta}(\tau)+\|\boldsymbol{\delta}(\tau)\|_2^2}{\|\mathbf{Q}(\tau)\|_2},
\end{equation}
where $\bdel(\tau) = \bQ(\tau+1)-\bQ(\tau)$.
To see this, we proceed as follows. We have
\begin{align}
& \left(\|\mathbf{Q}(\tau)\|_2+\frac{\mathbf{Q}(\tau)\cdot \boldsymbol{\delta}(\tau)+\|\boldsymbol{\delta}(\tau)\|_2^2}{\|\mathbf{Q}(\tau)\|_2}\right)^2 \nonumber \\
&\geq  \|\mathbf{Q}(\tau)\|_2^2+2\left(\boldsymbol{Q}(\tau)\cdot\boldsymbol{\delta}(\tau)+\|\boldsymbol{\delta}(\tau)\|_2^2\right)\nonumber \\
&\geq \|\mathbf{Q}(\tau)\|_2^2+2\boldsymbol{Q}(\tau)\cdot\boldsymbol{\delta}(\tau)+\|\boldsymbol{\delta}(\tau)\|_2^2 \nonumber \\
& = \|\mathbf{Q}(\tau)+\boldsymbol{\delta}(\tau)\|_2^2 = \|\mathbf{Q}(\tau+1)\|_2^2 \label{eq:jt1}.
\end{align}
Note that
\begin{align*}
& \|\mathbf{Q}(\tau)\|^2_2+\boldsymbol{Q}(\tau)\cdot\boldsymbol{\delta}(\tau)+\|\boldsymbol{\delta}(\tau)\|_2^2 \\
&= \left\|\mathbf{Q}(\tau)+\frac{\boldsymbol{\delta}(\tau)}{2}\right\|_2^2+\frac{3}{4}\|\boldsymbol{\delta}(\tau)\|_2^2\geq 0.
\end{align*}
We divide by $\|\mathbf{Q}(\tau)\|_2$, to obtain
\begin{displaymath}
\|\mathbf{Q}(\tau)\|_2+\frac{\boldsymbol{Q}(\tau)\cdot\boldsymbol{\delta}(\tau)+\|\boldsymbol{\delta}(\tau)\|_2^2}{\|\mathbf{Q}(\tau)\|_2}\geq 0.
\end{displaymath}
Therefore, we can take square roots of both sides 
of \eqref{eq:jt1}, without reversing the direction of the inequality, and the claimed
inequality \eqref{eq:d1} follows. 

Recall that $|\delta_i(\tau)|\leq 1$, because of the Bernoulli arrival assumption. It follows that $\|\boldsymbol{\delta}(\tau)\|_2\leq M^{1/2}$.
We now take the conditional expectation of both sides of \eqref{eq:d1}.
We have  
\begin{align}
& \mathbb{E}\left[\|\mathbf{Q}(\tau+1)\|_2 - \|\mathbf{Q}(\tau)\|_2 ~\Big|~ \mathbf{Q}(\tau)\right] \nonumber \\
&\leq \mathbb{E}\left[\frac{\boldsymbol{Q}(\tau)\cdot \mathbf{a}(\tau)-\mathbf{Q}(\tau)\cdot\boldsymbol{\sigma}(\tau)+M}{\|\mathbf{Q}(\tau)\|_2} ~\Big|~ \mathbf{Q}(\tau)\right]\nonumber \\
&= \frac{\sum_{i=1}^M Q_i(\tau)\mathbb{E}\left[a_i(\tau)\right]-\mathbf{Q}(\tau)\cdot \boldsymbol{\sigma}(\tau)+M}{\|\mathbf{Q}(\tau)\|_2}\nonumber \\
&= \frac{\sum_{i=1}^M Q_i(\tau)\lambda_i - w(\mathbf{Q}(\tau))+M}{\|\mathbf{Q}(\tau)\|_2}\nonumber \\
&\leq \frac{M-(1-\rho)w(\mathbf{Q}(\tau))}{\|\mathbf{Q}(\tau)\|_2}. \label{eq:d2}
\end{align}
The last inequality above is justified as follows.
From the definition of $\rho=\rho(\blambda)$, there exist constants
$\alpha_{\bsig} \geq 0$ such that $\sum_{\bsig\in \cS} \alpha_{\bsig} \leq \rho$, and 
\begin{eqnarray} 
\blambda  & \leq & \sum_{\bsig\in\cS} \alpha_{\bsig} \bsig.\label{eq:lamb}
\end{eqnarray}
Therefore, 
\begin{align}
\sum_i Q_i(\tau) \lambda_i & = \bQ(\tau) \cdot \blambda \leq \sum_{\bsig\in\cS} \alpha_{\bsig} \bQ(\tau)\cdot \bsig \nonumber \\
                           & \leq \sum_{\bsig\in\cS} \alpha_{\bsig} w(\bQ(\tau)) \leq \rho w(\bQ(\tau)).\label{eq:lamb-drift}
\end{align}
Let $Q_{\max}(\tau) = \max_{i=1}^M Q_i(\tau)$. Then, 
\begin{displaymath}
\|\mathbf{Q}(\tau)\|_2 \leq (M Q^2_{\max}(\tau))^{\frac{1}{2}} = M^{\frac{1}{2}}Q_{\max}(\tau).
\end{displaymath}
From Assumption \ref{as1}, we have
\begin{displaymath}
w(\mathbf{Q}(\tau))\geq Q_{\max}(\tau).
\end{displaymath}
Therefore, the RHS of \eqref{eq:d2} can be upper bounded by 
$$ -(1-\rho)M^{-1/2} + \frac{M}{\|\mathbf{Q}(\tau)\|_2} ~\leq~ -\frac{1}{2}(1-\rho)M^{-1/2}, $$ 
when $\|\mathbf{Q}(\tau)\|_2$ is sufficiently large.

\subsection{Proof of Theorem \ref{thm:sw-drift}: {\large $\alpha > 1$}}

We wish to obtain an inequality similar to \eqref{eq:d2} for 
$L_{\alpha}(\bQ(\cdot)) = \|\bQ(\cdot)\|_{1+\alpha}$ under the MW-$\alpha$ policy, and we accomplish this using the second-order mean value theorem (cf.\ Proposition \ref{prp:mvt}). 
Throughout this proof, we will drop the subscript $\alpha+1$ 
and use the notation $\|\cdot\|$ instead of $\|\cdot\|_{\alpha+1}$.

Consider the norm function 
$$g(\mathbf{x})=\| \mathbf{x}\|=(x_1^{\alpha+1}+\ldots+x_M^{\alpha+1})^{\frac{1}{\alpha+1}}.$$ 
The first derivative is 
$$\nabla g(\mathbf{x}) = \|\bx\|^{-\alpha} (x_1^{\alpha},\ldots,x_M^{\alpha}) 
~=~\frac{\bx^\alpha}{\|\bx\|^\alpha}.$$ 
Let $H(\bx) = [H_{ij}(\bx)]_{i,j=1}^{M}$ be  the second derivative (Hessian) matrix of $g$.
Then, 
$$ H_{ij}(\bx) ~=~ \frac{\partial^2 g }{\partial x_i \partial x_j} (\bx)
~=~\delta_{ij}\frac{\alpha x_i^{\alpha-1}}{\|\bx\|^{\alpha}}-\frac{\alpha x_i^{\alpha}x_j^{\alpha}}{\|\bx\|^{2\alpha+1}},$$
where $\delta_{ij}$ is the Kronecker delta. By Proposition \ref{prp:mvt}, 
for any $\bx, \by \in \Rp^M$, and with $\bdel = \by - \bx$, there exists 
a $\theta \in [0,1]$ for which
\begin{align*}
g(\by) & = g(\bx) + \bdel^T \nabla g(\bx) + \frac{1}{2}\bdel^T H(\bx + \theta \bdel) \bdel \\
& = g(\bx) + \|\bx\|^{-\alpha} \left(\sum_{i} \delta_i x_i^\alpha \right) \\ 
& ~~+ \frac{\alpha}{2} \|\bx + \theta \bdel\|^{-\alpha} \left(\sum_{i} (x_i + \theta \delta_i)^{\alpha-1} \delta^2_i \right) \\
& ~~- \frac{\alpha}{2} \|\bx + \theta \bdel\|^{-1-2\alpha} \left( \sum_{i, j} 
(x_i + \theta \delta_i)^{\alpha} (x_j + \theta \delta_j)^{\alpha} \delta_i \delta_j \right) \\
& = g(\bx) + \|\bx\|^{-\alpha} \left(\sum_{i} \delta_i x_i^\alpha \right) \\
& ~~+ \frac{\alpha}{2} \|\bx + \theta \bdel\|^{-\alpha} \left(\sum_{i} (x_i + \theta \delta_i)^{\alpha-1} \delta^2_i \right) \\
& ~~- \frac{\alpha}{2}\|\bx + \theta \bdel\|^{-1-2\alpha} \left( \sum_{i} 
(x_i + \theta \delta_i)^{\alpha} \delta_i \right)^2.
\end{align*}
Using $\bx = \bQ(\tau)$, $\by = \bQ(\tau+1)$ and $\bdel(\tau) = \bQ(\tau+1) - \bQ(\tau)$, 
we have 
\begin{align}
\|\bQ(\tau+1)\| &= \|\bQ(\tau)\| + \left[\frac{\sum_i \delta_i(\tau) Q_i^\alpha(\tau)}{\|\bQ(\tau)\|^{\alpha}}\right]\nonumber \\
& ~~+ \frac{\alpha}{2}\left[\frac{\sum_i (Q_i(\tau) + \theta \delta_i(\tau))^{\alpha-1} \delta^2_i(\tau)}{\|\bQ(\tau) + \theta \bdel(\tau)\|^{\alpha}}\right] \nonumber \\
& ~~- \frac{\alpha}{2} \left[\frac{\left(\sum_{i} 
(Q_i(\tau) + \theta \delta_i(\tau))^{\alpha} \delta_i(\tau) \right)^2}{\|\bQ(\tau) + \theta \bdel(\tau)\|^{1+2\alpha}}
\right].
\label{eq:mvt1}
\end{align}
Therefore, using the fact that $\delta_i(\tau) \in \{-1,0,1\}$, we have
\begin{align}
& \|\bQ(\tau+1)\| -  \|\bQ(\tau)\| \nonumber \\
&\leq \left[\frac{\sum_i \delta_i(\tau) Q_i^\alpha(\tau)}{\|\bQ(\tau)\|^{\alpha}}\right]+ \frac{\alpha}{2}\left[\frac{\sum_i (Q_i(\tau) + \theta \delta_i(\tau))^{\alpha-1}}{\|\bQ(\tau) + \theta \bdel(\tau)\|^{\alpha}}\right]. \label{eq:d4}
\end{align}
We take  conditional expectations  of both sides, given $\mathbf{Q}(\tau)$. To bound the first term on the RHS, we use the definition of the 
MW-$\alpha$ policy, the bound \eqref{eq:lamb} on
$\blambda$,  and the argument used to establish
\eqref{eq:lamb-drift} in the  proof of Theorem \ref{thm:sw-drift} for 
$\alpha=1$ (with $w(\bQ(\tau))$ replaced by $w_\alpha(\bQ(\tau))$). We
obtain 
\begin{equation}
\E\left[\frac{\sum_i \delta_i(\tau) Q_i^\alpha(\tau)}{\|\bQ(\tau)\|^{\alpha}}~\Big|~\bQ(\tau)\right] \leq 
-(1-\rho)\frac{w_{\alpha}(\mathbf{Q}(\tau))}{\|\mathbf{Q}(\tau)\|^{\alpha}}.
\end{equation}
Note that 
\begin{eqnarray}\label{eq:d6}
\|\mathbf{Q}(\tau)\|^{\alpha} & \leq & \left(M Q_{\max}(\tau)^{\alpha+1}\right)^{\frac{\alpha}{\alpha+1}} \nonumber \\
& = & M^{\frac{\alpha}{\alpha+1}}Q_{\max}^{\alpha}(\tau),
\end{eqnarray}
and 
\begin{displaymath}
w_{\alpha}(\mathbf{Q}(\tau))\geq Q_{\max}^{\alpha}(\tau).
\end{displaymath}
Therefore, 
\begin{eqnarray}
\E\left[\frac{\sum_i \delta_i(\tau) Q_i^\alpha(\tau)}{\|\bQ(\tau)\|^{\alpha}} ~\Big|~\bQ(\tau)\right] & \leq &
-(1-\rho) M^{-\frac{\alpha}{1+\alpha}}.
\label{eq:drift_al}
\end{eqnarray}

Consider now the second term of the conditional expectation of the RHS of Inequality (\ref{eq:d4}). Since $\alpha > 1$, and $\delta_i(\tau) \in \{-1,0,1\}$,
the numerator of the expression inside the bracket satisfies
\begin{displaymath}
\sum_i (Q_i(\tau) + \theta \delta_i(\tau))^{\alpha-1} \leq M \left(Q_{\max}(\tau)+1\right)^{\alpha-1},
\end{displaymath}
and the denominator satisfies
\begin{displaymath}
\|\bQ(\tau)+\theta \bdel(\tau)\|^{\alpha} \geq \left([Q_{\max}(\tau)-1]^+\right)^{\alpha},
\end{displaymath}
where we use the notation $[c]^+=0 \vee c$.
Thus,
\begin{displaymath}
\frac{\alpha}{2}\left[\frac{\sum_i (Q_i(\tau) + \theta \delta_i(\tau))^{\alpha-1}}{\|\bQ(\tau) + \theta \bdel(\tau)\|^{\alpha}}\right] \leq \frac{\alpha}{2}\cdot\frac{M(Q_{\max}+1)^{\alpha-1}}{([Q_{\max}(\tau)-1]^+)^{\alpha}}.
\end{displaymath}
Now if $\|\bQ(\tau)\|$ is large enough, $Q_{\max}(\tau)$ is large enough, and
$\frac{\alpha}{2}\cdot\frac{M(Q_{\max}+1)^{\alpha-1}}{([Q_{\max}(\tau)-1]^+)^{\alpha}}$ can be made arbitrarily small. Thus, the conditional expectation of the second term on the RHS of (\ref{eq:d4}) can be made arbitrarily small for large enough $\|\bQ(\tau)\|$. This fact, together with Inequality \eqref{eq:drift_al}, implies that there exists
$B>0$ such that if $\|\bQ(\tau)\|>B$, then
\begin{displaymath}
\E\left[\|\bQ(\tau+1)\|  -  \|\bQ(\tau)\| ~\Big|~\bQ(\tau)\right] \leq
-\frac{1-\rho}{2} M^{-\frac{\alpha}{1+\alpha}}.
\end{displaymath}

\subsection{Proof of Theorem \ref{thm:sw-drift}: {\large $\alpha \in (0,1)$}}

The proof in this section is similar to that for the case $\alpha >1$.
We invoke Proposition \ref{prp:mvt} to write the drift term as a sum 
of terms, which we bound separately. Note that to use Proposition \ref{prp:mvt},
we need $L_{\alpha}$ to be twice continuously differentiable. Indeed,  by Lemma \ref{lm:prpt} (i),
$f_{\alpha}$ is continuously differentiable, so its antiderivative $F_{\alpha}$ is twice continuously
differentiable, and so is $L_{\alpha}$. Thus, by the second order mean value theorem,
we obtain an equation similar to Equation (\ref{eq:mvt1}):
\begin{align}
& L_{\alpha}(\bQ(\tau+1))-L_{\alpha}(\bQ(\tau)) \nonumber \\
&= \left[\frac{\sum_i\delta_i(\tau)f_{\alpha}(Q_i(\tau))}{L_{\alpha}^{\alpha}(\bQ(\tau))}\right]+\frac{1}{2}\left[\frac{\sum_i f'_{\alpha}(Q_i(\tau)+\theta\delta_i(\tau))\delta_i^2(\tau)}{L_{\alpha}^{\alpha}(\bQ(\tau)+\theta\bdel(\tau))}\right] \nonumber \\
&~~-\frac{\alpha}{2}\left[\frac{(\sum_i\delta_i(\tau)f_{\alpha}(Q_i(\tau)+\theta\delta_i(\tau)))^2}{L_{\alpha}^{2\alpha+1}(\bQ(\tau)+\theta\bdel(\tau))}\right].
\label{eq:mvt2}
\end{align}

\noindent Again, using the fact $\delta_i(\tau)\in \{-1,0,1\}$,
\begin{displaymath}
L_{\alpha}(\bQ(\tau+1))-L_{\alpha}(\bQ(\tau))\leq T_1+T_2,
\end{displaymath}
where
\begin{displaymath}
T_1 = \frac{\sum_i\delta_i(\tau)f_{\alpha}(Q_i(\tau))}{L_{\alpha}^{\alpha}(\bQ(\tau))},
\end{displaymath}
and
\begin{displaymath}
T_2 = \frac{1}{2}\left[\frac{\sum_i f'_{\alpha}(Q_i(\tau)+\theta\delta_i(\tau))}{L_{\alpha}^{\alpha}(\bQ(\tau)+\theta\bdel(\tau))}\right].
\end{displaymath}

Let us consider $T_2$ first. For $\alpha \in (0,1)$, 
by Lemma \ref{lm:prpt} (iv), $f_{\alpha}'(r)\leq 2$ for all $r\geq 0$.
Thus
\begin{displaymath}
T_2 \leq \frac{1}{2}\left[\frac{2M}{L_{\alpha}^{\alpha}(\bQ(\tau)+\theta\bdel(\tau))}\right]=\frac{M}{L_{\alpha}^{\alpha}(\bQ(\tau)+\theta\bdel(\tau))}.
\end{displaymath}
which becomes arbitrarily small when  $L_{\alpha}(\bQ(\tau))$
is large enough.

We now consider
$T_1$. Since $f_{\alpha}(r)\leq r^{\alpha}+1$
for all $r\geq 0$ (cf. Lemma \ref{lm:prpt} (iii)), and $\delta_i(\tau)\in \{-1,0,1\}$,
\begin{displaymath}
T_1 \leq \frac{\sum_i\delta_i(\tau)Q_i^{\alpha}(\tau)}{L_{\alpha}^{\alpha}(\bQ(\tau))} +
         \frac{M}{L_{\alpha}^{\alpha}(\bQ(\tau))}.
\end{displaymath}
When we take the conditional expectation, an argument similar to the one for the case $\alpha > 1$ yields
\begin{equation}
\E\left[\frac{\sum_i \delta_i(\tau) Q_i^\alpha(\tau)}{L_{\alpha}^{\alpha}(\bQ(\tau))}~\Big|~\bQ(\tau)\right] \leq -(1-\rho)\frac{w_{\alpha}(\mathbf{Q}(\tau))}{L_{\alpha}^{\alpha}(\mathbf{Q}(\tau))}.
\label{eq:t1}
\end{equation}
Again, as before, $w_{\alpha}(\mathbf{Q}(\tau))\geq Q_{\max}^{\alpha}(\tau)$.
For the denominator, by Lemma \ref{lm:prpt} (iii'), for any $r\geq 0$, we have $(\alpha+1)F_{\alpha}(r)\leq r^{\alpha+1}+2$.
Thus 
\begin{eqnarray*}
L_{\alpha}(\bQ(\tau))& \leq & \left[\sum_i(Q_i(\tau)+2)^{\alpha+1}\right]^{\frac{1}{\alpha+1}} \\
&\leq & \left(M(Q_{\max}(\tau)+2)^{\alpha+1}\right)^{\frac{1}{\alpha+1}} \\
&=& M^{\frac{1}{\alpha+1}}(Q_{\max}(\tau)+2).
\end{eqnarray*}
Therefore, 
\begin{equation*}
\E\left[\frac{\sum_i \delta_i(\tau) Q_i^\alpha(\tau)}{L_{\alpha}^{\alpha}\bQ(\tau)}~\Big|~\bQ(\tau)\right]\leq
-(1-\rho)M^{-\frac{\alpha}{\alpha+1}}\frac{Q_{\max}^{\alpha}(\tau)}{(Q_{\max}+2)^{\alpha}}.
\end{equation*}
If $Q_{\max}(\tau)$ is large enough, we can further upper bound the RHS by, say, $-\frac{3}{4}(1-\rho)M^{-\frac{\alpha}{\alpha+1}}$.

Putting everything together, we have
\begin{align}
& \E\left[L_{\alpha}(\bQ(\tau+1))-L_{\alpha}(\bQ(\tau)) ~\Big |~ \bQ(\tau)\right] \nonumber \\
&\leq -\frac{3}{4}(1-\rho)M^{-\frac{\alpha}{1+\alpha}}+\frac{M}{L_{\alpha}^{\alpha}(\bQ(\tau))}+\E[T_2\mid \bQ(\tau)],
\end{align}
if $Q_{\max}(\tau)$ is large enough. As before, if $L_{\alpha}(\bQ(\tau))$ is large enough,
then $Q_{\max}(\tau)$ is large enough, and
$T_2$ and $\frac{M}{L_{\alpha}^{\alpha}(\bQ(\tau))}$ can be made arbitrarily small. 
Thus, there exists $B>0$ such that if $L_{\alpha}(\bQ(\tau))>B$, then
\begin{displaymath}
\E\left[L_{\alpha}(\bQ(\tau+1))-L_{\alpha}(\bQ(\tau)) ~\Big |~ \bQ(\tau)\right] \leq -\frac{1}{2}(1-\rho)M^{-\frac{\alpha}{1+\alpha}}.
\end{displaymath}

\section{Proof of Theorem \ref{THM:SW1}}

In this section, we fix some $\alpha\in(0,1)$ and prove that
the MW-$\alpha$ policy induces finite 
steady-state expected queue lengths.
The key to our proof is the use of the Lyapunov function 
$\Phi(\bx)=L_{\alpha}^2(\bx)$. This is to be contrasted with  
the use of the standard Lyapunov function, $\sum_i x_i^{1+\alpha}$, in the
literature, or the ``norm''-Lyapunov function $L_{\alpha}(\bx)$ that
we used in establishing the drift inequality of Theorem \ref{thm:sw-drift}. 

Throughout the proof, we drop the subscript $\alpha$ from $L_{\alpha}$, $F_{\alpha}$, and $f_{\alpha}$, 
as they are clear from the context. We also use $\|\bx\|$ to denote the $(\alpha+1)$-norm
of the vector $\bx$, again dropping the subscript.

As usual, we consider the conditional expected drift at time $\tau$, 
$$ D(\bQ(\tau)) ~=~ \E\left[\Phi(\bQ(\tau+1))-\Phi(\bQ(\tau)) ~\Big|~ \bQ(\tau)\right].$$
Recall the notation $Q_{\max}(\tau)=\max\{Q_1(\tau),\ldots,Q_M(\tau)\}$.
Since for $Q_{\max}<2$, $D(\bQ(\tau))$ is bounded by a constant, we assume throughout
the proof that $Q_{\max}(\tau)\geq 2$. 
As in the proof of Theorem \ref{thm:sw-drift} for the case $\alpha \in (0,1)$,
we shall use the second order mean value theorem to obtain a bound on $D(\bQ(\tau))$. 
Using the definition $\Phi(\bx) = L^2(\bx)$, we have 
\begin{eqnarray}\label{eq:phi-partial1}
\left[\nabla \Phi(\bx)\right]_i & = & 2 L(\bx) \frac{\partial L(\bx)}{\partial x_i} ~=~ 2 f(x_i) L^{1-\alpha}(\bx), 
\end{eqnarray}
and 
\begin{align}\label{eq:phi-partial2}
\frac{\partial^2 \Phi}{\partial x_i \partial x_j} (\bx)
& = 2 \frac{\partial L(\bx)}{\partial x_i}  \cdot\frac{\partial L(\bx)}{\partial x_j} 
+ 2 L(\bx) \frac{\partial^2 L(\bx)}{\partial x_i \partial x_j} \nonumber \\
& = 2 \frac{f(x_i) f(x_j)}{L^{2\alpha}(\bx)} + 2 L(\bx) \left(\delta_{ij}\frac{f'(x_i)}{L^{\alpha}(\bx)}-\frac{\alpha f(x_i)f(x_j)}{L^{2\alpha+1}(\bx)} \right) \nonumber \\
& = 2 (1-\alpha) \frac{f(x_i) f(x_j)}{L^{2\alpha}(\bx)} + 2 \delta_{ij} f'(x_i) L^{1-\alpha}(\bx).
\end{align}
Using the second order mean value theorem and the notation $\bQ(\tau+1) = \bQ(\tau) + \bdel(\tau)$, 
we have, for some $\theta \in [0,1]$,
\begin{align}
& \Phi(\bQ(\tau+1)) - \Phi(\bQ(\tau)) \nonumber \\
& \leq 2L^{1-\alpha}(\bQ(\tau)) \left(\sum_{i} f(Q_i(\tau))\delta_i(\tau) \right) \nonumber\\
& ~~+ L^{1-\alpha}(\bQ(\tau) +\theta\bdel(\tau)) \left(\sum_{i} f'(Q_i(\tau)+\theta\delta_i(\tau)) \right) \nonumber \\
& ~~+ (1-\alpha) \frac{\left(\sum_{i} f(Q_i(\tau)+\theta \delta_i(\tau)) \delta_i(\tau)\right)^2}{L^{2\alpha}(\bQ(\tau)+\theta\bdel(\tau))}. \label{eq:ph1}
\end{align}
Let us denote the three terms on the RHS of \eqref{eq:ph1} as $\bar{T}_1$, $\bar{T}_2$ and $\bar{T}_3$
respectively, so that
\begin{align*}
\bar{T}_1 & = 2L^{1-\alpha}(\bQ(\tau)) \left(\sum_{i} f(Q_i(\tau))\delta_i(\tau) \right),\\
\bar{T}_2 & = L^{1-\alpha}(\bQ(\tau) +\theta\bdel(\tau)) \left(\sum_{i} f'(Q_i(\tau)+\theta\delta_i(\tau)) \right),\\
\textrm{and \ } \bar{T}_3 & = (1-\alpha) \frac{\left(\sum_{i} f(Q_i(\tau)+\theta \delta_i(\tau)) \delta_i(\tau)\right)^2}{L^{2\alpha}(\bQ(\tau)+\theta\bdel(\tau))}.
\end{align*}
We consider these terms one at a time.  
\newline
\newline
\textbf{a)} By Lemma \ref{lm:prpt} (iii), $f(r)\leq r^{\alpha}+1$. Using the fact that $\delta_i(\tau)\in \{-1,0,1\}$, we obtain
\begin{displaymath}
\bar{T}_1\leq 2L^{1-\alpha}(\bQ(\tau))\left(M+\sum_i Q_i^{\alpha}(\tau)\delta_i(\tau)\right).
\end{displaymath}
When we take a conditional expectation, an argument similar to the one in earlier sections yields
\begin{displaymath}
\E\left[\sum_i Q_i^{\alpha}(\tau)\delta_i(\tau) ~\Big |~ \bQ(\tau) \right]\leq -(1-\rho)w_{\alpha}(\bQ(\tau)).
\end{displaymath}
Thus, 
\begin{eqnarray*}
\E\left[\bar{T}_1 ~\Big |~ \bQ(\tau) \right]&\leq & -2(1-\rho)w_{\alpha}(\bQ(\tau))L^{1-\alpha}(\bQ(\tau))\\
                                            & & +2M L^{1-\alpha}(\bQ(\tau)).
\end{eqnarray*}
In general, for $r, s\geq 0$ and $\beta\in [0,1]$,
\begin{eqnarray}\label{eq:normeq}
(r+s)^{\beta} & \leq & r^{\beta}+s^{\beta}.
\end{eqnarray}
Now, by Lemma \ref{lm:prpt} (iii'), $r^{\alpha+1}-2\leq (\alpha+1)F(r)\leq r^{\alpha+1}+2$, so
\begin{displaymath}
\sum_i x_i^{\alpha+1}-2M \leq (\alpha+1)\sum_i F(x_i)\leq \sum_i x_i^{\alpha+1}+2M.
\end{displaymath}
We use inequality \eqref{eq:normeq}, with $r = x_i^{\alpha+1}$, $s = 2M$, and   $\beta=(1-\alpha)/(1+\alpha) \in(0,1)$, to obtain
\begin{eqnarray*}
L^{1-\alpha}(\bx) 
&=& \left((\alpha+1)\sum_i F(x_i)\right)^{\frac{1-\alpha}{1+\alpha}} \\
&\leq & \left(2M+\sum_i x_i^{\alpha+1}\right)^{\frac{1-\alpha}{1+\alpha}} \\
&\leq & (2M)^{\frac{1-\alpha}{1+\alpha}}+\left(\sum_i x_i^{\alpha+1}\right)^{\frac{1-\alpha}{1+\alpha}}\\
& =& (2M)^{\frac{1-\alpha}{1+\alpha}}+\|\bx\|^{1-\alpha}.
\end{eqnarray*}
A similar argument, based on inequality \eqref{eq:normeq}, with $r = (\alpha+1)F(x_i)$ and $s = 2M$, yields
\begin{eqnarray*}
\|\bx\|^{1-\alpha} - (2M)^{\frac{1-\alpha}{1+\alpha}} & \leq & L^{1-\alpha}(\bx).
\end{eqnarray*}
We also know that
\begin{eqnarray*}
w_{\alpha}(\bQ(\tau)) ~~\geq~~ Q_{\max}^{\alpha}(\tau) & \geq & M^{-\frac{\alpha}{\alpha+1}}\|\bQ(\tau)\|^{\alpha}.
\end{eqnarray*}
Putting all these facts together, we obtain
\begin{align}
& \E\left[\bar{T}_1 ~\Big |~ \bQ(\tau) \right] \nonumber \\
&\leq 
- 2(1-\rho)w_{\alpha}(\bQ(\tau))L^{1-\alpha}(\bQ(\tau))+2M L^{1-\alpha}(\bQ(\tau))\nonumber \\
&\leq - 2(1-\rho)M^{-\frac{\alpha}{\alpha+1}}\|\bQ(\tau)\|^{\alpha}
          \left(\|\bQ(\tau)\|^{1-\alpha}-(2M)^{\frac{1-\alpha}{1+\alpha}}\right) \nonumber \\
& ~~+ 2 M\left((2M)^{\frac{1-\alpha}{1+\alpha}}+\|\bQ(\tau)\|^{1-\alpha}\right) \nonumber \\
& = - 2(1-\rho)M^{-\frac{\alpha}{\alpha+1}}\|\bQ(\tau)\|+2M\|\bQ(\tau)\|^{1-\alpha} \nonumber \\
& ~~+ 2^{\frac{2}{1+\alpha}}(1-\rho)M^{\frac{1-2\alpha}{1+\alpha}}\|\bQ(\tau)\|^{\alpha}+(2M)^{\frac{2}{1+\alpha}}.
\label{eq:tbar1}
\end{align}
\newline
\newline
\textbf{b)} We now consider the term $\bar{T}_2$. Since $\alpha\in (0,1)$, we have $f'(r)\leq 2$ for all $r\geq 0$.
Since we also have $\theta\in [0,1]$ and $\delta_i(\tau)\in \{-1,0,1\}$, and using the fact
that $L^{1-\alpha}(\bx)\leq (2M)^{\frac{1-\alpha}{1+\alpha}}+\|\bx\|^{1-\alpha}$, we have 
\begin{eqnarray*}
\bar{T}_2 & \leq & 2M L(\bQ(\tau) +\theta\bdel(\tau))^{1-\alpha} \\
          & \leq & 2M \left((2M)^{\frac{1-\alpha}{1+\alpha}}+\|\bQ(\tau)+\theta\bdel(\tau)\|^{1-\alpha}\right) \\
          & = & (2M)^{\frac{2}{1+\alpha}}+2M \|\bQ(\tau)+\theta\bdel(\tau)\|^{1-\alpha}.
\end{eqnarray*}
Now $\|\bQ(\tau)+\theta\bdel(\tau)\|\leq \|\bQ(\tau)+\bone\|\leq \|\bQ(\tau)\|+\|\bone\| = \|\bQ(\tau)\|+M^{\frac{1}{\alpha+1}}$. Since $\alpha \in (0,1)$, we have $0<1-\alpha<1$,
and so
\begin{eqnarray*}
\|\bQ(\tau)+\theta\bdel(\tau)\|^{1-\alpha}&\leq & \left(\|\bQ(\tau)\|+M^{\frac{1}{\alpha+1}}\right)^{1-\alpha} \\
                                          &\leq & \|\bQ(\tau)\|^{1-\alpha}+M^{\frac{1-\alpha}{\alpha+1}}.
\end{eqnarray*}
Putting everything together, we have
\begin{eqnarray}
\bar{T}_2 & \leq & (2M)^{\frac{2}{1+\alpha}}+2M \Big(\|\bQ(\tau)\|^{1-\alpha}+M^{\frac{1-\alpha}{\alpha+1}}\Big)\nonumber \\
          & = & (2+2^{\frac{2}{1+\alpha}})\alpha M^{\frac{2}{1+\alpha}}+2M\|\bQ(\tau)\|^{1-\alpha}.
\label{eq:tbar2}
\end{eqnarray}
\newline
\textbf{c)} We finally consider $\bar{T}_3$. For notational convenience, we write
$\bx = \bQ(\tau)+\theta\bdel(\tau)$, and let $x_{\max} = \max \{x_1,\ldots,x_M\}$.
Note that since $\delta_i(\tau)\in\{-1,0,1\}$, $\theta\in [0,1]$, and we assumed that
$Q_{\max}\geq 2$, we always have $x_{\max}\geq 1$.
We consider the numerator and the denominator separately.
First use the facts that $f(r)\geq 0$ for all $r\geq 0$ (cf. Lemma \ref{lm:prpt} (ii)),
and $\delta_i(\tau)\in \{-1,0,1\}$, to obtain
\begin{displaymath}
\left(\sum_i f(x_i)\delta_i(\tau)\right)^2 \leq \left(\sum_i f(x_i)\right)^2.
\end{displaymath}
Since $f$ is increasing in $r$ (cf. Lemma \ref{lm:prpt} (ii)), 
\begin{displaymath}
\left(\sum_i f(x_i)\right)^2 \leq \left(Mf(x_{\max})\right)^2 = M^2f^2(x_{\max}).
\end{displaymath}
Thus, 
\begin{displaymath}
\left(\sum_i f(x_i)\delta_i(\tau)\right)^2 \leq M^2 f^2(x_{\max}).
\end{displaymath}
Next, since $F(r) = \int_0^r f(s)\, ds$ and $f\geq 0$, we have $F\geq 0$ as well.
Thus,
\begin{displaymath}
L^{2\alpha}(\bx) = \left((\alpha+1)\sum_i F(x_i)\right)^{\frac{2\alpha}{\alpha+1}}
              \geq \left( (\alpha+1)F(x_{\max}) \right)^{\frac{2\alpha}{\alpha+1}},
\end{displaymath}
and so
\begin{displaymath}
\bar{T}_3 \leq (1-\alpha)\frac{M^2f^2(x_{\max})}{\left( (\alpha+1)F(x_{\max}) \right)^{\frac{2\alpha}{\alpha+1}}}.
\end{displaymath}
We will show that $\bar{T}_3$ is bounded above by a positive constant,
whenever $x_{\max}\geq 1$.
Indeed, by Lemma \ref{lm:prpt} (iii) and (iii'), as $x_{\max} \rightarrow \infty$,
\begin{displaymath}
\frac{f^2(x_{\max})}{x^{2\alpha}_{\max}}\rightarrow 1 \textrm{ \ and \ }
\frac{\left( (\alpha+1)F(x_{\max}) \right)^{\frac{2\alpha}{\alpha+1}}}{x^{2\alpha}_{\max}}\rightarrow 1,
\end{displaymath}
so
\begin{displaymath}
(1-\alpha)\frac{M^2f^2(x_{\max})}{\left( (\alpha+1)F(x_{\max}) \right)^{\frac{2\alpha}{\alpha+1}}} \rightarrow (1-\alpha)M^2
\end{displaymath}
as $x_{\max} \rightarrow \infty$. 
\old{Therefore it remains to show that as $x_{\max} \rightarrow 0$, $\bar{T}_3$ remains bounded.
Since $f(0)=F(0)=0$, we resort to L'Hopit\^{a}l's rule:
\begin{eqnarray*}
\lim_{r\rightarrow 0} \frac{f^2(r)}{F^{\frac{2\alpha}{1+\alpha}}(r)}
& = & \lim_{r\rightarrow 0} \frac{\frac{d}{dr}\left[f^2(r)\right]}{\frac{d}{dr}\left[F^{\frac{2\alpha}{1+\alpha}}(r)\right]}\\
& = & \lim_{r\rightarrow 0} \frac{2f(r)f'(r)}{\frac{2\alpha}{\alpha+1}F^{\frac{2\alpha}{\alpha+1}-1}(r)f(r)} \\
& = & \lim_{r\rightarrow 0} \frac{\alpha+1}{\alpha}f'(r)F^{\frac{1-\alpha}{\alpha+1}}(r) = 0,
\end{eqnarray*}
since $f'(0)=1$, $F(0)=0$, $\alpha>0$, and $1-\alpha>0$.}
Using the continuity of $f$ and $F$ for $x_{\max}\geq 1$, it follows 
that there exists a constant $\tilde{K}>0$ such that
\begin{equation}
\bar{T}_3\leq (1-\alpha)\frac{M^2f^2(x_{\max})}{\left((\alpha+1)F(x_{\max})\right)^{\frac{2\alpha}{\alpha+1}}} \leq \tilde{K},
\label{eq:tbar3}
\end{equation}
whenever $x_{\max}\geq 1$. 

Putting together the bounds \eqref{eq:tbar1}, \eqref{eq:tbar2}, and \eqref{eq:tbar3} 
for $\bar{T}_1$, $\bar{T}_2$, and $\bar{T}_3$, respectively, we conclude that, for $x_{\max}\geq 1$,
\begin{align}
D(\bQ(\tau))
&\leq -2(1-\rho)M^{-\frac{\alpha}{\alpha+1}}\|\bQ(\tau)\|+2M\|\bQ(\tau)\|^{1-\alpha} \nonumber \\
& ~~+4(1-\rho)M^{\frac{1-2\alpha}{1+\alpha}}\|\bQ(\tau)\|^{\alpha}+(2M)^{\frac{2}{1+\alpha}} \nonumber \\
& ~~+(2+2^{\frac{2}{1+\alpha}})\alpha M^{\frac{2}{1+\alpha}}+2M\|\bQ(\tau)\|^{1-\alpha}+\tilde{K} \nonumber \\
&=-\bar{A}\|\bQ(\tau)\|+C_1\|\bQ(\tau)\|^{1-\alpha}+C_2\|\bQ(\tau)\|^{\alpha}+  K,
\end{align}
for some positive constants $\bar A$, $C_1$, $C_2$ and $K$.
Since $\alpha\in(0,1)$, the $\|\bQ(\tau)\|$ term dominates. In particular, there exist positive constants $A$ and $D$ such that 
as long as $\max_i Q_i(\tau)\geq D$, we have
\begin{equation}\label{eq:drr}
D(\bQ(\tau))\leq -A \|\bQ(\tau)\| +  K.
\end{equation}
On the other hand, on the bounded set where $\max_i Q_i(\tau)\leq D$,
the drift $D(\bQ(\tau))$ is also bounded by a constant. By suitably redefining the constant $K$, we conclude that Eq.\ \eqref{eq:drr} holds for all possible values of $\bQ(\tau)$.

The drift condition \eqref{eq:drr} is the standard Foster-Lyapunov criterion for the Lyapunov
function $\Phi$ and implies the positive recurrence of the Markov chain $\bQ(\cdot)$
under the MW-$\alpha$ policy, for $\alpha \in (0,1)$. The irreducibility and aperiodicity of the underlying
Markov chain implies the existence of a unique stationary distribution $\bpi$ as well as ergodicity. Let $\bQ_{\infty}$ be a random variable distributed according to $\bpi$. Then $\bQ(\tau)$ converges to $\bQ_{\infty}$ in distribution. 
Using Skorohod's representation theorem, we can embed the random vectors $\bQ(\tau)$ 
in a suitable probability space
so that they converge to $\bQ_{\infty}$ almost surely. 
With this embedding, 
$\|\bQ(\tau)\|\rightarrow \|\bQ_{\infty}\|$, and
$(\sum_{\tau=0}^{T-1}\|\bQ(\tau)\|)/T \rightarrow \|\bQ_{\infty}\|$,
almost surely.
Using Fatou's Lemma, we have
\begin{eqnarray*}
\E\left[\|\bQ_{\infty}\|\right]&=&\E\left[\liminf_{T\to\infty} \frac{1}{T} \sum_{\tau=0}^{T-1}\|\bQ(\tau)\|\right] \\
                           & \leq & \liminf_{T} \E\left[\frac{1}{T} \sum_{\tau=0}^{T-1}\|\bQ(\tau)\|\right]. 
\end{eqnarray*}
On the other hand, the drift inequality \eqref{eq:drr} is well known to imply that the RHS above is finite; see, e.g., Lemma 4.1 of \cite{GNT06}. This proves that $\E\left[\|\bQ_{\infty}\|\right]<\infty$.
By the equivalence of norms, the result for $\|\bQ\|_1$ follows as well.
\old{
Let
$$ \bar{Q}(T)= \E\left[\frac{1}{T} \sum_{\tau=0}^{T-1}\|\bQ(\tau)\|\right].$$
Then, using Jensen's inequality twice, we have
\begin{eqnarray}
0 &\leq & \E\left[\Phi(\bQ(T))-\Phi(\bQ(0))\right] \nonumber \\
& \leq &  -A T \bar{Q}(T) + C_1 \left(\sum_{\tau=0}^{T-1} 
\E\left[\|\bQ(\tau)\|^{1-\alpha}\right] \right)+ C_2 \left(\sum_{\tau=0}^{T-1} 
\E\left[\|\bQ(\tau)\|^{\alpha}\right] \right) + T K \nonumber \\
& \leq & - A T \bar{Q}(T)  + C_1 \left(\sum_{\tau=0}^{T-1} \E\left[\|\bQ(\tau)\|\right]^{1-\alpha}\right) + C_2 \left(\sum_{\tau=0}^{T-1} \E\left[\|\bQ(\tau)\|\right]^{\alpha}\right) + T K \nonumber \\
& \leq & - A T \bar{Q}(T)+ C_1 T \bar{Q}(T)^{1-\alpha} + C_2 T \bar{Q}(T)^{\alpha} + T K,
\end{eqnarray}
if we start at $\bQ(0)\equiv 0$. From above, it follows immediately that $\bar{Q}(T)$ is uniformly
upper bounded for all $T$. Therefore, using Fatou's Lemma we have
\begin{eqnarray*}
\E\left[\|\bQ_{\infty}\|\right]&=&\E\left[\liminf_{T\to\infty} \frac{1}{T} \sum_{\tau=0}^{T-1}\|\bQ(\tau)\|\right] \\
                           & \leq & \liminf_{T} \E\left[\frac{1}{T} \sum_{\tau=0}^{T-1}\|\bQ(\tau)\|\right] \\
                           & < & \infty. 
\end{eqnarray*}
}

\section{Exponential Bound under MW-{\Large $\alpha$}}\label{sec:exp}

In this section we derive an exponential upper bound on the tail
probability of the stationary queue-size distribution, under the MW-$\alpha$ policy. 
\old{First, we present the proof of 
Theorem \ref{thm:sw2}. Next we consider an IQ switch and compare the upper bound in 
Theorem \ref{thm:sw2} with a lower bound 
from \cite{VL09}. We note that an explicit evaluation of the 
large deviation exponent in \cite{VL09} seems
impossible. What we find feasible is obtaining an explicit 
lower bound for the special case of an IQ switch.\\\\
To give the reader a sense of the conditions needed to establish
the exponential bound, we state Theorem 1(a) from \cite{BGT01}, 
which is very similar to what we prove below.}

\old{\begin{theorem}\label{thm:bgt}
Let $\mathbf{X}(\tau)$ be a discrete-time Markov chain on a countably
infinite state space $\mathscr{X}$. Let $\Phi : \mathscr{X} \rightarrow \mathbb{R}_+$ 
be a given nonnegative function. Assume the following:
\begin{itemize}
\item[(1)] \textbf{Drift Condition}: 
for some $\gamma>0$ and $B>0$, and for every $\tau\in \Zp$, we have
\begin{displaymath}
\E\left[\Phi(\mathbf{X}(\tau+1))-\Phi(\mathbf{x})~\big |~ \mathbf{X}(\tau)=\mathbf{x}\right]\leq -\gamma,
\end{displaymath}
whenever $\Phi(\mathbf{x})>B$;
\item[(2)] \textbf{Bounded Increments}: there exists a finite $\nu_{\max}>0$
such that for all $\tau\in \Zp$, 
\begin{displaymath}
\Big | \Phi(\mathbf{X}(\tau+1))-\Phi(\mathbf{X}(\tau)) \Big | \leq \nu_{\max}, \textrm{ a.s.; }
\end{displaymath}
\item[(3)] \textbf{Finiteness}: the chain $\mathbf{X}(\tau)$ has a stationary probability distribution $\bpi$, and
\begin{displaymath}
\E_{\bpi}\left[\Phi(\mathbf{X}(\tau))\right] < \infty.
\end{displaymath}
\end{itemize}
Then for any $\ell\in \Zp$, we have
\begin{displaymath}
\mathbb{P}_{\bpi} \left(\Phi(\mathbf{X}(\tau)) > B+2\nu_{\max} \ell \right)\leq \left(\frac{\nu_{\max}}{\nu_{\max}+\gamma}\right)^{\ell+1}.
\end{displaymath}
\end{theorem}}

\subsection{Proof of Theorem \ref{thm:sw2}: {\large $\alpha \geq 1$}}

The proof of Theorem \ref{thm:sw2} relies on the following 
proposition, 
and the drift inequality established in
Theorem \ref{thm:sw-drift}. 

\begin{proposition}\label{prop:sw-exp}
Consider a switched network operating under the MW-$\alpha$ policy 
with $\alpha \geq 1$, and arrival rate vector $\blambda$ with
$\rho = \rho(\blambda) < 1$.  Let $\bpi$ be the
Let 
unique stationary distribution of the Markov chain $\bQ(\cdot)$. 
Suppose that for all $\tau$,
$$ \Big | \|\mathbf{Q}(\tau+1)\|_{\alpha+1} - \|\mathbf{Q}(\tau)\|_{\alpha+1} \Big | \leq \nu_{\max}.$$
Furthermore, suppose that
for some constants $B>0$ and $\gamma>0$, and whenever $\|\mathbf{Q}(\tau)\|_{1+\alpha}>B$, we have 
\begin{displaymath}
\mathbb{E}[\|\mathbf{Q}(\tau+1)\|_{\alpha+1} - \|\mathbf{Q}(\tau)\|_{\alpha+1} \Big | \mathbf{Q}(\tau)]\leq -\gamma.
\end{displaymath}
Then for any $\ell \in \Zp$, 
\begin{displaymath}
\mathbb{P}_{\bpi}\big( \|\mathbf{Q}(\tau)\|_{\alpha+1}> B + 2 \nu_{\max}\ell\big)\leq \Big(\frac{\bar{\nu}}{\bar{\nu}+\gamma}\Big)^{\ell+1}.
\end{displaymath}
\end{proposition}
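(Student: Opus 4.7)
\emph{Plan.} The statement is a classical exponential-tail bound for Markov chains satisfying a Foster--Lyapunov drift condition and bounded per-step increments, in the spirit of Theorem~1(a) of \cite{BGT01}. Writing $X_\tau := \|\bQ(\tau)\|_{\alpha+1}$ and $Y_\tau := X_{\tau+1} - X_\tau$, the hypotheses give $|Y_\tau| \leq \nu_{\max}$ almost surely and $\E[Y_\tau \mid \bQ(\tau)] \leq -\gamma$ whenever $X_\tau > B$. The mild novelty is that the decay constant uses $\bar\nu$ rather than $\nu_{\max}$, which rests on a switched-network-specific refinement of the positive-part bound. From the update $Q_i(\tau+1) = [Q_i(\tau) - \sigma_i(\tau)]^+ + a_i(\tau)$, monotonicity of $\|\cdot\|_{\alpha+1}$ in nonnegative coordinates, and the triangle inequality, I would first observe
\[
Y_\tau \leq \|\mathbf{a}(\tau)\|_{\alpha+1},
\]
so that $\E[Y_\tau^+ \mid \bQ(\tau)] \leq \bar\nu$.

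For the base case $\ell = 0$, I would use stationarity $\E_\pi[Y_\tau] = 0$: splitting this expectation according to whether $X_\tau > B$, applying the drift bound on the first piece and the positive-part bound on the second, yields
\[
0 \leq -\gamma\,\mathbb{P}_\pi(X > B) + \bar\nu\,\mathbb{P}_\pi(X \leq B),
\]
which rearranges to the desired $\mathbb{P}_\pi(X > B) \leq \bar\nu/(\bar\nu + \gamma)$.

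For $\ell \geq 1$, my plan is to install a geometric recursion $u_\ell \leq (\bar\nu/(\bar\nu+\gamma))\,u_{\ell-1}$, with $u_\ell := \mathbb{P}_\pi(X > B + 2\nu_{\max}\ell)$, and iterate against the base case. The cleanest route uses an exponential Lyapunov function $W(x) = e^{sx}$ with $s > 0$ chosen so that $e^{2s\nu_{\max}} = (\bar\nu+\gamma)/\bar\nu$, and the key step is verifying $\E[e^{sY_\tau} \mid X_\tau > B] \leq 1$; this follows from a secant-line convexity bound on $e^{sy}$ over $[-\nu_{\max},\nu_{\max}]$ together with the drift bound (which implies $\E[Y_\tau^-] \geq \E[Y_\tau^+] + \gamma$) and the positive-part bound $\E[Y_\tau^+] \leq \bar\nu$. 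A standard stationarity argument then shows $\E_\pi[W(X)] < \infty$, and Markov's inequality yields the geometric tail. The main technical obstacle is preserving the precise constant $(\bar\nu/(\bar\nu+\gamma))^{\ell+1}$ through the analysis: this requires (a) using all three bounds (drift, positive-part, and a.s.) simultaneously in the MGF step, and (b) carefully tracking prefactors so as to recover the exponent $\ell+1$ rather than just $\ell$. An alternative is to bypass the MGF and run a direct level-crossing induction at the thresholds $A_\ell = B + 2\nu_{\max}\ell$, exploiting the step bound to localize the upward flux to the strip $\{A_\ell - \nu_{\max} < X_\tau \leq A_\ell\}$ and then applying a stationarity balance argument analogous to the base case.
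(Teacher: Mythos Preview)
Your proposal is correct, and your ``alternative'' route---the level-crossing induction using stationarity balance at the thresholds $A_\ell = B + 2\nu_{\max}\ell$---is precisely the paper's approach. The paper packages that route as a single recursive lemma: using the truncated Lyapunov function $\hat\Phi(\bx) = \max\{c, \|\bx\|_{\alpha+1}\}$ with $c > B - \nu_{\max}$, stationarity gives $\E_\pi[\hat\Phi(\bQ(\tau+1)) - \hat\Phi(\bQ(\tau))] = 0$; the drift of $\hat\Phi$ is zero on $\{X \leq c - \nu_{\max}\}$, at most $\bar\nu$ on the strip $\{c - \nu_{\max} < X \leq c + \nu_{\max}\}$ (via exactly your observation $Y_\tau^+ \leq \|\mathbf{a}(\tau)\|_{\alpha+1}$), and at most $-\gamma$ on $\{X > c + \nu_{\max}\}$. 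This yields
\[
\mathbb{P}_\pi(X > c + \nu_{\max}) \leq \frac{\bar\nu}{\bar\nu + \gamma}\,\mathbb{P}_\pi(X > c - \nu_{\max}),
\]
and iterating with $c = B + (2\ell-1)\nu_{\max}$ against your base case gives the exponent $\ell+1$ cleanly. So the paper's argument coincides with your alternative, not your primary MGF route.

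Your MGF approach is a legitimate alternative in the Hajek style, and the refined piecewise-linear secant (using $\E[Y^+]\leq\bar\nu$ and $\E[Y^-]\geq \E[Y^+]+\gamma$ separately rather than only $\E[Y]\leq -\gamma$) does let you push the contraction factor to $e^{s\nu_{\max}} = (\bar\nu+\gamma)/\bar\nu$---which, incidentally, would give decay $\big(\bar\nu/(\bar\nu+\gamma)\big)^{2\ell}$, stronger than the stated $\ell+1$. The cost is that the prefactor $\E_\pi[e^{sX}]/e^{sB}$ is not explicitly controlled, so you recover the right geometric rate but not the clean inequality as stated; you correctly flag this. The truncated-Lyapunov recursion avoids that prefactor issue entirely, which is why the paper (and \cite{BGT01}) prefer it.

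One small point worth making explicit: both routes (your base case and the paper's lemma) require $\E_\pi[\|\bQ\|_{\alpha+1}] < \infty$ to justify $\E_\pi[Y_\tau]=0$ or $\E_\pi[\hat\Phi(\tau+1)-\hat\Phi(\tau)]=0$. The paper invokes this explicitly via Theorem~\ref{THM:SW1} and equivalence of norms; you should too.
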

Proposition \ref{prop:sw-exp} follows immediately  from the 
following Lemma, which is a minor adaptation of Lemma 1 of \cite{BGT01}. An 
interested reader may refer to the proof of Theorem 1(a) in \cite{BGT01} to see how
Lemma \ref{lem:sw-exp} leads to the bound claimed in Proposition \ref{prop:sw-exp}. 
\begin{lemma}\label{lem:sw-exp}
Under the same assumptions in Proposition \ref{prop:sw-exp}, and for any $c>B-\nu_{\max}$, 
\begin{align}
& \mathbb{P}_{\bpi}\left( \|\mathbf{Q}(\tau)\|_{\alpha+1}> c+\nu_{\max}\right)\nonumber \\
& \leq \left(\frac{\bar{\nu}}{\bar{\nu}+\gamma}\right)\mathbb{P}_{\bpi}\left(\|\mathbf{Q}(\tau)\|_{\alpha+1}> c-\nu_{\max}\right).
\end{align}
\end{lemma}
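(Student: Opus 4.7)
\medskip

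\noindent\textbf{Proof plan for Lemma \ref{lem:sw-exp}.}

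The plan is to apply a stationarity argument to the test function $\phi(x) = (x-c)_+$, where $V(\tau) := \|\mathbf{Q}(\tau)\|_{\alpha+1}$. Since $\mathbb{E}_{\bpi}[V] < \infty$ (by Theorem~\ref{THM:SW1} for $\alpha\in(0,1)$, and by prior results for $\alpha\geq 1$), we have $\mathbb{E}_{\bpi}[\phi(V(\tau+1)) - \phi(V(\tau))] = 0$. The idea is then to split this identity according to the value of $V(\tau)$, using the bounded-increment hypothesis to kill one region, the drift inequality to produce the $-\gamma$ term on another, and a deterministic ``positive-part'' increment bound to produce $\bar\nu$ on the remaining boundary region.

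Concretely, I would partition the sample space into
$\{V(\tau)\leq c-\nu_{\max}\}$, $\{c-\nu_{\max}<V(\tau)\leq c+\nu_{\max}\}$, and $\{V(\tau)>c+\nu_{\max}\}$, and bound $\mathbb{E}_{\bpi}[(\phi(V(\tau+1))-\phi(V(\tau)))\mathbb{I}_{\cdot}]$ on each piece. On the first piece, $|V(\tau+1)-V(\tau)|\leq \nu_{\max}$ forces $V(\tau+1)\leq c$, so both $\phi$ values vanish and this piece contributes $0$. On the third piece, the same increment bound gives $V(\tau+1)>c$, so $\phi(V(\tau+1))-\phi(V(\tau)) = V(\tau+1)-V(\tau)$; moreover $c+\nu_{\max}>B$ by the assumption $c>B-\nu_{\max}$, so the drift hypothesis applies and this piece contributes at most $-\gamma\,\mathbb{P}_{\bpi}(V(\tau)>c+\nu_{\max})$.

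The crux is the middle piece, where we need a bound of size $\bar\nu$ rather than $\nu_{\max}$. Here I would use that $\phi$ is $1$-Lipschitz and nondecreasing to write
\[
\phi(V(\tau+1))-\phi(V(\tau)) \leq (V(\tau+1)-V(\tau))^+,
\]
and then use the deterministic estimate $V(\tau+1)-V(\tau)\leq \|\mathbf{a}(\tau)\|_{\alpha+1}$, which follows from the triangle inequality together with $\|[\mathbf{Q}(\tau)-\bsig(\tau)]^+\|_{\alpha+1}\leq \|\mathbf{Q}(\tau)\|_{\alpha+1}$. Because $\mathbf{a}(\tau)$ is independent of $\mathbf{Q}(\tau)$, the middle piece contributes at most $\bar\nu\,\mathbb{P}_{\bpi}(c-\nu_{\max}<V(\tau)\leq c+\nu_{\max})$.

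Assembling these three contributions into $0 = \mathbb{E}_{\bpi}[\phi(V(\tau+1))-\phi(V(\tau))]$ yields
\[
\gamma\,\mathbb{P}_{\bpi}(V>c+\nu_{\max}) \leq \bar\nu\,\bigl[\mathbb{P}_{\bpi}(V>c-\nu_{\max})-\mathbb{P}_{\bpi}(V>c+\nu_{\max})\bigr],
\]
which rearranges to the claim. The only subtle point (and therefore the ``hard'' step in spirit rather than in calculation) is recognizing that the right asymmetric bound on the boundary region comes from replacing the crude $|\Delta|\leq \nu_{\max}$ by the sharper \emph{one-sided} deterministic bound $\Delta^+\leq \|\mathbf{a}(\tau)\|_{\alpha+1}$, which is what allows $\bar\nu$, rather than $\nu_{\max}$, to appear in the geometric decay ratio.
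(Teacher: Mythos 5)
Your proposal is correct and is essentially a self-contained reconstruction of the argument the paper delegates to Lemma 1 of \cite{BGT01}: your test function $\phi(V)=(V-c)_+$ is the paper's $\hat\Phi(\mathbf{x})=\max\{c,\|\mathbf{x}\|_{\alpha+1}\}$ shifted by a constant, the three-way decomposition at $c\pm\nu_{\max}$ and the stationarity identity $\mathbb{E}_{\bpi}[\phi(V(\tau+1))-\phi(V(\tau))]=0$ (valid because $\mathbb{E}_{\bpi}[V]<\infty$) are exactly the BGT01 scheme, and your key observation that the one-sided increment bound $(V(\tau+1)-V(\tau))^+\leq\|\mathbf{a}(\tau)\|_{\alpha+1}$ lets $\bar\nu$ replace $\nu_{\max}$ in the boundary term is precisely the modification the paper highlights. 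No gaps; this is the same proof, written out in full.
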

\begin{proof}
Since this Lemma is a minor adaptation of Lemma 1 in \cite{BGT01}, 
we only indicate the changes to the proof of Lemma
1 in \cite{BGT01} that lead to our claimed result. First let us point out
that the proof in \cite{BGT01} makes use of the finiteness of the expected value of 
the Lyapunov function under the stationary distribution $\bpi$. In our case, 
the Lyapunov function in question is $\|\cdot\|_{\alpha+1}$, and the finiteness follows 
from Theorem \ref{THM:SW1} by noticing that all norms are equivalent.

As in the proof of Lemma 1 in \cite{BGT01}, define 
$\hat{\Phi}(\mathbf{x})=\max \{c,\|\mathbf{x}\|_{\alpha+1}\}$. Note that
the maximal change in  $\Phi(\bx)$ in one time step 
is at most $\nu_{\max}$. 
As in \cite{BGT01}, we consider all $\mathbf{x}$ satisfying 
$c-\nu_{\max} < \|\mathbf{x}\|_{\alpha+1} \leq c+\nu_{\max}$. Then, 
\begin{align*}
& \mathbb{E}\big [\hat{\Phi}(\mathbf{Q}(\tau+1))|\mathbf{Q}(\tau) = \mathbf{x}\big ]-\hat{\Phi}(\mathbf{x}) \\
&\leq \sum_{\mathbf{x}':\|\mathbf{x}'\|>\|\mathbf{x}\|}p(\mathbf{x},\mathbf{x}')(\|\mathbf{x}'\|-\|\mathbf{x}\|)\\
&\leq \mathbb{E}\big [ \|\mathbf{a}(\tau)\|\big ] ~=~ \bar{\nu}.
\end{align*}
The proof of Lemma 1 in \cite{BGT01} esentially used $\nu_{\max}$ as an upper bound on $\bar{\nu}$. 
For our result, we keep $\bar{\nu}$ and then proceed as in the proof in \cite{BGT01}.
\end{proof}

\paragraph{Completing the Proof of Theorem \ref{thm:sw2} ($\alpha\geq 1$)} Now the proof of
Theorem \ref{thm:sw2} follows immediately from Proposition \ref{prop:sw-exp}
by noticing that 
Theorem \ref{thm:sw-drift} provides the desired
drift inequality, and the maximal change in $\|\bQ(\tau)\|_{1+\alpha}$ in one 
time step is at most $\nu_{\max} = M^{\frac{1}{1+\alpha}}$, because each queue
can receive at most one arrival and have at most one departure per time step.

\subsection{Proof of Theorem \ref{thm:sw2}: {\large $\alpha \in (0,1)$}}
The proof for the case $\alpha \in (0,1)$ is entirely
parallel to that in the previous section and we do not reproduce it here.
\old{We first state 
a counterpart to Proposition \ref{prop:sw-exp}.
\begin{proposition}\label{prop:sw-exp0}
Consider a switched network operating under the MW-$\alpha$ policy 
with $\alpha \in (0,1)$ and arrival rate vector $\blambda$ with
$\rho = \rho(\blambda) < 1$.  Let  $\bar{\nu} = \mathbb{E}\big [ \|\mathbf{a}(1)\|_{\alpha+1}\big ]$. 
Let $\bpi$ be the unique stationary distribution of the Markov chain $\bQ(\cdot)$. 
Suppose that for all $\tau$,
$$ \Big | L_{\alpha}(\bQ(\tau+1)) - L_{\alpha}(\mathbf{Q}(\tau)) \Big | \leq \nu_{\max}.$$
Furthermore, suppose that
for some constants $B>0$ and $\gamma>0$, and whenever $L_{\alpha}(\mathbf{Q}(\tau))>B$, we have 
\begin{displaymath}
\mathbb{E}[L_{\alpha}(\mathbf{Q}(\tau+1)) - L_{\alpha}(\mathbf{Q}(\tau)) ~\Big |~ \mathbf{Q}(\tau)]\leq -\gamma.
\end{displaymath}
Then for any $\ell \in \Zp$, 
\begin{displaymath}
\mathbb{P}_{\bpi}\big( \|\mathbf{Q}(\tau)\|_{\alpha+1}> B + 2M^{\frac{1}{1+\alpha}}+ 2 \nu_{\max}\ell\big)\leq \Big(\frac{5\bar{\nu}}{5\bar{\nu}+\gamma}\Big)^{\ell+1}.
\end{displaymath}
\end{proposition}
As before, the proposition follows immediately from the following lemma.}

\old{\begin{lemma}\label{lem:sw-exp0}
Under the same assumptions in Proposition \ref{prop:sw-exp0}, and for any $c>B-\nu_{\max}$, we have
\begin{eqnarray}
& & \mathbb{P}_{\bpi}\left( L_{\alpha}(\mathbf{Q}(\tau))> c+\nu_{\max}\right)\nonumber \\
& \leq & \left(\frac{5\bar{\nu}}{5\bar{\nu}+\gamma}\right)\mathbb{P}_{\bpi}\left(L_{\alpha}(\mathbf{Q}(\tau))> c-\nu_{\max}\right).
\end{eqnarray}
\end{lemma}
\begin{proof}
Similar to the proof of Lemma \ref{lem:sw-exp}, let $\hat{\Phi}(\bx) = \max\{c,L_{\alpha}(\bx)\}$.
For all $\bx$ satisfying $c-\nu_{\max}<L_{\alpha}(\bx)\leq c+\nu_{\max}$, we have
\begin{align*}
& \E\left[\hat{\Phi}(\bQ(\tau+1))~\Big |~\bQ(\tau)= \bx \right] - \hat{\Phi}(\bx)\\
& \leq \sum_{\bx' : L_{\alpha}(x')>L_{\alpha}(x)} p(x,x')\left(L_{\alpha}(\bx')-L_{\alpha}(\bx)\right).
\end{align*}
The key lies in bounding $L_{\alpha}(\bx')-L_{\alpha}(\bx)$. Let us write
$\bx' = \bx + \mathbf{a} - \mathbf{d}$, where $\mathbf{a}$ corresponds to an arrival vector 
and $\mathbf{d}$ corresponds to an service vector. Then $\bx'\leq \bx+\mathbf{a}$. Since
$L_{\alpha}(\bx)$ is monotonically increasing in each coordinate of $\bx$,
\begin{displaymath}
L_{\alpha}(\bx')-L_{\alpha}(\bx) \leq L_{\alpha}(\bx+\mathbf{a}) - L_{\alpha}(\bx).
\end{displaymath}
Let the probability of the arrival vector $\mathbf{a}$ be $p_{\mathbf{a}}$. Then
\begin{align*}
& \sum_{\bx' : L_{\alpha}(x')>L_{\alpha}(x)} p(x,x')\left(L_{\alpha}(\bx')-L_{\alpha}(\bx)\right)\\
& \leq \sum_{\mathbf{a}} p_{\mathbf{a}}\left(L_{\alpha}(\bx+\mathbf{a}) - L_{\alpha}(\bx)\right).
\end{align*}
Now $a_i\in\{0,1\}$ for all $i$. So if $a_i=0$, $(\alpha+1)F(x_i+a_i)=(\alpha+1)F(x_i)$,
and if $a_i = 1$, we get
\begin{align*}
(\alpha+1)F_{\alpha}(x_i+a_i) & = (\alpha+1)F_{\alpha}(x_i+1) \leq (x_i+1)^{\alpha+1}+2 \\
&\leq \left(x_i+1+2^{\frac{1}{\alpha+1}}\right)^{\alpha+1}.
\end{align*}
Similarly, using $(\alpha+1)F_{\alpha}(r)\geq x^{\alpha+1}-2$, we get
\begin{displaymath}
(\alpha+1)F(x_i)\geq \left([x_i-2^{\frac{1}{\alpha+1}}]^+\right)^{\alpha+1}.
\end{displaymath}
Therefore, in summary, we have
\begin{align*}
& \left[(\alpha+1)F_{\alpha}(x_i+a_i)\right]^{\frac{1}{\alpha+1}}-\left[(\alpha+1)F_{\alpha}(x_i)\right]^{\frac{1}{\alpha+1}}\\
& \leq (1+2\cdot2^{\frac{1}{\alpha+1}})a_i\leq 5 a_i.
\end{align*}
By the triangle inequality,
\begin{eqnarray*}
L_{\alpha}(\bx+\mathbf{a}) - L_{\alpha}(\bx) &=& \left(\sum_i (\alpha+1)F(x_i+a_i)\right)^{\frac{1}{\alpha+1}}\\
& & -\left(\sum_i (\alpha+1)F(x_i)\right)^{\frac{1}{\alpha+1}}\\
&\leq & \left[\sum_i(5a_i)^{\alpha+1}\right]^{\frac{1}{\alpha+1}} = 5 \|\mathbf{a}\|_{\alpha+1}.
\end{eqnarray*}
Therefore,
\begin{displaymath}
\E\left[\hat{\Phi}(\bQ(\tau+1))~\Big |~ \bQ(\tau)=\bx\right] - \hat{\Phi}(\bx) \leq \sum_{\mathbf{a}}p_{\mathbf{a}}(5\|\mathbf{a}\|_{\alpha+1})=5\bar{\nu}.
\end{displaymath}
As in Lemma \ref{lem:sw-exp}, the rest of the proof follows from the proof of Lemma 1 in \cite{BGT01}.
\end{proof}}

\old{\paragraph{Completing the Proof of Theorem \ref{thm:sw2} ($\alpha \in (0,1)$)}
First we note that for $\alpha \in (0,1)$, we have
\begin{displaymath}
\Big |~L_{\alpha}(\bx)-\|\bx\|_{\alpha+1} ~\Big | \leq 2M^{\frac{1}{\alpha+1}}.
\end{displaymath}
Theorem \ref{thm:sw-drift} provides the desired
drift inequality. The maximal change in $L_{\alpha}(\bQ(\tau))$ in a
time step is at most $\nu_{\max} = 5 M^{\frac{1}{1+\alpha}}$, because
\begin{align*}
& \Big |~L_{\alpha}(\bQ(\tau+1))-L_{\alpha}(\bQ(\tau))~\Big | \\
& \leq \Big |~L_{\alpha}(\bQ(\tau))-\|\bQ(\tau)\|_{\alpha+1} ~\Big | \\
& ~~+ \Big |~L_{\alpha}(\bQ(\tau+1))-\|\bQ(\tau+1)\|_{\alpha+1} ~\Big | \\
& ~~+ \Big |~\|\bQ(\tau+1)\|_{\alpha+1}-\|\bQ(\tau+1)\|_{\alpha+1} ~\Big | \\
&\leq 2M^{\frac{1}{\alpha+1}}+2M^{\frac{1}{\alpha+1}}+M^{\frac{1}{\alpha+1}} = 5M^{\frac{1}{\alpha+1}}.
\end{align*} 
Now Theorem \ref{thm:sw2} follows from Proposition \ref{prop:sw-exp0}, with $B' = B+2 M^{\frac{1}{\alpha+1}}$.}


\section{Transient Analysis}
In this section, we present a transient analysis of the MW-$\alpha$ policy with $\alpha\geq 1$.
First we present a general maximal lemma, which is then specialized to the switched network.
In particular, we prove a drift inequality for the Lyapunov function $\tilde{L}(\bx)=\frac{1}{\alpha+1}\sum_i x_i^{\alpha+1}$.
We combine the drift inequality with the maximal lemma to obtain a maximal inequality for the switched network.
We then apply the maximal inequality to prove full state space collapse for $\alpha\geq 1$.

\subsection{The Key Lemma}
Our analysis relies on the following lemma:
\begin{lemma}\label{maximal}
Let $(\mathscr{F}_n)_{n\in \Zp}$ be a filtration on a probability space. Let $(X_n)_{n\in\Zp}$ be a nonnegative $\mathscr{F}_n$-adapted stochastic process that satisfies
\begin{equation}
\mathbb{E}[ X_{n+1}\mid \mathscr{F}_n ]\leq X_n+B_n
\label{eq:maxdrift}
\end{equation}
where $B_n$'s are nonnegative random variables (not necessarily $\mathscr{F}_n$-adapted) with finite means.
Let $X_n^{*} = \max\{X_0,\ldots,X_n\}$ and suppose that $X_0 = 0$. Then, for any $a>0$ and any $T\in \Zp$,
\begin{displaymath}
\mathbb{P}(X_T^{*} \geq a)\leq \frac{\sum_{n=0}^{T-1}\mathbb{E}[B_n]}{a}.
\end{displaymath}
\end{lemma}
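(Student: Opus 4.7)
My plan is to prove this by combining a stopping-time argument with Markov's inequality, in the spirit of Doob's maximal inequality for submartingales.

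I would first introduce the hitting time $\tau = \inf\{n \geq 0 : X_n \geq a\}$, which is a stopping time with respect to $(\mathscr{F}_n)$ because each $X_k$ is $\mathscr{F}_k$-measurable. The crucial observation is that $\{X_T^{*} \geq a\} = \{\tau \leq T\}$, and on this event the stopped variable $X_{T\wedge\tau} = X_\tau$ is at least $a$. Markov's inequality then gives
\[
\mathbb{P}(X_T^{*} \geq a) = \mathbb{P}(\tau \leq T) \leq \mathbb{P}(X_{T\wedge\tau} \geq a) \leq \frac{\mathbb{E}[X_{T\wedge\tau}]}{a},
\]
so it suffices to bound $\mathbb{E}[X_{T\wedge\tau}]$ by $\sum_{n=0}^{T-1}\mathbb{E}[B_n]$.

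To do this, I would write the telescoping identity
\[
X_{T\wedge\tau} = X_0 + \sum_{n=0}^{T-1}(X_{n+1}-X_n)\mathbb{I}_{\{n<\tau\}},
\]
and take expectations. Since $\tau$ is a stopping time, $\{n < \tau\} \in \mathscr{F}_n$, so conditioning on $\mathscr{F}_n$ yields
\[
\mathbb{E}\bigl[(X_{n+1}-X_n)\mathbb{I}_{\{n<\tau\}}\bigr] = \mathbb{E}\bigl[\mathbb{I}_{\{n<\tau\}}\bigl(\mathbb{E}[X_{n+1}\mid\mathscr{F}_n]-X_n\bigr)\bigr] \leq \mathbb{E}[\mathbb{I}_{\{n<\tau\}} B_n] \leq \mathbb{E}[B_n],
\]
where the first inequality uses the drift hypothesis \eqref{eq:maxdrift} together with the nonnegativity of $\mathbb{I}_{\{n<\tau\}}$, and the second uses $B_n \geq 0$. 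Summing over $n$ and using $X_0 = 0$ gives $\mathbb{E}[X_{T\wedge\tau}] \leq \sum_{n=0}^{T-1}\mathbb{E}[B_n]$, and the claim follows.

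I do not anticipate a genuine obstacle here. The only subtlety worth flagging is that $B_n$ need not be $\mathscr{F}_n$-adapted, so one cannot cleanly turn $X_n - \sum_{k<n}B_k$ into a supermartingale. The stopping-time route above sidesteps this by keeping $B_n$ inside an expectation multiplied by a nonnegative $\mathscr{F}_n$-measurable indicator, where only the pointwise form of the drift inequality is needed.
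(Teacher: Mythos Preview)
Your argument is correct. The only integrability point that deserves a word is that each $X_n$ has finite mean, which follows inductively from $X_0=0$, $X_n\geq 0$, and $\mathbb{E}[X_{n+1}]\leq \mathbb{E}[X_n]+\mathbb{E}[B_n]$; with that in hand, the telescoping and the tower-property step are fully justified.

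The paper takes a different route. Rather than stopping $X$ at the hitting time, it builds an auxiliary nonnegative supermartingale
\[
Y_n \;=\; X_n + \mathbb{E}\Bigl[\sum_{k=n}^{T-1} B_k \,\Bigm|\, \mathscr{F}_n\Bigr],
\]
verifies $\mathbb{E}[Y_{n+1}\mid\mathscr{F}_n]\leq Y_n$, and then applies the standard supermartingale maximal inequality $\mathbb{P}(Y_T^*\geq a)\leq \mathbb{E}[Y_0]/a$. Since $B_k\geq 0$ gives $Y_n\geq X_n$, the bound on $X_T^*$ follows. The conditional expectation in the definition of $Y_n$ is exactly the device that neutralizes the non-adaptedness of $B_n$, the same issue you flagged and sidestepped via the indicator $\mathbb{I}_{\{n<\tau\}}$. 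Your approach is more self-contained (it essentially re-proves the maximal inequality from scratch with the optional-stopping idea) and avoids introducing the compensator process; the paper's approach is a clean reduction to a textbook inequality, at the cost of the extra construction. Both give the identical bound.
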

This lemma is a simple consequence of the following standard maximal inequality for nonnegative supermartingales (see for example, Exercise 4, Section 12.4, of \cite{grimmett}):
\begin{theorem}\label{thm:smart}
Let $(\mathscr{F}_n)_{n\in\Zp}$ be a filtration on a probability space. Let $(Y_n)_{n\in \Zp}$
 be a nonnegative $\mathscr{F}_n$-adapted supermartingale, i.e., for all $n$,
$$\E[Y_{n+1}\mid \mathscr{F}_n ]\leq Y_n.$$
Let $Y_T^* = \max\{Y_0,\ldots,Y_T\}$. Then,
$$\mathbb{P}(Y_T^*\geq a)\leq \frac{\E[Y_0]}{a}.$$
\end{theorem}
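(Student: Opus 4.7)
The plan is to prove Theorem \ref{thm:smart} by the standard stopping-time argument that yields Doob's maximal inequality for nonnegative supermartingales. The key idea is to freeze the process at the first time it exceeds the threshold $a$, and then exploit the optional stopping inequality together with the nonnegativity of $Y_n$.

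First I would define the stopping time
\begin{displaymath}
\tau = \min\bigl\{ n \in \{0,1,\ldots,T\} : Y_n \geq a \bigr\},
\end{displaymath}
with the convention $\tau = T$ if no such $n$ exists in $\{0,\ldots,T\}$ with $Y_n\geq a$; more precisely, one can work with $\tau \wedge T$ to avoid the possibility of $\tau=\infty$. Since $\tau$ is determined by $Y_0,\ldots,Y_n$ at each $n$, it is an $\mathscr{F}_n$-stopping time, and it is bounded above by $T$. Next I would observe the key set identity $\{Y_T^* \geq a\} = \{\tau \leq T\}$, and note that on this event $Y_\tau \geq a$ by the very definition of $\tau$.

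The second step is to invoke the optional stopping inequality for the bounded stopping time $\tau \wedge T$ applied to the supermartingale $(Y_n)$, which gives
\begin{displaymath}
\E[Y_{\tau \wedge T}] \leq \E[Y_0].
\end{displaymath}
Since $Y_n \geq 0$ for all $n$, I can lower bound the left-hand side by restricting to the event of interest:
\begin{displaymath}
\E[Y_{\tau \wedge T}] \geq \E\bigl[Y_{\tau \wedge T} \mathbb{I}_{\{\tau \leq T\}}\bigr] = \E\bigl[Y_\tau \mathbb{I}_{\{\tau \leq T\}}\bigr] \geq a\, \mathbb{P}(\tau \leq T).
\end{displaymath}
Combining the two displays and using the set identity $\{\tau\leq T\} = \{Y_T^*\geq a\}$, I obtain $a\,\mathbb{P}(Y_T^* \geq a) \leq \E[Y_0]$, which after dividing by $a$ is exactly the claim.

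There is no real obstacle here, since both the optional stopping inequality for bounded stopping times and the manipulation above are standard; the only item to be careful about is the bounded nature of $\tau \wedge T$, which makes the use of optional stopping immediate without any uniform integrability assumption. An alternative, fully self-contained route that avoids even naming optional stopping is to write
\begin{displaymath}
\E[Y_0] \geq \E[Y_{\tau\wedge T}] = \sum_{k=0}^{T}\E\bigl[Y_k \mathbb{I}_{\{\tau = k\}}\bigr] + \E\bigl[Y_T \mathbb{I}_{\{\tau > T\}}\bigr],
\end{displaymath}
derive the first inequality by telescoping and using the supermartingale property on the events $\{\tau > k\} \in \mathscr{F}_k$, and then bound the first sum from below by $a\,\mathbb{P}(\tau \leq T)$ as before; the second term is nonnegative and can be dropped. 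Either way, the conclusion $\mathbb{P}(Y_T^*\geq a) \leq \E[Y_0]/a$ follows.
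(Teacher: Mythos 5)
Your proof is correct: it is the standard Doob maximal-inequality argument, via the first hitting time of level $a$ and optional stopping of the nonnegative supermartingale at the bounded time $\tau \wedge T$, and the self-contained telescoping variant you sketch is equally valid. The paper itself gives no proof of Theorem \ref{thm:smart}; it cites it as a standard textbook exercise, so there is nothing to compare against. One small clean-up: the first convention you state ($\tau = T$ when the level is never reached) would make $\{\tau \leq T\}$ the whole probability space and break the identity $\{Y_T^* \geq a\} = \{\tau \leq T\}$; the formulation you actually use afterwards, with $\tau$ possibly infinite and everything phrased through $\tau \wedge T$ and the event $\{\tau \leq T\}$, is the right one and should be the only convention stated.
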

\begin{proof}[of Lemma \ref{maximal}] First note that if we take the conditional expectation
on both sides of \eqref{eq:maxdrift}, given $\mathscr{F}_n$, we have
\begin{eqnarray*}
\E[X_{n+1}\mid \mathscr{F}_n]& \leq & \E[X_n\mid \mathscr{F}_n]+\E[B_n\mid F_n]\\
& = & X_n + \E[B_n\mid F_n].
\end{eqnarray*}
Fix $T\in \Zp$. For any $n\leq T$, define
$$Y_n = X_n + \E\left[\sum_{k=n}^{T-1}B_k~\Big |~ \mathscr{F}_n\right].$$
Then
\begin{eqnarray*}
\E[Y_{n+1}\mid \mathscr{F}_n] & = & \E[X_{n+1}\mid \mathscr{F}_n] \\
& & + \E\left[ \E \left[ \sum_{k=n+1}^{T-1} B_k~\Big |~\mathscr{F}_{n+1}\right]~\Big |~ \mathscr{F}_n \right]\\
&\leq & X_n + \E[B_n\mid F_n]+ \E \left[ \sum_{k=n+1}^{T-1} B_k~\Big |~\mathscr{F}_n\right] \\
& = & Y_n.
\end{eqnarray*}
Thus, $Y_n$ is an $\mathscr{F}_n$-adapted supermartingale; furthermore, by definition, $Y_n$ is non-negative for all $n$.
 Therefore, by Theorem \ref{thm:smart},
\begin{displaymath}
\mathbb{P}(Y_T^*\geq a )\leq \frac{\E[Y_0]}{a} = \frac{\E\left[\sum_{k=0}^{T-1} B_k\right]}{a}.
\end{displaymath}
But $Y_n\geq X_n$ for all $n$, since the $B_k$ are nonnegative. Thus
$$\mathbb{P}(X_T^*\geq a)\leq \mathbb{P}(Y_T^*\geq a)\leq \frac{\E\left[\sum_{k=0}^{T-1} B_k\right]}{a}.$$
\end{proof}
\noindent We have the following corollary of Lemma \ref{maximal} in which we take all the $B_n$ equal to the same constant:
\begin{corollary}\label{dmax}
Let $\mathscr{F}_n$, $X_n$ and $X_n^{*}$ be as in Lemma \ref{maximal}. Suppose that
\begin{displaymath}
\mathbb{E}[X_{n+1}\mid\mathscr{F}_n]\leq X_n+B,
\end{displaymath}
for all $n\geq 0$, where $B$ is a nonnegative constant.
Then, for any $a>0$ and any $T\in \Zp$,
\begin{displaymath}
\mathbb{P}(X_T^{*} \geq a)\leq \frac{BT}{a}.
\end{displaymath}
\end{corollary}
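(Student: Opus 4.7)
The plan is to deduce Corollary \ref{dmax} as an immediate specialization of Lemma \ref{maximal}, which has already been established in the excerpt. The hypothesis here is exactly the hypothesis of Lemma \ref{maximal} with the choice $B_n = B$ for every $n \in \Zp$, viewed as a (deterministic, hence trivially measurable and nonnegative) random variable with finite mean.

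Concretely, I would first verify that the setup carries over verbatim: $X_n$ is still a nonnegative $\mathscr{F}_n$-adapted process with $X_0 = 0$, and the constant sequence $B_n \equiv B$ consists of nonnegative random variables with finite mean $\mathbb{E}[B_n] = B$. Then the drift inequality $\mathbb{E}[X_{n+1}\mid\mathscr{F}_n] \leq X_n + B$ is precisely \eqref{eq:maxdrift} with this choice of $B_n$, so Lemma \ref{maximal} applies directly.

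Applying the lemma then yields, for any $a > 0$ and $T \in \Zp$,
\[
\mathbb{P}(X_T^{*} \geq a) \leq \frac{\sum_{n=0}^{T-1} \mathbb{E}[B_n]}{a} = \frac{\sum_{n=0}^{T-1} B}{a} = \frac{BT}{a},
\]
which is the claimed inequality. Since the entire argument is bookkeeping on top of Lemma \ref{maximal}, there is no substantive obstacle; the only thing to be careful about is noting that constants are admissible as the random variables $B_n$ in the hypothesis of Lemma \ref{maximal}, which is obvious.
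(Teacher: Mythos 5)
Your proof is correct and takes exactly the same route as the paper: the paper introduces Corollary \ref{dmax} explicitly as the specialization of Lemma \ref{maximal} obtained by taking all the $B_n$ equal to the constant $B$, which is precisely what you did.
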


\subsection{The Maximal Inequality for Switched Networks}
We employ the Lyapunov function
\begin{equation}
\tilde{L}(\mathbf{x})=\frac{1}{\alpha+1}\sum_{i=1}^M x_i^{\alpha+1},
\end{equation}
to study the MW-$\alpha$ policy. This is the Lyapunov function that was used in \cite{MAW96} to establish positive recurrence of the chain $\bQ(\cdot)$ under the MW-$\alpha$ policy. Below we fine-tune the proof in \cite{MAW96} to obtain a more precise bound.
\begin{lemma}\label{lem:driftbd}
Let $\alpha \geq 1$. For a switched network model operating under the MW-$\alpha$ policy with $\rho = \rho(\blambda)<1$,
we have:
\begin{equation}
\mathbb{E}\big[\tilde{L}(\mathbf{Q}(\tau+1))-\tilde{L}(\mathbf{Q}(\tau))~\big |~ \mathbf{Q}(\tau)\big]\leq \frac{\bar{K}(\alpha,M)}{(1-\rho)^{\alpha-1}},
\end{equation}
where $\bar{K}(\alpha,M)$ is a constant depending only on $\alpha$ and $M$.
\end{lemma}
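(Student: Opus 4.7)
\textbf{Proof proposal for Lemma \ref{lem:driftbd}.} My plan is to apply the second-order mean value theorem (Proposition \ref{prp:mvt}) coordinatewise to the one-variable function $r\mapsto r^{\alpha+1}/(\alpha+1)$, which is twice continuously differentiable with first derivative $r^\alpha$ and second derivative $\alpha r^{\alpha-1}$. This yields, for some $\theta=\theta(\tau)\in[0,1]$,
\begin{displaymath}
\tilde L(\bQ(\tau+1))-\tilde L(\bQ(\tau))=\sum_{i}Q_i^{\alpha}(\tau)\,\delta_i(\tau)+\frac{\alpha}{2}\sum_{i}\bigl(Q_i(\tau)+\theta\delta_i(\tau)\bigr)^{\alpha-1}\delta_i^{2}(\tau),
\end{displaymath}
where $\bdel(\tau)=\bQ(\tau+1)-\bQ(\tau)$. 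Taking the conditional expectation given $\bQ(\tau)$ and using the MW-$\alpha$ definition together with the argument behind \eqref{eq:lamb-drift}, the first sum is bounded above by $-(1-\rho)w_\alpha(\bQ(\tau))$; I would handle the boundary case $Q_i=0$ just as in Section~\ref{sec:drift}, noting that $Q_i^\alpha\delta_i=0$ when $Q_i=0$.

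For the second-order term, since $\delta_i(\tau)\in\{-1,0,1\}$ and $\theta\in[0,1]$, I would bound $(Q_i+\theta\delta_i)^{\alpha-1}\leq(Q_i+1)^{\alpha-1}\leq 2^{\alpha-1}(Q_i^{\alpha-1}+1)$, the last step using the standard inequality $(a+b)^p\leq 2^{p-1}(a^p+b^p)$ valid for $p=\alpha-1\geq 0$. Putting both bounds together gives
\begin{displaymath}
\E\bigl[\tilde L(\bQ(\tau+1))-\tilde L(\bQ(\tau))\,\big|\,\bQ(\tau)\bigr]\leq -(1-\rho)\,w_\alpha(\bQ(\tau))+\alpha 2^{\alpha-2}\sum_i Q_i^{\alpha-1}(\tau)+\tfrac{\alpha}{2}2^{\alpha-1}M.
\end{displaymath}
For $\alpha=1$ the $\sum_i Q_i^{\alpha-1}$ term is simply $M$ and the negative drift term is nonpositive, yielding a constant bound $\bar K(1,M)$ and confirming the stated exponent $(1-\rho)^0$.

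The interesting case is $\alpha>1$, where I need to balance $\sum_i Q_i^{\alpha-1}$ against $w_\alpha(\bQ)$. I would use the crude lower bound $w_\alpha(\bQ)\geq Q_{\max}^\alpha\geq \tfrac{1}{M}\sum_i Q_i^\alpha$ (from Assumption \ref{as1} and $\max\geq$ average), and then apply Young's inequality with conjugate exponents $\alpha/(\alpha-1)$ and $\alpha$: for any $\epsilon>0$,
\begin{displaymath}
Q_i^{\alpha-1}\leq \frac{(\alpha-1)\epsilon}{\alpha}\,Q_i^{\alpha}+\frac{\epsilon^{-(\alpha-1)}}{\alpha}.
\end{displaymath}
Choosing $\epsilon$ of order $(1-\rho)/M$ so that the coefficient of $\sum_i Q_i^\alpha$ becomes nonpositive, the $\sum_i Q_i^\alpha$ contribution is absorbed and what remains is a sum of two terms: a constant times $M$, and a constant times $M\cdot\epsilon^{-(\alpha-1)}=O\bigl(M^{\alpha}/(1-\rho)^{\alpha-1}\bigr)$. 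This gives the claimed bound with $\bar K(\alpha,M)$ depending polynomially on $M$ and on $\alpha$-dependent constants.

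The main obstacle, and the reason this is called a ``fine-tuning'' of \cite{MAW96}, is pinning down the right splitting parameter in Young's inequality so that the blow-up in the constant is exactly $(1-\rho)^{-(\alpha-1)}$ rather than something worse (as would result from a naive negative-drift-plus-constant argument). Everything else — the Taylor expansion, the MW-$\alpha$ drift computation, and the subadditivity bound on $(Q_i+1)^{\alpha-1}$ — is routine and parallels machinery already used in Section~\ref{sec:drift}.
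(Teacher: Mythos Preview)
Your argument is correct and arrives at the same bound, but the route diverges from the paper at the last step. The paper, after reaching
\[
\E\bigl[\tilde L(\bQ(\tau+1))-\tilde L(\bQ(\tau))\,\big|\,\bQ(\tau)\bigr]\leq -(1-\rho)Q_{\max}^{\alpha}(\tau)+\alpha 2^{\alpha-1}M Q_{\max}^{\alpha-1}(\tau)+\alpha M,
\]
treats the right-hand side as a one-variable function of $Q_{\max}$ and maximizes it by elementary calculus, finding the extremizer $Q_{\max}=(\alpha-1)2^{\alpha-1}M/(1-\rho)$ and reading off a value of order $(1-\rho)^{1-\alpha}$. You instead keep the full sums $\sum_i Q_i^{\alpha-1}$ and $\sum_i Q_i^\alpha$, lower bound $w_\alpha$ by $M^{-1}\sum_i Q_i^\alpha$, and balance via Young with parameter $\epsilon\sim(1-\rho)/M$. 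The two are really the same optimization dressed differently: the paper's calculus picks out the optimal trade-off point directly, while your Young inequality with the well-chosen $\epsilon$ encodes the same trade-off in an inequality. Both give $\bar K(\alpha,M)$ of order $M^\alpha$ and exactly the $(1-\rho)^{-(\alpha-1)}$ blow-up.

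One small correction: the inequality $(a+b)^p\leq 2^{p-1}(a^p+b^p)$ that you invoke is only valid for $p\geq 1$; it fails for $0<p<1$ (take $a=1$, $b=0$). This matters when $1\leq\alpha<2$. The bound you actually need, $(Q_i+1)^{\alpha-1}\leq 2^{\alpha-1}(Q_i^{\alpha-1}+1)$, is still true in that range, but by subadditivity $(Q_i+1)^{\alpha-1}\leq Q_i^{\alpha-1}+1$ together with $2^{\alpha-1}\geq 1$, not by the convexity inequality you cite. The paper handles this by a direct case split on $Q_i\geq 1$ versus $Q_i=0$.
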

\begin{proof}
We employ the same strategy as in previous sections. By the second-order mean value theorem, there exists $\theta \in [0,1]$ such that
\begin{align*}
& \tilde{L}(\mathbf{Q}(\tau+1))-\tilde{L}(\mathbf{Q}(\tau)) \\
&=\frac{1}{\alpha+1}\sum_{i=1}^M ((Q_i(\tau)+\delta_i(\tau))^{\alpha+1}-Q_i^{\alpha+1}(\tau))\\
&=\sum_{i=1}^M Q_i^{\alpha}(\tau)\delta_i(\tau)+\sum_{i=1}^M \alpha (Q_i(\tau)+\theta \delta_i(\tau))^{\alpha-1}\delta_i^2(\tau).
\end{align*}
Let us bound the second term on the RHS. We have
\begin{align*}
& \sum_{i=1}^M \alpha (Q_i(\tau)+\theta \delta_i(\tau))^{\alpha-1}\delta_i^2(\tau) \\
&\leq \sum_{i=1}^M \alpha (Q_i(\tau)+\theta)^{\alpha-1} \leq \sum_{i=1}^M \alpha (Q_i(\tau)+1)^{\alpha-1}\\
&\leq \alpha \sum_{i=1}^M (2^{\alpha-1}Q_i^{\alpha-1}(\tau)+1)=\alpha 2^{\alpha-1}\sum_{i=1}^M Q_i^{\alpha-1}(\tau)+\alpha M \\
&\leq \alpha 2^{\alpha-1}M Q_{\max}^{\alpha-1}(\tau)+\alpha M.
\end{align*}
The third inequality follows because when $Q_i(\tau)\geq 1$, $(Q_i(\tau)+1)^{\alpha-1}\leq (2 Q_i(\tau))^{\alpha-1} = 2^{\alpha-1}Q_i^{\alpha-1}(\tau)$, and when $Q_i(\tau)=0$, $(Q_i(\tau)+1)^{\alpha-1}=1$.\\\\
Let us now take conditional expectations. From Section \ref{sec:drift}, we know that
\begin{eqnarray*}
\mathbb{E}\left[\sum_{i=1}^M Q_i^{\alpha}(\tau)\delta_i(\tau) ~\big |~ \mathbf{Q}(\tau) \right] &\leq & -(1-\rho)w_{\alpha}(\mathbf{Q}(\tau))\\
&\leq & -(1-\rho)Q_{\max}^{\alpha}(\tau).
\end{eqnarray*}
Thus, if we combine the inequalities above, we have
\begin{align}
& \mathbb{E}\big[\tilde{L}(\mathbf{Q}(\tau+1))-\tilde{L}(\mathbf{Q}(\tau))\big | \mathbf{Q}(\tau)\big] \nonumber \\
&\leq -(1-\rho)Q_{\max}^{\alpha}(\tau)+\alpha 2^{\alpha-1}M Q_{\max}^{\alpha-1}(\tau)+\alpha M. \label{eq:drift3}
\end{align}
It is a simple exercise in calculus to see that the RHS of \eqref{eq:drift3} is maximized at $Q_{\max}(\tau)=(\alpha-1)2^{\alpha-1}M/(1-\rho)$, giving the maximum value $$\frac{(\alpha-1)^{\alpha-1}2^{\alpha(\alpha-1)}M^{\alpha}}{(1-\rho)^{\alpha-1}}+\alpha M=O((1-\rho)^{1-\alpha}).$$
\end{proof}

\paragraph{Proof of Theorem \ref{thm:max}}
Let $b>0$. Then
\begin{align*}
\mathbb{P}\left(Q^*_{\max}(T)\geq b\right) & = \mathbb{P}\left(\frac{1}{\alpha+1}\big(Q^*_{\max}(T)\big)^{\alpha+1}\geq \frac{1}{\alpha+1}b^{\alpha+1}\right)\\
&\leq \mathbb{P}\left(\max_{\tau\in \{0,\ldots,T\}}\tilde{L}(\bQ(\tau))\geq \frac{1}{\alpha+1}b^{\alpha+1}\right).
\end{align*}
Now, by Lemma \ref{lem:driftbd} and Corollary \ref{dmax},
\begin{align*}
\mathbb{P}\left(\max_{\tau\in \{0,\ldots,T\}}\tilde{L}(\bQ(\tau))\geq \frac{1}{\alpha+1}b^{\alpha+1}\right)& \leq \frac{(\alpha+1)\bar{K}(\alpha,M) T}{(1-\rho)^{\alpha-1}b^{\alpha+1}}\\
& = \frac{K(\alpha,M) T}{(1-\rho)^{\alpha-1}b^{\alpha+1}},
\end{align*}
where $K(\alpha,M) = (\alpha+1)\bar{K}(\alpha,M)$.

\subsection{Full State Space Collapse for {\large $\alpha \geq 1$}}\label{ssec:ssc}
Throughout this section, we assume that we are given $\alpha \geq 1$, and correspondingly, the Lyapunov function $\tilde{L}(\boldsymbol{x})=\frac{1}{\alpha+1}\sum_{i=1}^M x_i^{\alpha+1}$. To state the full state space collapse result for $\alpha \geq 1$, we need some preliminary definitions and the statement of the multiplicative state space collapse result.\\\\
Let $\Sigma$ be the convex hull of $\mathcal{S}$ (the set of feasible schedules), and let $\bar{\bLambda}$ be defined by
\begin{displaymath}
\bar{\bLambda} = \Big \{ \boldsymbol{\lambda}\in \mathbb{R}_{+}^M : \boldsymbol{\lambda}\leq \boldsymbol{\sigma'} \textrm{ componentwise, for some } \boldsymbol{\sigma'} \in \Sigma \Big \}.
\end{displaymath}
Note that this is the closure of the capacity region $\bLambda$ defined earlier.
Recall the definition of the \emph{load} $\rho(\boldsymbol{\lambda})$ of an arrival rate vector $\boldsymbol{\lambda}$. It is clear that $\boldsymbol{\lambda} \in \bar{\bLambda}$ iff $\rho(\boldsymbol{\lambda}) \leq 1$. Define $\partial \bLambda$ the set of \emph{critical} arrival rate vectors:
\begin{eqnarray*}
\partial \bLambda &=& \bar{\bLambda} - \bLambda = \Big\{ \boldsymbol{\lambda}\in \bar{\bLambda} : \rho(\boldsymbol{\lambda})=1 \Big\}.
\end{eqnarray*}
\noindent Now consider the linear optimization problem, named DUAL($\boldsymbol{\lambda}$) in \cite{SW09}:
\begin{center}
$\begin{array}{ll}
\textrm{maximize} & \boldsymbol{\xi}\cdot \boldsymbol{\lambda} \\
\textrm{subject to} & \max_{\boldsymbol{\sigma}\in \mathcal{S}}\boldsymbol{\xi}\cdot\boldsymbol{\sigma}\leq 1,\\  
                  & \boldsymbol{\xi} \in \mathbb{R}_+^{M}.
                  
\end{array}$
\end{center}
For $\boldsymbol{\lambda}\in \partial \bLambda$, the optimal value of the objective in DUAL($\boldsymbol{\lambda}$) is $1$ (cf.\cite{SW09}). The set of optimal solutions to DUAL($\boldsymbol{\lambda}$) is a bounded polyhedron, and we let $\mathcal{S}^{*}=\mathcal{S}^* (\boldsymbol{\lambda})$ be the set of its extreme points.

Fix $\boldsymbol{\lambda}\in \partial \bLambda$. We then consider the optimization problem ALGD($w$):
\begin{center}
$\begin{array}{ll}
\textrm{minimize} & \tilde{L}(\boldsymbol{x})\\
\textrm{subject to} & \boldsymbol{\xi}\cdot \boldsymbol{x}\geq w_{\boldsymbol{\xi}} \textrm{ for all } \boldsymbol{\xi}\in \mathcal{S}^*(\boldsymbol{\lambda}),\\
                  & \boldsymbol{x}\in \mathbb{R}_+^M.
\end{array}$
\end{center}
We know from \cite{SW09} that ALGD($w$) has a unique solution. We now define the \emph{lifting map}:
\begin{definition}
Fix some $\boldsymbol{\lambda}\in \partial \bLambda$. The \emph{lifting map} $\Delta^{\boldsymbol{\lambda}}:\mathbb{R}_+^{|\mathcal{S}^*(\boldsymbol{\lambda})|}\rightarrow \mathbb{R}_+^{M}$ maps $w$ to the unique solution to ALGD($w$). We also define the \emph{workload map} $W^{\boldsymbol{\lambda}}: \mathbb{R}_+^M\rightarrow \mathbb{R}_+^{|\mathcal{S}^*(\boldsymbol{\lambda})|}$ by $W^{\boldsymbol{\lambda}}(\mathbf{q})=(\boldsymbol{\xi}\cdot\mathbf{q})_{\boldsymbol{\xi}\in \mathcal{S}^*(\boldsymbol{\lambda})}$.
\end{definition}
\noindent Fix $\boldsymbol{\lambda}\in \partial \bLambda$. Consider a sequence of switched networks indexed by $r\in \mathbb{N}$, operating under the MW-$\alpha$ policy (recall that $\alpha\geq 1$ here), all with the same number $M$ of queues and feasible schedules. 
Suppose that $\boldsymbol{\lambda}^r \in \bLambda$ for all $r$, and that $\boldsymbol{\lambda}^r = \boldsymbol{\lambda}-\boldsymbol{\Gamma}/r$, for some $\boldsymbol{\Gamma}\in \mathbb{R}_+^M$. For simplicity, suppose that all networks start with empty queues. Consider the following central limit scaling, 
\begin{equation}
\hat{\mathbf{q}}^r(t)=\mathbf{Q}^r(r^2t)/r,
\label{eq:scaling}
\end{equation}
where $\mathbf{Q}^r(\tau)$ is the queue size vector of the $r$th network at time $\tau$, and where we extend the domain of $\mathbf{Q}^r(\cdot)$ to $\mathbb{R}_+$ by linear interpolation in each interval $(\tau-1,\tau)$.\\\\
We are finally ready to state the multiplicative state space collapse result (Theorem 8.2 in \cite{SW09}):
\begin{theorem}\label{thm:mssc}
Fix $T>0$, and let $$\|\mathbf{x}(\cdot)\| = \sup_{i\in \{1,\ldots,M\}, 0\leq t\leq T}|x_i(t)|.$$ Under the above assumptions, for any $\beps>0$,
\begin{displaymath}
\lim_{r\rightarrow \infty}\mathbb{P}\left ( \frac{\| \hat{\mathbf{q}}^r(\cdot)-\Delta^{\boldsymbol{\lambda}}(W^{\boldsymbol{\lambda}}(\hat{\mathbf{q}}^r(\cdot)))\|}{\|\hat{\mathbf{q}}^r(\cdot)\|\vee 1}<\beps \right ) = 1.
\end{displaymath}
\end{theorem}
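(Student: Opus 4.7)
The plan is to prove Theorem \ref{thm:mssc} via the standard Bramson-style framework for multiplicative state space collapse, with the Lyapunov function $\tilde{L}(x)=\frac{1}{\alpha+1}\sum_i x_i^{\alpha+1}$ playing the central role. Define the \emph{gap} $H(x)=\tilde{L}(x)-\tilde{L}\bigl(\Delta^{\boldsymbol{\lambda}}(W^{\boldsymbol{\lambda}}(x))\bigr)$. Since $\Delta^{\boldsymbol{\lambda}}(W^{\boldsymbol{\lambda}}(x))$ is by definition the minimizer of $\tilde{L}$ over vectors with the same workload as $x$, we have $H(x)\ge 0$ with equality iff $x$ lies on the collapsed manifold $\{y:y=\Delta^{\boldsymbol{\lambda}}(W^{\boldsymbol{\lambda}}(y))\}$. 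The goal is to show that on the diffusion time scale $\tau=r^2 t$ and space scale $Q/r$, $H(\hat{\mathbf{q}}^r(\cdot))$ becomes negligible compared to $\tilde{L}(\hat{\mathbf{q}}^r(\cdot))$.

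The first main step is to establish a one-step drift inequality for $H$ under the MW-$\alpha$ policy that is sharper than the one for $\tilde{L}$ itself (Lemma \ref{lem:driftbd}). The idea is to split $\tilde{L}(\mathbf{Q}(\tau+1))-\tilde{L}(\mathbf{Q}(\tau))$ via the second-order mean value theorem exactly as in Lemma \ref{lem:driftbd}, keeping track of which extreme points $\boldsymbol{\xi}\in\mathcal{S}^*(\boldsymbol{\lambda})$ are binding at the lifting of $\mathbf{Q}(\tau)$. Because MW-$\alpha$ maximizes $\boldsymbol{\sigma}\cdot\mathbf{Q}^\alpha$, i.e., it steps in the direction of steepest decrease of $\tilde{L}$ among feasible schedules, and because the gradient of $\tilde{L}$ at the lifted state is a convex combination of binding $\boldsymbol{\xi}$'s (by the KKT conditions for ALGD$(w)$), one can show that the drift of $H(\mathbf{Q}(\tau))$ is at most $-c_1(1-\rho)\,H(\mathbf{Q}(\tau))^{a}/\|\mathbf{Q}(\tau)\|^{b}+c_2(\alpha,M)$ for explicit exponents $a,b>0$ satisfying $a\,(\alpha+1)-b=\alpha$. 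The workload identity $W^{\boldsymbol{\lambda}}(\mathbf{Q}(\tau+1))-W^{\boldsymbol{\lambda}}(\mathbf{Q}(\tau))=(\boldsymbol{\xi}\cdot(\mathbf{a}(\tau)-\boldsymbol{\sigma}(\tau)))_{\boldsymbol{\xi}\in\mathcal{S}^*}$, together with $\boldsymbol{\xi}\cdot\boldsymbol{\sigma}\le 1$, will be used to separate the "workload drift" from the "gap drift."

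The second step is to pass to the diffusion scale. From Theorem \ref{thm:max} the scaled process $\hat{\mathbf{q}}^r(\cdot)$ is $C$-tight on $[0,T]$, so along any subsequence we can extract a limit $\hat{\mathbf{q}}(\cdot)$ in the Skorokhod topology. Using the drift inequality from the previous step and a standard martingale / Doob-type bound (in the spirit of Lemma \ref{maximal}) applied to the rescaled process $r^{-(\alpha+1)}H(\mathbf{Q}^r(r^2 t))$, one shows that $\int_0^T H(\hat{\mathbf{q}}^r(t))\,dt \to 0$ in probability. Combined with tightness and continuity of $\Delta^{\boldsymbol{\lambda}}\circ W^{\boldsymbol{\lambda}}$ (established from the ALGD characterization), this forces the uniform convergence $\sup_{t\le T}\|\hat{\mathbf{q}}^r(t)-\Delta^{\boldsymbol{\lambda}}(W^{\boldsymbol{\lambda}}(\hat{\mathbf{q}}^r(t)))\|/(\|\hat{\mathbf{q}}^r(t)\|\vee 1)\to 0$ in probability.

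The main obstacle is the quantitative gap-drift inequality in the first step. While geometrically it is clear that MW-$\alpha$ acts as a "projected gradient descent" on $\tilde{L}$ at fixed workload, extracting a clean inequality of the form drift $\lesssim -H^a/\|Q\|^b$ requires carefully handling two complications: (i) $\mathcal{S}^*(\boldsymbol{\lambda})$ may have multiple extreme points and only a subset of them are binding at $\Delta^{\boldsymbol{\lambda}}(W^{\boldsymbol{\lambda}}(\mathbf{Q}))$; the analysis must locally restrict to those binding directions, using a strict complementarity-type argument from convex duality applied to ALGD. (ii) MW-$\alpha$ chooses the schedule to maximize a discrete objective, not a directional derivative of $\tilde{L}$, so one must show that the minimizing schedule is close (up to an $O(1)$ error) to the convex-analytic gradient direction. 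Both issues can be addressed via the KKT apparatus for ALGD combined with the Lipschitz properties of $\Delta^{\boldsymbol{\lambda}}$, but they are the delicate technical core of the proof.
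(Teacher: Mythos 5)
Theorem \ref{thm:mssc} is not proved in this paper at all: the text states it explicitly as a quotation of ``Theorem 8.2 in \cite{SW09}'' and uses it as an external input. The paper's own original contribution in this section is Theorem \ref{thm:ssc} (full state space collapse), which is obtained by combining the \emph{cited} multiplicative collapse with the maximal inequality of Theorem \ref{thm:max}. There is therefore no proof of Theorem \ref{thm:mssc} in this paper to compare your attempt against; you have attempted to reprove a result the paper takes from prior work.

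On its own merits the sketch has genuine gaps. The central claim --- a one-step drift bound for the gap $H(\mathbf{Q}(\tau))$ of the form $-c_1(1-\rho)H^a/\|\mathbf{Q}\|^b+c_2$ with $a(\alpha+1)-b=\alpha$ --- is asserted but never derived, and this \emph{is} the heart of the difficulty. Your final paragraph concedes the two obstacles (multiple binding extreme points of $\mathcal{S}^*(\blambda)$; MW-$\alpha$ maximizing a discrete objective rather than a directional derivative of $\tilde{L}$) without resolving them; the KKT and strict-complementarity machinery you gesture at is exactly what the fluid-limit analysis in \cite{SW09} carries out carefully, and that proof goes through Bramson's framework via fluid models, not a direct stochastic drift inequality for $H$, so your route is also a genuine departure from the source. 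Further, even granting the drift bound, passing from $\int_0^T H(\hat{\mathbf{q}}^r(t))\,dt\to 0$ in probability to the \emph{uniform} statement $\sup_{t\le T}\|\hat{\mathbf{q}}^r(t)-\Delta^{\blambda}(W^{\blambda}(\hat{\mathbf{q}}^r(t)))\|/(\|\hat{\mathbf{q}}^r(t)\|\vee 1)\to 0$ is precisely the oscillation-control step Bramson's framework exists to supply; an integrated bound alone does not preclude short excursions away from the invariant manifold. Finally, note that using Theorem \ref{thm:max} for tightness reverses the paper's logical flow (there, that theorem is applied \emph{after} the multiplicative SSC, to upgrade it to full SSC); this is not a circularity since Theorem \ref{thm:max} does not depend on Theorem \ref{thm:mssc}, but it does mean your route would only cover $\alpha\ge 1$, whereas the cited multiplicative SSC in \cite{SW09} holds more broadly.
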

\noindent We now state and prove the full state space collapse result.
\begin{theorem}\label{thm:ssc}
Under the same assumptions in Theorem \ref{thm:mssc}, and for any $\beps>0$,
\begin{displaymath}
\lim_{r\rightarrow \infty}\mathbb{P}\left ( \| \hat{\mathbf{q}}^r(\cdot)-\Delta^{\boldsymbol{\lambda}}(W^{\boldsymbol{\lambda}}(\hat{\mathbf{q}}^r(\cdot)))\|<\beps \right ) = 1.
\end{displaymath}
\end{theorem}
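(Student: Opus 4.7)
The plan is to upgrade the multiplicative state space collapse (Theorem \ref{thm:mssc}) to full state space collapse by establishing that the family $\{\|\hat{\mathbf{q}}^r(\cdot)\|\}_{r \geq 1}$ is tight, i.e., that $\sup_r \mathbb{P}(\|\hat{\mathbf{q}}^r(\cdot)\| > b) \to 0$ as $b \to \infty$. Once tightness is in hand, given any $\epsilon, \eta > 0$ I would choose $b$ so large that $\mathbb{P}(\|\hat{\mathbf{q}}^r(\cdot)\| \vee 1 > b) \leq \eta/2$ for every $r$, then apply Theorem \ref{thm:mssc} with error level $\epsilon/b$ in place of $\beps$ to obtain an $r_0$ such that, for $r \geq r_0$,
\begin{displaymath}
\mathbb{P}\!\left(\frac{\|\hat{\mathbf{q}}^r(\cdot) - \Delta^{\boldsymbol{\lambda}}(W^{\boldsymbol{\lambda}}(\hat{\mathbf{q}}^r(\cdot)))\|}{\|\hat{\mathbf{q}}^r(\cdot)\|\vee 1} < \frac{\epsilon}{b}\right) \geq 1 - \eta/2.
\end{displaymath}
Intersecting the two high-probability events and using that $\|\hat{\mathbf{q}}^r(\cdot)\|\vee 1 \leq b$ on the first one yields $\|\hat{\mathbf{q}}^r(\cdot) - \Delta^{\boldsymbol{\lambda}}(W^{\boldsymbol{\lambda}}(\hat{\mathbf{q}}^r(\cdot)))\| < \epsilon$ with probability at least $1-\eta$, which is the claim.

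To establish tightness, I would apply the maximal inequality (Theorem \ref{thm:max}) to the $r$-th network. Because of the scaling \eqref{eq:scaling}, we have $\|\hat{\mathbf{q}}^r(\cdot)\| = r^{-1} Q^{r,*}_{\max}(r^2 T)$, so that
\begin{displaymath}
\mathbb{P}\bigl(\|\hat{\mathbf{q}}^r(\cdot)\| \geq b\bigr) = \mathbb{P}\bigl(Q^{r,*}_{\max}(r^2 T) \geq br\bigr) \leq \frac{K(\alpha,M)\, r^2 T}{(1-\rho^r)^{\alpha-1}\, (br)^{\alpha+1}},
\end{displaymath}
where $\rho^r = \rho(\boldsymbol{\lambda}^r)$. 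The key arithmetic point is that since $\boldsymbol{\lambda} \in \partial\boldsymbol{\Lambda}$ and $\boldsymbol{\lambda}^r = \boldsymbol{\lambda} - \boldsymbol{\Gamma}/r$, standard properties of the load function (convexity/Lipschitz continuity of $\rho$ on $\bar{\boldsymbol{\Lambda}}$) imply $1 - \rho^r \geq c/r$ for some constant $c > 0$ depending only on $\boldsymbol{\lambda}$ and $\boldsymbol{\Gamma}$. Substituting, the powers of $r$ cancel exactly:
\begin{displaymath}
\frac{r^2}{(c/r)^{\alpha-1} r^{\alpha+1}} = \frac{1}{c^{\alpha-1}},
\end{displaymath}
giving the uniform-in-$r$ bound $\mathbb{P}(\|\hat{\mathbf{q}}^r(\cdot)\| \geq b) \leq K(\alpha,M) T / (c^{\alpha-1} b^{\alpha+1})$, which tends to $0$ as $b \to \infty$.

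The main obstacle I anticipate is precisely this careful tracking of the dependence on $r$ through the load. The maximal inequality was designed so that the exponent of $(1-\rho)^{\alpha-1}$ in the denominator balances the exponents coming from the diffusive scaling $r^2 T$ over $(br)^{\alpha+1}$; verifying $1 - \rho^r = \Omega(1/r)$ uniformly is routine but needs the hypothesis $\boldsymbol{\lambda}^r \in \boldsymbol{\Lambda}$ together with $\boldsymbol{\lambda} \in \partial\boldsymbol{\Lambda}$ to pin down the constant $c$. Once this is secured, the rest of the argument is a clean deterministic union-bound rescaling from the multiplicative statement, and the conclusion of Theorem \ref{thm:ssc} follows immediately.
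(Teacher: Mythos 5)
Your proposal is correct and follows essentially the same route as the paper: you use Theorem \ref{thm:max} together with the estimate $1-\rho^r \gtrsim 1/r$ (which indeed follows from convexity of $\rho$, $\rho(\boldsymbol{\lambda})=1$, and $\boldsymbol{\lambda}^1 \in \bLambda$) to get the uniform tail bound $\mathbb{P}(\|\hat{\mathbf{q}}^r(\cdot)\|\geq b)\lesssim T/b^{\alpha+1}$, and then combine this tightness with Theorem \ref{thm:mssc} via a union bound. Your write-up is even a touch more careful than the paper's, which silently drops the $\vee\,1$ when dividing by $\|\hat{\mathbf{q}}^r(\cdot)\|$; the argument is the same.
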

\begin{proof}
First note that since $\boldsymbol{\lambda}^r = \boldsymbol{\lambda}-\boldsymbol{\Gamma}/r$, the corresponding loads satisfy $\rho_r \leq 1-C/r$, for some positive constant $C>0$. 
By Theorem \ref{thm:max}, for any $b>0$,
\begin{eqnarray*}
\mathbb{P}\left( \max_{\tau\in \{0,1,\ldots,r^2T\}}Q^r_{\max}(\tau)\geq b \right)& \leq & \frac{K(\alpha,M)r^2T}{(1-\rho)^{\alpha-1}b^{\alpha+1}}\\
&\leq & \frac{K(\alpha,M) r^{1+\alpha} T}{C^{\alpha-1} b^{\alpha+1}}.
\end{eqnarray*}
Then with $a=b/r$ and under the scaling in \eqref{eq:scaling},
\begin{displaymath}
\mathbb{P}\big( \| \hat{\mathbf{q}}^r(\cdot)\|\geq a \big)\leq \frac{K(\alpha,M)}{C^{\alpha-1}}\frac{T}{a^{\alpha+1}},
\end{displaymath}
for any $a>0$.\\\\
For notational convenience, we write $$D(r) = \|\hat{\mathbf{q}}^r(\cdot)-\Delta^{\boldsymbol{\lambda}}(W^{\boldsymbol{\lambda}}(\hat{\mathbf{q}}^r(\cdot)))\|.$$ Then, for any $a>1$,
\begin{eqnarray*}
\mathbb{P}\big ( D(r)\geq \beps \big )&\leq & \mathbb{P}\left ( \frac{D(r)}{\|\hat{\mathbf{q}}^r(\cdot)\|}> \frac{\beps}{a} \textrm{ or } \| \hat{\mathbf{q}}^r(\cdot)\| \geq a \right )\\
&\leq & \mathbb{P}\left( \frac{D(r)}{\|\hat{\mathbf{q}}^r(\cdot)\|}> \frac{\beps}{a}\right )+\mathbb{P}\big (\|\hat{\mathbf{q}}^r(\cdot)\| \geq a \big ).
\end{eqnarray*}
Note that by Theorem \ref{thm:mssc}, the first term on the RHS goes to $0$ as $r\rightarrow \infty$, for any $a>0$. The second term on the RHS can be made arbitrarily small by taking $a$ sufficiently large. Thus, $\mathbb{P}(D(r)\geq \beps)\rightarrow 0$ as $r\rightarrow \infty$. This concludes the proof.
\end{proof}

\section{Discussion}\label{sec:conclu}
The results in this paper can be viewed from two different  perspectives. On the one
hand, they provide much new information on  the qualitative behavior (e.g., finiteness of
expected backlog, bounds  on steady-state tail probabilities and finite-horizon maximum 
excursion probabilities, etc.) of
MW-$\alpha$ policies for switched network models. On the other hand,  at a technical
level, our results highlight the importance of choosing  a suitable Lyapunov function:
even if a network is shown to be stable  by using a particular Lyapunov function,
different choices may lead to  more powerful bounds.

The methods and results in this paper extend in two directions. First,  all of the
results, suitably restated, remain valid for multihop  networks under
backpressure-$\alpha$ policies. Second, the same is  true for flow-level models of the
type considered in \cite{KW04}. These  extensions will be reported elsewhere. 
\bibliographystyle{abbrv}
\bibliography{bib1}

\newpage

\appendix
\section{Tightness of the Upper Bound}
Here we compare the exponential upper bound in Theorem 
\ref{thm:sw2} with a lower bound that we shall obtain from the LDP results in
\cite{VL09}. To be able to evaluate a useful lower bound explicitly,
we consider the case of an input-queued (IQ) switch. As discussed in 
Section \ref{ssec:switch}, in an $m$-port IQ switch, there are $M=m^2$
queues. Let $\blambda = (\lambda_{ij})$ be the vector of arrival rates. 
Then, the load $\rho = \rho(\blambda)$ is 
$$ \rho = \max \left(\max_j\left(\sum_{i=1}^m\lambda_{ij}\right), ~ \max_i\left(\sum_{j=1}^m\lambda_{ij}\right)\right).$$
\paragraph{Upper bound for IQ switch} First, we specialize 
Theorem \ref{thm:sw2} to the case of an IQ switch. 
\begin{proposition}\label{prop:iq}
Consider an $m$-port IQ switch operating under the MW-$\alpha$ policy,
with arrival vector $\blambda$ so that $\rho = \rho(\blambda) < 1$. 
Let $\bpi$ be the stationary distribution of the queue-size. Then, for
some large enough constants $B>0$ and $B'>0$, and for every $\ell \in \Zp$:
\begin{itemize}
\item[(i)] if $\alpha \geq 1$, then
\begin{displaymath}
\mathbb{P}_{\bpi}\left(\|\mathbf{Q}\|_{\alpha+1} > B+ 2 m^{\frac{2}{\alpha+1}}\ell\right)
\leq \left(\frac{1}{1+\frac{1-\rho}{2m}}\right)^{\ell+1},
\end{displaymath}
\item[(ii)] and if $\alpha \in (0,1)$, then
\begin{displaymath}
\mathbb{P}_{\bpi}\left(\|\mathbf{Q}\|_{\alpha+1} > B'+ 10 m^{\frac{2}{\alpha+1}}\ell\right)
\leq \left(\frac{1}{1+\frac{1-\rho}{10m}}\right)^{\ell+1}.
\end{displaymath}
\end{itemize}
\end{proposition}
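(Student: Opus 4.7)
\medskip
\noindent The plan is to specialize Theorem~\ref{thm:sw2} to $M=m^{2}$ and then sharpen the two constants $\bar\nu$ and $\gamma$ appearing there by exploiting the fact that IQ schedules are permutation matrices. The step sizes $2m^{2/(\alpha+1)}\ell$ and $10m^{2/(\alpha+1)}\ell$ are immediate from $M^{1/(\alpha+1)}=m^{2/(\alpha+1)}$, so only the two ratios $\bar\nu/(\bar\nu+\gamma)$ and $5\bar\nu/(5\bar\nu+\gamma)$ need to be bounded.

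For $\bar\nu$, I would observe that because each $a_{ij}\in\{0,1\}$, $a_{ij}^{\alpha+1}=a_{ij}$, so $\|\mathbf{a}\|_{\alpha+1}=(\sum_{ij}a_{ij})^{1/(\alpha+1)}$. Applying Jensen's inequality to the concave map $x\mapsto x^{1/(\alpha+1)}$, together with the IQ load constraint $\sum_i\lambda_{ij}\leq\rho$ for every $j$ (so $\sum_{ij}\lambda_{ij}\leq m\rho$), gives
\begin{equation*}
\bar\nu\leq(m\rho)^{1/(\alpha+1)}\leq m^{1/(\alpha+1)}.
\end{equation*}

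For $\gamma$, I would argue as follows. The generic proof of Theorem~\ref{thm:sw-drift} uses only $w_\alpha(\mathbf{Q})\geq Q_{\max}^\alpha$, producing the ratio $w_\alpha(\mathbf{Q})/\|\mathbf{Q}\|_{\alpha+1}^\alpha \geq M^{-\alpha/(\alpha+1)}=m^{-2\alpha/(\alpha+1)}$, and this is strictly too weak for the claim. For IQ switches I would add a second lower bound by averaging: summing the schedule weight $\sum_i Q_{i,\sigma(i)}^\alpha$ uniformly over all $m!$ permutations $\sigma$ and using that each $(i,j)$ entry is picked by exactly $(m-1)!$ permutations gives $w_\alpha(\mathbf{Q})\geq\|\mathbf{Q}\|_\alpha^\alpha/m$. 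Combined with the pointwise inequality $\|\mathbf{Q}\|_{\alpha+1}^{\alpha+1}=\sum_{ij}Q_{ij}\cdot Q_{ij}^\alpha\leq Q_{\max}\|\mathbf{Q}\|_\alpha^\alpha$, a two-case argument split on whether $mQ_{\max}^\alpha\geq\|\mathbf{Q}\|_\alpha^\alpha$ (in which case use $w_\alpha\geq Q_{\max}^\alpha$) or not (in which case use $w_\alpha\geq\|\mathbf{Q}\|_\alpha^\alpha/m$) yields, in both cases,
\begin{equation*}
\frac{w_\alpha(\mathbf{Q})}{\|\mathbf{Q}\|_{\alpha+1}^\alpha}\geq m^{-\alpha/(\alpha+1)},
\end{equation*}
an improvement by a factor of $m^{\alpha/(\alpha+1)}$ over the generic bound. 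Threading this estimate through the proof of Theorem~\ref{thm:sw-drift}---and, for the $\alpha\in(0,1)$ case, absorbing the $O(M^{1/(\alpha+1)})$ gap between $L_\alpha$ and $\|\cdot\|_{\alpha+1}$ provided by Lemma~\ref{lm:prpt}(iii') into the constant $B'$---produces the sharpened drift constant $\gamma_{\star}=(1-\rho)/(2m^{\alpha/(\alpha+1)})$.

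A direct calculation then gives $\gamma_{\star}/\bar\nu\geq(1-\rho)/(2m)$. Plugging into Theorem~\ref{thm:sw2}(a) yields $\bar\nu/(\bar\nu+\gamma_{\star})\leq 1/(1+(1-\rho)/(2m))$, which proves part~(i); similarly, Theorem~\ref{thm:sw2}(b) gives $5\bar\nu/(5\bar\nu+\gamma_{\star})\leq 1/(1+(1-\rho)/(10m))$, which proves part~(ii). The main obstacle is establishing the sharpened drift ratio above: the permutation-averaging lower bound on $w_\alpha$ is the sole place where the IQ structure enters the argument, and checking that the two-case reduction survives the replacement of $\|\cdot\|_{\alpha+1}$ by the smoothed $L_\alpha$ in the $\alpha\in(0,1)$ regime requires some extra bookkeeping.
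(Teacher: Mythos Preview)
Your proposal is correct and follows the same overall structure as the paper's proof: set $M=m^2$, bound $\bar\nu\leq m^{1/(\alpha+1)}$ via Jensen, and sharpen the drift constant to $\gamma_\star=(1-\rho)/(2m^{\alpha/(\alpha+1)})$ by proving the IQ-specific ratio bound $w_\alpha(\mathbf{Q})/\|\mathbf{Q}\|_{\alpha+1}^\alpha\geq m^{-\alpha/(\alpha+1)}$, then plug into Theorem~\ref{thm:sw2}.

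The genuine difference is in how you establish that ratio bound. The paper's argument (Lemma~\ref{lem:iq-exp}) observes that the $m^2$ queues of an $m$-port IQ switch can be covered by $m$ disjoint matchings $\bsig^0,\ldots,\bsig^{m-1}$ (a 1-factorization of $K_{m,m}$), applies the norm inequality $\|\cdot\|_\alpha\geq\|\cdot\|_{\alpha+1}$ to each matching's $m$-vector to get $w_\alpha(\mathbf{Q})^{(1+\alpha)/\alpha}\geq\sum_i Q_{i\sigma^k(i)}^{\alpha+1}$, and sums over $k$ to obtain $m\,w_\alpha(\mathbf{Q})^{(1+\alpha)/\alpha}\geq\|\mathbf{Q}\|_{\alpha+1}^{\alpha+1}$ directly. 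Your route---averaging over all $m!$ permutations to get $w_\alpha\geq\|\mathbf{Q}\|_\alpha^\alpha/m$, then the pointwise bound $\|\mathbf{Q}\|_{\alpha+1}^{\alpha+1}\leq Q_{\max}\|\mathbf{Q}\|_\alpha^\alpha$, then the two-case split---reaches the same conclusion and is arguably more elementary in that it does not invoke the 1-factorization fact, at the cost of the case analysis. The paper's covering argument is a one-liner once the decomposition is quoted; yours is self-contained but longer. Either is perfectly adequate here.
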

Before providing the proof, we note that Proposition \ref{prop:iq} suggests 
that for all $\alpha > 0$, as $\rho \to 1$, and using $\log (1/(1+r)) \approx -r$ for small $r>0$, 
we have
\begin{equation}\label{eq:iq-ub}
\limsup_{R\rightarrow\infty}\frac{1}{R}\log \mathbb{P}_{\bpi}(||\mathbf{Q}(\tau)|| > R) \leq -\frac{1-\rho}{100}~m^{-1-\frac{2}{\alpha+1}}.
\end{equation}

\vspace{.05in}

\begin{proof}[of Proposition \ref{prop:iq}] 
We need to identify the maximal change $\nu_{\max}$,
the drift constant $\gamma$, and $\bar{\nu}$. Since $M=m^2$, 
$$ \|\bQ(\tau+1) - \bQ(\tau) \|_{1+\alpha} \leq m^{\frac{2}{1+\alpha}}$$
For $\alpha \geq 1$, $m^{\frac{2}{\alpha+1}}$ can serve as $\nu_{\max}$ for $\|\bQ(\cdot)\|_{\alpha+1}$.
For $\alpha \in (0,1)$, $5m^{\frac{2}{\alpha+1}}$ can serve as $\nu_{\max}$ for $L_{\alpha}(\bQ(\cdot))$.
The drift constant $\gamma = \frac{1-\rho}{2m^{\alpha/(1+\alpha)}}$ 
is obtained in Lemma \ref{lem:iq-exp} stated below. Finally, for $\bar{\nu}$, we have
$\bar{\nu} = \E\left[\|\mathbf{a}(1)\|_{1+\alpha}\right] \leq (\rho m)^{\frac{1}{1+\alpha}}\leq m^{\frac{1}{1+\alpha}}$. 
Thus, 
$$ \frac{\gamma}{\bar{\nu}} \geq \frac{1-\rho}{2 m}.$$
Putting everything together and applying Theorem \ref{thm:sw2}, 
we obtain for $\alpha \geq 1$ and some $B>0$,
\begin{eqnarray*}
\mathbb{P}_{\bpi}\left(\|\mathbf{Q}\|_{\alpha+1} > B +  2 m^{\frac{2}{\alpha+1}}\ell\right)
& \leq & \left(\frac{1}{1+\frac{1-\rho}{2m}}\right)^{\ell+1},
\end{eqnarray*}
and for $\alpha\in (0,1)$ and some $B'>0$,
\begin{eqnarray*}
\mathbb{P}_{\bpi}\left(\|\mathbf{Q}\|_{\alpha+1} > B' +  10 m^{\frac{2}{\alpha+1}}\ell\right)
& \leq & \left(\frac{1}{1+\frac{1-\rho}{10m}}\right)^{\ell+1},
\end{eqnarray*}
This completes the proof of Proposition \ref{prop:iq}.
\end{proof}
\begin{lemma}\label{lem:iq-exp}
Under the same assumptions of Proposition \ref{prop:iq}, and for all $\alpha > 0$,
there exists a constant $B>0$ such that
\begin{equation}
\mathbb{E}\big[L_{\alpha}(\mathbf{Q}(\tau+1))-L_{\alpha}(\mathbf{Q}(\tau)) ~\big|~\mathbf{Q}(\tau)\big] \leq - \frac{1-\rho}{2m^{\alpha/(1+\alpha)}}, \label{eq:iq-drift}
\end{equation}
whenever $L_{\alpha}(\mathbf{Q}(\tau))>B$.
\end{lemma}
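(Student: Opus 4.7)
The plan is to follow the proof of Theorem \ref{thm:sw-drift}, keeping the second-order mean value expansion and the bounds on its remainder terms unchanged, and only replacing the general inequality $w_\alpha(\mathbf{Q})/\|\mathbf{Q}\|_{\alpha+1}^\alpha \geq M^{-\alpha/(\alpha+1)} = m^{-2\alpha/(\alpha+1)}$ used there by a sharper bound that exploits the matching structure of the IQ schedule set $\mathcal{S}$. Concretely, I will show the combinatorial estimate
\begin{equation}\label{eq:iq-key}
w_\alpha(\mathbf{Q}) \;\geq\; m^{-\alpha/(\alpha+1)}\,\|\mathbf{Q}\|_{\alpha+1}^\alpha,
\end{equation}
which is a $\sqrt{m}$-fold improvement over the generic bound and plugs directly into the dominant drift term
$-(1-\rho)\,w_\alpha(\mathbf{Q}(\tau))/L_\alpha^\alpha(\mathbf{Q}(\tau))$ that was isolated in Section~\ref{sec:drift}.

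To establish \eqref{eq:iq-key}, I would use two elementary lower bounds on $w_\alpha(\mathbf{Q})$ for an IQ switch. First, every entry $(i,j)$ can be extended to a full perfect matching, so
$w_\alpha(\mathbf{Q}) \geq Q_{\max}^\alpha$, where $Q_{\max} = \max_{i,j} Q_{ij}$. Second, averaging $\sum_i Q_{i,\sigma(i)}^\alpha$ over a uniformly random permutation $\sigma$ on $[m]$ gives expectation $\frac{1}{m}\sum_{i,j} Q_{ij}^\alpha = \frac{1}{m}\|\mathbf{Q}\|_\alpha^\alpha$, and hence
$w_\alpha(\mathbf{Q}) \geq \frac{1}{m}\|\mathbf{Q}\|_\alpha^\alpha$. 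Combining these via the identity $\|\mathbf{Q}\|_{\alpha+1}^{\alpha+1} = \sum_{ij} Q_{ij}\cdot Q_{ij}^\alpha \leq Q_{\max}\,\|\mathbf{Q}\|_\alpha^\alpha$ yields
$$\|\mathbf{Q}\|_{\alpha+1}^{\alpha+1} \;\leq\; w_\alpha(\mathbf{Q})^{1/\alpha} \cdot m\, w_\alpha(\mathbf{Q}) \;=\; m\, w_\alpha(\mathbf{Q})^{(\alpha+1)/\alpha},$$
and raising to the $\alpha/(\alpha+1)$ power produces exactly \eqref{eq:iq-key}.

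To finish, I would insert \eqref{eq:iq-key} into the drift argument of Section~\ref{sec:drift}. For $\alpha \geq 1$ the Lyapunov function is $L_\alpha(\mathbf{Q}) = \|\mathbf{Q}\|_{\alpha+1}$, and the estimate from the proof of Theorem \ref{thm:sw-drift} for the conditional expectation of the first-order term in the mean value expansion becomes
$$\mathbb{E}\!\left[\frac{\sum_i \delta_i(\tau) Q_i^\alpha(\tau)}{\|\mathbf{Q}(\tau)\|_{\alpha+1}^\alpha}\;\Big|\;\mathbf{Q}(\tau)\right] \;\leq\; -(1-\rho)\,\frac{w_\alpha(\mathbf{Q}(\tau))}{\|\mathbf{Q}(\tau)\|_{\alpha+1}^\alpha}\;\leq\;-(1-\rho)\,m^{-\alpha/(\alpha+1)}.$$
The remaining second-order term was shown in Section~\ref{sec:drift} to vanish as $\|\mathbf{Q}(\tau)\|$ grows, so choosing $B$ so large that its absolute value is at most $\frac{1}{2}(1-\rho)\,m^{-\alpha/(\alpha+1)}$ produces the claimed bound. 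For $\alpha \in (0,1)$, the same recipe goes through using $L_\alpha$ and the properties of $f_\alpha$, $F_\alpha$ recorded in Lemma \ref{lm:prpt}; in particular, Lemma \ref{lm:prpt}(iii') gives $L_\alpha(\mathbf{Q})$ and $\|\mathbf{Q}\|_{\alpha+1}$ differing only by an $O(M^{1/(\alpha+1)})$ additive constant, which is harmless for large $\|\mathbf{Q}\|$, so again we can bound the first-order term by $-(1-\rho)m^{-\alpha/(\alpha+1)}$ up to a factor arbitrarily close to $1$ and absorb the rest.

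The main obstacle I anticipate is not \eqref{eq:iq-key}, which is short, but bookkeeping: verifying that the same threshold $B$ can be chosen uniformly so that the explicit constant in front of $(1-\rho)m^{-\alpha/(\alpha+1)}$ is exactly $\tfrac{1}{2}$, and carefully handling the $L_\alpha$-versus-$\|\cdot\|_{\alpha+1}$ discrepancy when $\alpha < 1$ so that the $Q_{\max}^\alpha/(Q_{\max}+2)^\alpha$ factor appearing in the proof of Theorem~\ref{thm:sw-drift} does not swamp the improved prefactor.
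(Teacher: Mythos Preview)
Your proposal is correct and follows the same overall architecture as the paper: isolate the improved combinatorial bound $w_\alpha(\mathbf{Q})\geq m^{-\alpha/(\alpha+1)}\|\mathbf{Q}\|_{\alpha+1}^\alpha$ (equivalently $\|\mathbf{Q}\|_{\alpha+1}^{\alpha+1}\leq m\,w_\alpha(\mathbf{Q})^{(\alpha+1)/\alpha}$), then feed it into the drift computation of Theorem~\ref{thm:sw-drift} in place of the crude estimate $w_\alpha(\mathbf{Q})\geq Q_{\max}^\alpha\geq M^{-\alpha/(\alpha+1)}\|\mathbf{Q}\|_{\alpha+1}^\alpha$.

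Where you differ is in the proof of the key inequality itself. The paper exhibits $m$ perfect matchings $\bsig^0,\dots,\bsig^{m-1}$ that partition the $m^2$ queues, applies the norm inequality $\|\cdot\|_\alpha\geq\|\cdot\|_{\alpha+1}$ within each matching to get $w_\alpha(\mathbf{Q})^{(\alpha+1)/\alpha}\geq\sum_i Q_{i\sigma^k(i)}^{\alpha+1}$, and sums over $k$. Your route instead combines two separate lower bounds on $w_\alpha(\mathbf{Q})$---one by $Q_{\max}^\alpha$ and one by the average $\tfrac{1}{m}\sum_{ij}Q_{ij}^\alpha$ over random permutations---with the elementary estimate $\sum_{ij}Q_{ij}^{\alpha+1}\leq Q_{\max}\sum_{ij}Q_{ij}^\alpha$. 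Both arguments are short; the paper's is a single step once the cover is in hand, while yours avoids invoking the $\ell_\alpha$-vs-$\ell_{\alpha+1}$ norm comparison and makes explicit that the gain comes from the doubly-stochastic structure (each cell hit by a $1/m$ fraction of permutations). Either way the remainder of the argument---absorbing the second-order terms and the $L_\alpha$ versus $\|\cdot\|_{\alpha+1}$ discrepancy for $\alpha<1$---is exactly the bookkeeping already done in Section~\ref{sec:drift}, so your anticipated obstacle is not a real one.
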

\begin{proof}
For an IQ switch, and with a slight abuse of notation, we write $\bQ = [Q_{ij}]$. Recall that a  
schedule can serve $m$ queues simultaneously subject to the matching constraints. 
Because of this structural property, we claim that for any 
$\bQ = [Q_{ij}] \in \Zp^{m\times m}$, 
\begin{eqnarray}\label{eq:ratio-bound}
\sum_{i,j = 1}^m Q_{ij}^{\alpha+1}& \leq & m w_{\alpha}(\mathbf{Q})^{\frac{\alpha+1}{\alpha}}.
\end{eqnarray}
Equivalently, 
\begin{eqnarray}\label{eq:ratio-bound1}
w_{\alpha}(\mathbf{Q}(\tau))/||\mathbf{Q}(\tau)||_{\alpha+1}^{\alpha} & \geq & \frac{1}{m^{\alpha/(1+\alpha)}}.
\end{eqnarray}
By inspecting the proof of Theorem \ref{thm:sw-drift} for general switched
networks, we realize that the desired bound \eqref{eq:iq-drift} follows from \eqref{eq:ratio-bound1}. 
Therefore, to establish Lemma \ref{lem:iq-exp},
it is sufficient to verify \eqref{eq:ratio-bound}. 

Suppose that $\bsig^0 = [\sigma^0_{ij}]$ is a schedule with maximum $\alpha$-weight for a given $\bQ$. 
With some abuse of notation, let us denote $\sigma^0(i) = j$ if $\sigma^0_{ij} = 1$. Then,
$w_\alpha(\bQ) = \sum_{i=1}^m Q^\alpha_{i\sigma^0(i)}$. It can be shown that there exist 
$m-1$ other schedules (or matchings) $\bsig^1,\dots, \bsig^{m-1}$ so that all $m^2$ queues are served by the $m$ schedules $\bsig^0, \bsig^1,\dots, \bsig^{m-1}$. Since $w_\alpha(\bQ)$
is the maximum $\alpha$-weight, it follows that for any $k (0\leq k\leq m-1)$, 
\begin{equation}
w_\alpha(\bQ)\geq \sum_{i=1}^m Q^\alpha_{i\sigma^k(i)} \geq \left(\sum_{i=1}^m Q^{\alpha+1}_{i\sigma^k(i)}\right)^{\frac{\alpha}{1+\alpha}}, \label{x1}
\end{equation}
where the last inequality follows from the standard norm inequality 
$\|\bx\|_{\alpha} \geq \|\bx\|_{1+\alpha}$. Raising to the power $(1+\alpha)/\alpha$ on both sides of 
\eqref{x1}, then summing over all $k$, and using the property that the schedules $\bsig^{k}$
 cover all of the $m^2$ queues, we obtain
\begin{eqnarray}
m w_\alpha(\bQ)^{\frac{1+\alpha}{\alpha}} & \geq & \sum_{i,j=1}^m Q_{ij}^{1+\alpha}. 
\end{eqnarray}
This completes the verification of \eqref{eq:ratio-bound} and the proof of Lemma \ref{lem:iq-exp}.
\end{proof}

\paragraph{Lower Bound for IQ Switch} We now derive an exponential lower bound using 
the results of \cite{VL09} and assuming uniform arrival rates, i.e., $\blambda = [\rho/m]$ with
$\rho < 1$. In \cite{VL09}, the authors establish an LDP for the switched network model. They
show that under the MW-$\alpha$ policy, with arrival rates $\blambda$ satisfying $\rho(\blambda) < 1$, 
there exists $\theta_\alpha$ so that 
$$ \lim_{R\to\infty} \frac{1}{R}\log \mathbb{P}_{\bpi}\left(\|\mathbf{Q}(\tau)\|_{\alpha+1}>R\right) = -\theta_\alpha.$$ 
The tail exponent $\theta_\alpha$ is characterized as the solution of a variational problem:
\begin{equation}\label{eq:theta}
 \theta_{\alpha} = \inf \{ H(\tilde{\boldsymbol{\lambda}}|| \boldsymbol{\lambda})/r(\tilde{\boldsymbol{\lambda}}) :  \tilde{\boldsymbol{\lambda}} \in [0,1]^{m\times m}, \rho(\tilde{\boldsymbol{\lambda}})>1\},
 \end{equation}
where 
\begin{displaymath}
H(\tilde{\boldsymbol{\lambda}}|| \boldsymbol{\lambda}) = \sum_{i,j=1}^{m}\tilde{\lambda}_{ij}\log \left(\frac{\tilde{\lambda}_{ij}}{\lambda_{ij}}\right) + (1-\tilde{\lambda}_{ij})\log \left(\frac{1-\tilde{\lambda}_{ij}}{1-\lambda_{ij}}\right).
\end{displaymath}
Furthermore, $r(\tilde{\blambda})$ is the solution to the optimization problem:
\begin{center}
$\begin{array}{ll}
\textrm{minimize} & \|\mathbf{x}\|_{1+\alpha} \\
\textrm{subject to}     & r\in \mathbb{R}_+^M \\
                  & \mathbf{x} \geq \tilde{\boldsymbol{\lambda}}-\boldsymbol{\sigma}' \ \ \textrm{ for some } \boldsymbol{\sigma}' \in \Sigma,
\end{array}$
\end{center}
where $\Sigma$ is the convex hull of the set of feasible schedules $\mathcal{S}$. Clearly, 
an explicit formula for $\theta_\alpha$ in terms of $\rho(\blambda)$ and the switch
size $m$ seems impossible, even
 when $\blambda$ is uniform, i.e., $\blambda = [\rho/m]$. However, as we show next, it is possible
to obtain a useful lower bound. 

In order to obtain a lower bound on the large deviation probability, or equivalently, an upper bound
on $\theta_\alpha$, it is sufficient to restrict to symmetric overload arrival rates $\tilde{\blambda}$ 
in the optimization problem \eqref{eq:theta}. Thus, let us assume that 
$\tilde{\lambda}_{ij} = (1+\beps)/m$ for all $i,j$, where $\beps>0$. Then, it can be checked that
\begin{equation*}
r(\tilde{\boldsymbol{\lambda}}) = \left(m^2\left(\frac{1+\beps-1}{m}\right)^{\alpha+1}\right)^{\frac{1}{\alpha+1}} 
 = \beps m^{\frac{1-\alpha}{1+\alpha}}.
\end{equation*}
Therefore, the optimization in \eqref{eq:theta} reduces to minimizing
\begin{equation}
\frac{m^2}{\beps m^{\frac{1-\alpha}{1+\alpha}}} \left(\frac{1+\beps}{m}\log \frac{1+\beps}{\rho}+(1-\frac{1+\beps}{m})\log \frac{1-\frac{1+\beps}{m}}{1-\frac{\rho}{m}}\right),
\label{eq:ldp}
\end{equation}
over all $\beps>0$. Again, a closed form solution seems impossible, but we can look for
 an approximation. We are interested in comparing the bounds for large $m$ and $\rho$ near
$1$, and we will develop a good approximation in that regime.
We expect the optimizing value of $\beps$ in \eqref{eq:ldp} to be small. Therefore, 
we shall use the Taylor series expansion for $\log$ up to the first two terms, i.e. $\log (1+x)\approx x-x^2/2$. 
With these approximations, minimizing \eqref{eq:ldp} boils down to solving a quadratic equation. This
leads to an optimal solution $\beps^* \approx 1-\rho$. Indeed, if $\rho$ is near $1$, $\beps^*$ is
quite small, thus justifying our approximations.  Using $\beps^* \approx 1-\rho$, we obtain 
\begin{eqnarray}\label{eq:bound-theta}
\theta_\alpha & \leq & 2m^{2\alpha/(1+\alpha)}(1-\rho).
\end{eqnarray}
That is, for $\rho$ near $1$ and for $m$ large enough, we have 
\begin{align}\label{eq:iq-lb}
& \lim\inf_{R\to\infty} \frac{1}{R}\log \mathbb{P}_{\bpi}\left(\|\mathbf{Q}(\tau)\|_{\alpha+1}>R\right) \nonumber \\
& \geq - 2m^{2\alpha/(1+\alpha)}(1-\rho).
\end{align}
\paragraph{Comparison} Putting the bounds \eqref{eq:iq-ub} and \eqref{eq:iq-lb} together, we
obtain that 
\begin{align*}
-2m^{2\alpha/(1+\alpha)}(1-\rho)&\leq \liminf_{R\rightarrow\infty}\frac{1}{R}\log \mathbb{P}_{\bpi}\left(\|\mathbf{Q}(\tau)\|_{1+\alpha} >R\right)\\
&\leq \limsup_{R\rightarrow\infty}\frac{1}{R}\log \mathbb{P}_{\bpi}\left(\|\mathbf{Q}(\tau)\|_{1+\alpha}>R\right) \\
& \leq 
-\frac{1-\rho}{100} ~m^{-1-\frac{2}{\alpha+1}}.
\end{align*}
For any $\alpha$, ignoring small constants, the ratio between the two tail exponents is precisely $m^3$. 
From this, we see that the dependence of our upper bound exponent on the load $\rho$ is tight, when the system is heavily loaded. However, the dependence on the number of queues is not. 
\end{document}